\newtheorem{theorem}{Theorem}
\newtheorem{corollary}{Corollary}
\newtheorem{remark}{Remark}
\newenvironment{proof}{{\indent \it Proof:}}{\hfill $\blacksquare$\par}
\begin{document}

\title{Environment Sensing Considering the Occlusion Effect: A Multi-View Approach}

\author{Xin~Tong,~\IEEEmembership{Student Member,~IEEE, }
        Zhaoyang~Zhang,~\IEEEmembership{Senior Member,~IEEE, }
        Yihan~Zhang,~\\
        Zhaohui~Yang,~\IEEEmembership{Member,~IEEE, }
        Chongwen~Huang,~\IEEEmembership{Member,~IEEE, }
        Kai-Kit~Wong,~\IEEEmembership{Fellow,~IEEE, }
        and~M\'{e}rouane~Debbah,~\IEEEmembership{Fellow,~IEEE}% <-this % stops a space
\thanks{X.~Tong, Z.~Zhang (Corresponding  Author), Y.~Zhang, Z.~Yang and C.~Huang (e-mails: \{tongx, ning\_ming, zhangyihan, yang\_zhaohui, chongwenhuang\}@zju.edu.cn) are with the College of Information Science and Electronic Engineering, Zhejiang University, Hangzhou 310027, China, and the Key Laboratory of Information Processing, Communication and Networking of Zhejiang Province (IPCAN), Hangzhou 310027, China, the Key Laboratory of Collaborative Sensing and Autonomous Unmanned Systems of Zhejiang Province, Hangzhou 310015, China, and the International Joint Innovation Center, Zhejiang University, Haining 314400, China. Z.~Yang is also with Zhejiang Lab, Hangzhou 311121, China.}

\thanks{K.~Wong  (e-mails: \{kai-kit.wong\}@ucl.ac.uk) is with the Department of Electronic and Electrical Engineering, University College London, UK.}

\thanks{M. Debbah is with the Technology Innovation Institute, 9639 Masdar City, Abu Dhabi, United Arab Emirates (email: merouane.debbah@tii.ae) and also with CentraleSupelec, University Paris-Saclay, 91192 Gif-sur-Yvette, France.}

%\thanks{This work is supported by the Key Laboratory of Collaborative sensing and autonomous unmanned systems of Zhejiang Province.}

%\thanks{This work was supported in part by National Natural Science Foundation of China under Grant U20A20158 and 61725104, and National Key R\&D Program of China under Grant 2020YFB1807101 and 2018YFB1801104.}
}
\maketitle

\begin{abstract}
  %Integrated sensing and communication (ISAC) is emerging as one of the potential feature technologies for the next generation wireless communication system, with the purpose of achieving environment sensing within a wireless communication framework.

  In this paper, we consider the problem of sensing the environment within a wireless cellular framework. Specifically, multiple user equipments (UEs) send sounding signals to one or multiple base stations (BSs) and then a centralized processor retrieves the environmental information from all the channel information obtained at the BS(s). Taking into account the occlusion effect that is common in the wireless context, we make full use of the different views of the environment from different users and/or BS(s), and propose an effective sensing algorithm called GAMP-MVSVR (generalized-approximate-message-passing-based multi-view sparse vector reconstruction). In the proposed algorithm, a multi-layer factor graph is constructed to iteratively estimate the scattering coefficients of the cloud points and their occlusion relationship. In each iteration, the occlusion relationship between the cloud points of the sparse environment is recalculated according to a simple occlusion detection rule, and in turn, used to estimate the scattering coefficients of the cloud points. Our proposed algorithm can achieve improved sensing performance with multi-BS collaboration in addition to the multi-views from the UEs. The simulation results verify its convergence and effectiveness.
  
\end{abstract}

\begin{IEEEkeywords}
  Integrated sensing and communication (ISAC), environment sensing, occlusion effect, multi-view sensing.
\end{IEEEkeywords}

\IEEEpeerreviewmaketitle

\section{Introduction}\label{js}

\subsection{Motivation}
\IEEEPARstart{T}{he} emergence of innovative wireless communication technologies, such as ultra-massive multiple input multiple output (MIMO) technology, intelligent reflective surface (IRS), and wireless artificial intelligence (AI) \cite{LiuIntegrated, Saad, Rappaport}, etc., provides more possibilities for the development of future wireless communication systems design.
In the foreseeable future wireless communication application scenarios, autonomous driving \cite{Kong}, intelligent robot localization \cite{Han}, and unmanned aerial vehicle (UAV) control \cite{Menouar} require not only wireless broadband connections, but also accurate environmental information, including the location, shape, and electromagnetic (EM) characteristics, etc. of objects in the environment. 
Therefore, integrated sensing and communication (ISAC), as a research hotspot for the next generation wireless communication system, aims to use the wireless communication equipment and infrastructure to achieve environment sensing. 

Such kind of environment sensing has long been accomplished by traditional radar technology and its later evolution of joint radar and communication \cite{Liu}.
The latter usually aims to achieve the sharing of software and hardware equipment and time-frequency resources between radar equipment and communication equipment. Compared with traditional radar or early joint radar and communication systems, efficient use of communication signals in the existing communication systems for environment sensing, and efficient exploitation of the environment sensing results to enhance communication, are two major design goals of the ISAC systems. For the former goal, one straightforward realization is to fully exploit the environmental information embedded in the received communication data, due to the simple fact that the distribution of environmental scatterers dramatically affects the wireless multipath channels. For the latter, direct channel prediction (reconstruction) can be conducted based on the sensed distribution and characteristics of scatterers within the environment thus to enhance the performance of communication \cite{Jiaoicc,Jiaotwc,xintong}. In this paper, we mainly focus on the first part and study how to perform efficient and accurate environment sensing based on the uplink data received from multiple users, in a way compatible with existing communication systems thus to achieve a smooth integration of communication and sensing.

One major challenge to do this is the large number of unknown variables embedded in the generally complicated environment, including the unknown location and EM characteristics of the environmental scatterers themselves, and the very common occlusion effect within and between scatterers, which blocks the propagation of EM waves and makes portions of some scatterers or even all of them invisible to the receivers. In such a case, it is hard for a single user or base station (BS) to fully sense the entire environment and the different views of the environment from different users and/or BS(s) should be exploited. However, a challenge still remains since in each separate sensing (transmitting and receiving), we generally cannot know which scatterers or which part of them participates in the final generation of the received signal in an environment yet to know, and it differs from user to user and BS to BS. This motivates us to develop a practical and efficient multi-view sensing algorithm to deal with the above problem.

\subsection{Related Works}\label{rw}
So far, the design of ISAC system has raised extensive research effort. In \cite{Tan}, the author comprehensively discussed the novel applications, key performance requirements, challenges, and future research directions of ISAC system designs. In \cite{AndrewEnabling}, the author compared the difference between the radar sensing scheme and the ISAC scheme. The author showcased that a common signal wave should be jointly designed to realize the integration of sensing and communication, and the advantages and disadvantages of various design methodologies are also discussed.

Rapid progresses have been witnessed in the practical ISAC system and algorithm design recently. \cite{Wild} designed an ISAC system based on the wireless cellular networks. The author jointly designed signal waveforms suitable for sensing and communication based on 5G NR (New Radio) and its frame structure. Also under the 5G framework, \cite{Barneto} proposed a multi-beam based sensing and communication scheme. In addition to the beams used for communication, some other ones are used to sense the environment around the BS. Based on the 4G and 5G frameworks, \cite{Barneto2} studied the processing principles, implementation challenges, and performance of orthogonal frequency division multiplexing (OFDM)-based radars, and achieved excellent sensing performance by using high channel bandwidth and configurable sub-carriers. In a Wi-Fi communication context, based on the Doppler frequency shift embedded in the channel state information (CSI), \cite{Piechocki} achieved the sensing of human body posture. In the regime of internet of vehicles, \cite{Feng} designed a fast beam alignment and tracking algorithm based on a hardware test platform to achieve vehicle tracking while performing millimeter wave communication with low latency and high data rate. As the research interest on ISAC keeps on increasing, more and more potential technologies are also introduced into this field, such as channel modeling \cite{Almers}, joint beam optimization \cite{Chen} and machine learning \cite{Papageorgiou}, etc. 

As for the environment sensing system design, some works have made full use of the sparsity of environmental information and achieved better and lower-complexity solutions based on the compressed sensing (CS) \cite{Donoho, Candes} framework. In the field of microwave computational imaging, the utilization of the intrinsic sparsity of scatterers within the environment is key to their effective detection. This type of algorithm usually treats the scatterer as a sparsely distributed point cloud, thereby convert the imaging problem into a CS reconstruction problem, which is solved by orthogonal matching pursuit (OMP) \cite{Cai}, approximate message passing (AMP) \cite{Donoho2}, generalized approximate message passing (GAMP) \cite{Rangan} and other widely used methods. 

Recently, in \cite{yaojj} and \cite{taoy}, the authors innovatively used BS to sense actively, and based on the block sparse Bayesian learning (BSBL) algorithm, proposed an IRS-assisted microwave computing imaging method. Although the above-mentioned methods effectively exploit the sparsity of the environment but ignore the occlusion effect caused by the scatterers. They ideally believe that EM signals can be freely transmitted to any location in the environment, and there still lacks sufficient consideration of the occlusion effect in the ISAC system design.

\subsection{Main Ideas and Contributions}
In this paper, based on the existing wireless communication framework, we design a multi-view environment sensing scheme taking into full consideration the occlusion effect in an outdoor scenario.

Different from the above-mentioned application scenarios, we have achieved ISAC by noninvasively making use of the multi-user uplink communication data.  At the same time, we have designed an effective sensing algorithm called GAMP-MVSVR which refers to the generalized-approximate-message-passing-based multi-view sparse vector reconstruction.  
In the uplink communication scenario, we consider an environment sensing scenario with single or multiple BSs, where users and BSs perform multi-view and collaborative sensing, and a centralized processor retrieving the environmental information from all the channel information obtained at the BSs. As a special case of the multi-BS scheme, a single BS itself can also complete all the sensing tasks, and we compare the performance differences between the two schemes. 

Our design is depicted as follows. 
In one or more time slots, the BS estimates the multi-user uplink channel after detecting user pilot symbols (sounding signals). We propose an occlusion detection model based on the geometric location of the cloud points and convert the environment sensing problem into a CS reconstruction problem with the occlusion effect. 
Then we propose a probabilistic reasoning model based on the factor graph by analyzing the relationship among multipath channels, environmental information, and occlusion effects.
Based on the well-known AMP method, the proposed GAMP-MVSVR algorithm achieves environment sensing by iteratively estimating the scattering coefficients of the cloud points and their occlusion relationship. In each iteration, the occlusion relationship between the cloud points of the sparse environment is recalculated according to the proposed occlusion detection rule, and in turn, used to estimate the scattering coefficients of the cloud points. Our proposed algorithm achieves improved sensing performance with multi-BS collaboration in addition to the multi-views from the users. To the best of our knowledge, there has been little work on the design of such an ISAC system in the literature.

\begin{figure*}[t]
  \centering
  \includegraphics[width=4.5in]{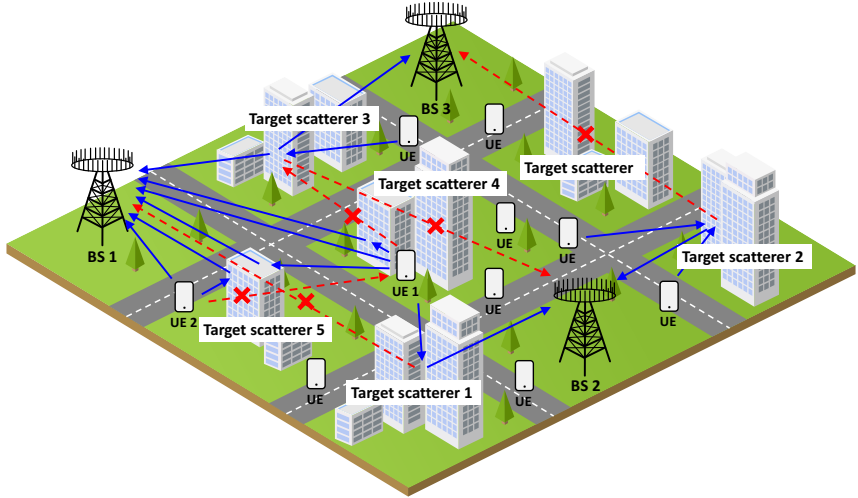}
  \caption{The environment sensing scenario with the occlusion effect.}
  \label{fig1}
  \end{figure*}

The main contributions of this paper are summarized as follows: 

\begin{itemize}
  \item We propose an ISAC scheme, which makes full use of the different views of the environment from different users and/or BS(s) to achieve improved sensing performance with multi-BS collaboration in addition to the multi-views from the UEs.
  
  \item We propose an effective environment sensing algorithm called GAMP-MVSVR that considers the occlusion effect, where a multi-layer factor graph is constructed to iteratively estimate the scattering coefficients of the cloud points and derive their occlusion relationship based on a simple occlusion detection rule.
  
  \item We analyze the impact of the multi-view schemes on environment sensing range and system performance. Extensive simulation results verify the convergence and effectiveness of the proposed algorithm. 
      
\end{itemize}

The rest of this paper is organized as follows. Section \uppercase\expandafter{\romannumeral2} presents the environment setting and system model in the uplink communication scenario. Section \uppercase\expandafter{\romannumeral3} proposes the environment sensing problem formulation. Section \uppercase\expandafter{\romannumeral4} proposes the GAMP-MVSVR algorithm under the occlusion effect. In section \uppercase\expandafter{\romannumeral5}, we analyze the impact of the multi-view schemes on environment sensing range and system performance. Finally, section \uppercase\expandafter{\romannumeral7} presents the numerical results, and section \uppercase\expandafter{\romannumeral8} concludes the paper. 

\textit{Notations}: Fonts $a$, $\bm{a}$ and $\mathbf{A}$ represent scalars, vectors and matrices, respectively. 
Notations $\mathbf{A}^{\rm{T}}$ and $\|\mathbf{A}\|_F$ denote transpose and Frobenius norm of $ \mathbf{A} $, respectively. 
$\mathbf{A}_{i,:}$ and $\mathbf{A}_{:,j}$ represent the $i$-th row and $j$-th column of matrix $\mathbf{A}$, respectively.
$\odot $ represents the Hadamard product between two matrices. 
Finally, notation ${\rm diag}(\bm{a})$ represents a diagonal matrix with the entries of $\bm{a}$ on its main diagonal, and $\delta(\cdot)$ is the Dirac delta function. 

\section{Environment Setting and System Model}
\subsection{Environment Setting}

As shown in Fig. \ref{fig1}, we consider that in the outdoor wireless communication scenario, multiple BSs are deployed and there are multiple active user equipments (UEs). 
There are some buildings (target objects, serve as scatterers) in the outdoor environment that affect the wireless communication channel. 
We assume w.l.o.g. that the channel is a quasi-static channel where one block of transmission time $T_{\rm b}$ is much shorter than the channel coherence time $T_{\rm c}$, $T_{\rm b} << T_{\rm c}$, and the UE is stationary within the coherence time of the multipath channel estimation by UE pilots.
In the uplink communication scenario, multiple single-antenna UEs send uplink communication data to multi-antenna BSs. 
The transmitted signal is scattered by the scatterer, and received by BSs through multipath channel. 
Therefore, the received signal of the BS contains environmental information. 

We consider two schemes: a single-BS scheme and a multi-BS joint scheme to achieve multi-view environment sensing. The specific analysis is as follows.

\begin{itemize}
  \item As shown in Fig. \ref{fig1}, the single-BS environment sensing scheme is treated as a special case of the multi-BS joint scheme.
  When multiple UEs send uplink data to BS 1, the uplink communication channel is affected by the target scatterer.
  In the scenario, due to the occlusion effect, the multipath channel of UE 1 is only affected by the scattering of target scatterers 4 and 5, and the multipath channel of UE 2 is only affected by the scattering of target scatterers 5.
  Therefore, not all scatterers in the environment will affect the multipath channel of the same user, and the occlusion effect makes it hard for a single user to fully sensed the entire environment. The different views of the environment from different users should be exploited to deal with the occlusion effect.

  \item Fig. \ref{fig1} shows the scenario where multiple BSs are used for joint environment sensing, where the multi-BS scheme is performed.
  Sensing by multiple BSs will effectively expand the sensing range. 
  In addition to the same user multi-view method as in the single-BS scheme,  the different views of the environment from different BSs should be exploited.
  For example, in Fig. \ref{fig1}, BS 2 cannot receive the scattered signal from the target scatterer 3, and BS~1 and 3 also have scatterers that cannot be observed. Therefore, a multi-BS joint scheme should be designed to achieve the sensing of the entire environment.
\end{itemize}

\subsection{System Model}\label{mx}
For the receiving antennas deployed on the BS, the single-BS scheme and the multi-BS joint scheme have the same system model.
Let the number of users in the environment be $N_{\rm{u}}$, and the total number of receiving antennas of all BSs is $N_{\rm{R}}$.
In the uplink communication scenario, the channel from the user to the BS is mainly composed of two parts: the line-of-sight channel ${\bf{H}}^{{\rm{LOS}}}$ from the user to the BS, and the multipath channel ${\bf{H}}^{{\rm{S}}}$ caused by the scatterer.
\begin{figure}[t]
  \centering
  \includegraphics[width=2.2in]{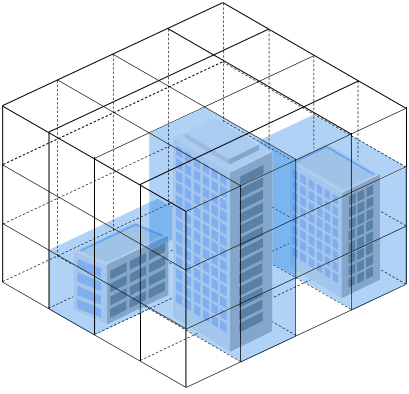}
  \caption{The discretized targeted environment object.}
  \label{fig3}
  \end{figure}
We discretize environmental information and treat the environmental information in the entire environment as point clouds. Each point in the point cloud represents the environmental information of the small cube with length $l_{\rm{s}}$, width $w_{\rm{s}}$, and height $h_{\rm{s}}$ around it, and these small cubes are called pixels.
The inside of each pixel may be empty, or there may be scatterers.
The length, width and height of the environment to be sensed are $L_{\rm{s}}$, $W_{\rm{s}}$ and $H_{\rm{s}}$ respectively, then the number of point clouds is ${N_{\rm{s}}} = \left({{{L_{\rm{s}}}} \mathord{\left/
{\vphantom {{{L_{\rm{s}}}} {{l_{\rm{s}}}}}} \right.
\kern-\nulldelimiterspace} {{l_{\rm{s}}}}} \right)\times \left({{{W_{\rm{s}}}} \mathord{\left/
{\vphantom {{{W_{\rm{s}}}} {{w_{\rm{s}}}}}} \right.
\kern-\nulldelimiterspace} {{w_{\rm{s}}}}} \right)\times \left({{{H_{\rm{s}}}} \mathord{\left/
{\vphantom {{{H_{\rm{s}}}} {{h_{\rm{s}}}}}} \right.
\kern-\nulldelimiterspace} {{h_{\rm{s}}}}}\right)$.
Fig. \ref{fig3} shows the method of discretizing environmental information into point clouds. 
We use the scattering coefficient ${x_{{n_{\rm{s}}}}}$ to represent the scattering coefficient of the small cube where the $n_{\rm{s}}$-th cloud point is located. If the small cube is empty, then ${x_{{n_{\rm{s}}}}} = 0$. Otherwise, ${x_{{n_{\rm{s}}}}} > 0$.
Therefore, the environmental information to be sensed is expressed as 
% \begin{equation}
% \small
  ${\bm{x}} = {\left[ {{x_1},{x_2}, \cdots ,{x_{{N_{\rm{s}}}}}} \right]^{\rm{T}}}.$
% \end{equation}

Multiple users in the space share time-frequency resources. On a single subcarrier, the antennas of all BSs are analyzed independently. The signal received at the $n_{\rm{R}}$-th receiving antenna can be expressed as 
\begin{equation}\small
    {\bf{Y}}\left( {{n_{\rm{R}}}} \right) = {\bf{S}}\left( {{{\bf{H}}^{\rm{S}}}\left( {{n_{\rm{R}}}} \right) + {{\bf{H}}^{{\rm{LOS}}}}\left( {{n_{\rm{R}}}} \right)} \right) + {\bf{W}}, \label{m1a}
\end{equation}
where ${\bf{S}} \in \mathbb{C} ^{N_{\rm{T}} \times N_{\rm{u}}}$ represents codewords with the length of $N_{\rm{T}}$ sent by $N_{\rm{u}}$ users, ${\bf{Y}}\left( {{n_{\rm{R}}}} \right) \in \mathbb{C} ^{N_{\rm{T}} \times 1}$ represents the received signal of the $n_{\rm{R}}$-th BS antenna, and $\bf{W}$ represents noise.
The term ${{\bf{H}}^{{\rm{LOS}}}}\left( {{n_{\rm{R}}}} \right)\in \mathbb{C} ^{N_{\rm{u}} \times 1}$ represents the channel coefficient of the line-of-sight path from the user to the receiving antenna, where the free space channel is decomposed into amplitude attenuation and phase shift, $h^{{\rm{LOS}}}_{n_{\rm{u}}}\left( {{n_{\rm{R}}}} \right) = \alpha^{\rm{ LOS}}\cdot e^{j\varphi^{\rm{LOS}}}$, $\alpha^{\rm{LOS}}$ is the line-of-sight channel amplitude attenuation, and $\varphi^{\rm{LOS }}$ is the line-of-sight channel phase shift. The line-of-sight channel occlusion can be detected through beam scanning, received power, etc.
The term ${{\bf{H}}^{{\rm{S}}}}\left( {{n_{\rm{R}}}} \right)\in \mathbb{C} ^{N_{\rm{u}} \times 1}$ represents the channel coefficient of the multipath channel from the user to the receiving antenna.

As mentioned in Section I-B, we consider that in ISAC system design, the utilization of the intrinsic sparsity of objects or scatterers within an environment is key to the effective detection.
We separate the free space channel (microwave background field) and scatterers in the multipath channel model to obtain sparse environmental information.
In this way, the occlusion relationship between the scatterer and the antenna can be modeled by the existence of the free space channel between them.
The multipath channel from the user to the receiving antenna ${{\bf{H}}^{{\rm{S}}}}\left( {{n_{\rm{R}}}} \right)$ is expressed as
\begin{subequations}\small
  \begin{align} 
      &{{\bf{H}}^{{\rm{S}}}}\left( {{n_{\rm{R}}}} \right) = {\bf{\tilde H}}\left( {{n_{\rm{R}}}} \right){\bm{x}} \label{m1b}\\
      =& \left[ {{\bf{H}}\left( {{n_{\rm{R}}}} \right) \odot {\bf{V}}\left( {{n_{\rm{R}}}} \right)} \right]{\bm{x}} \label{m1c}\\
      =&\left( {{{\bf{H}}^{{\rm{U\rightarrow s}}}}\odot {{\bf{V}}^{{\rm{U\rightarrow s}}}}} \right){\rm{diag}}\left( {{{\bf{H}}^{{\rm{s\rightarrow B}}}}\left( {{n_{\rm{R}}}} \right) \odot {{\bf{V}}^{{\rm{s\rightarrow B}}}}\left( {{n_{\rm{R}}}} \right)} \right){\bm{x}}, \label{m1d}
      \end{align}\label{m1}
\end{subequations}
where the free space channel coefficient ${\bf{\tilde H}}\left( {{n_{\rm{R}}}} \right)$ with occlusion is expressed as the Hadamard product of the free space channel coefficient ${\bf{H}}\left( {{n_{\rm{R}}}} \right)\in \mathbb{C} ^{N_{\rm{u}} \times N_{\rm{s}}}$ and the occlusion matrix ${\bf{V}}\left( {{n_{\rm{R}}}} \right)\in \left\{0,1\right\} ^{N_{\rm{u}} \times N_{\rm{s}}}$.
The element 0 in the occlusion matrix ${\bf{V}}\left( {{n_{\rm{R}}}} \right)$ indicates that the path from $n_{\rm{u}}$-th user to $n_{\rm{R}}$-th receiving antenna is blocked, and the element 1 indicates that the path from $n_{\rm{u}}$-th user to $n_{\rm{R}}$-th receiving antenna is not blocked.
${{\bf{H}}^{{\rm{U\rightarrow s}}}}\in \mathbb{C} ^{ N_{\rm{u}} \times N_{\rm{s}}}$ represents the channel coefficient from the user to the point cloud position, where $h^{{\rm{U\rightarrow s}}}_{n_ {\rm{u}},n_{\rm{s}}} = \alpha^{\rm{U\rightarrow s}}\cdot e^{j\varphi^{\rm{U\rightarrow s}}}$, $\alpha^{\rm{U\rightarrow s}}$ represents the channel amplitude attenuation from the user to the point cloud position, $\varphi^ {\rm{U\rightarrow s}}$ represents the channel phase shift from the user to the point cloud position.
${{\bf{H}}^{{\rm{s\rightarrow B}}}}\left( {{n_{\rm{R}}}} \right)\in \mathbb{C} ^{N_{\rm{s}} \times 1}$ represents the channel coefficient from the point cloud position to the BS receiving antenna, where $h^{{\rm{s\rightarrow B}}}_{n_{\rm{s}}}\left( {{n_{\rm{R}}}} \right) = \alpha^{\rm{s\rightarrow B}}\cdot e^{j\varphi^{\rm{s\rightarrow B}}}$, $\alpha^{\rm{s\rightarrow B}}$ represents the channel amplitude attenuation from the point cloud position to the BS receiving antenna, $\varphi^{\rm{s\rightarrow B}}$ represents the channel phase shift from the point cloud position to the BS receiving antenna, ${\bf{H}}\left( {{n_{\rm{R}}}} \right) = {{\bf{H}}^{{\rm{U\rightarrow s}}}}{\rm{diag}}\left( {{{\bf{H}}^{{\rm{s\rightarrow B}}}}\left( {{n_{\rm{R}}}} \right)} \right)$.
${{\bf{V}}^{{\rm{U\rightarrow s}}}}\in \left\{0,1\right\} ^{N_{\rm{u}} \times N_{\rm{s}}}$ represents the occlusion matrix from the user to the point cloud position, and its distribution reflects the different views from different users. When ${{\bf{V}}^{{\rm{U\rightarrow s}}}}$ contains an all-zero column, it means that all users cannot sense the corresponding pixel, and the corresponding pixel is out of the sensing range.
${{\bf{V}}^{{\rm{s\rightarrow B}}}}\left( {{n_{\rm{R}}}} \right)\in \left\{0,1\right\} ^{N_{\rm{s}} \times 1}$ represents the occlusion matrix from the point cloud position to the receiving antenna, and its distribution reflects the different views from different BSs. When ${{\bf{V}}^{{\rm{s\rightarrow B}}}}\left( {{n_{\rm{R}}}} \right)$ contains a zero element, it means that the receiving antenna cannot sense the corresponding pixel, and the corresponding pixel is also out of the sensing range, ${\bf{V}}\left( {{n_{\rm{R}}}} \right) = {{\bf{V}}^{{\rm{U\rightarrow s}}}}{\rm{diag}}\left( {{{\bf{V}}^{{\rm{s\rightarrow B}}}}\left( {{n_{\rm{R}}}} \right)} \right)$.

In (\ref{m1}), the model is derived as follows. First, in (\ref{m1b}), we decompose the multipath channel ${{\bf{H}}^{\rm{S}}}\left( {{n_{\rm{R}}}} \right)$ in (\ref{m1a}) into free space channel with occlusion ${\bf \tilde H}\left( {{n_{\rm{R}}}} \right)$ and environmental information $\bm x$. Then, in (\ref{m1c}), we decompose free space channel with occlusion ${\bf \tilde H}\left( {{n_{\rm{R}}}} \right)$ into free space channel ${\bf H}\left( {{n_{\rm{R}}}} \right)$ and occlusion matrix ${\bf V}\left( {{n_{\rm{R}}}} \right)$.
Finally, in (\ref{m1d}), we further decompose the free space channel with occlusion ${\bf \tilde H}\left( {{n_{\rm{R}}}} \right)$ into the channels from the user, the base station to the point cloud position ${{\bf{H}}^{{\rm{U\rightarrow s}}}}$, ${{\bf{H}}^{{\rm{s\rightarrow B}}}}\left( {{n_{\rm{R}}}} \right)$ and their corresponding occlusion matrices ${{\bf{V}}^{{\rm{U\rightarrow s}}}}$, ${{\bf{V}}^{{\rm{s\rightarrow B}}}}\left( {{n_{\rm{R}}}} \right)$.
\begin{figure}[t]
  \centering
  \includegraphics[width=2.5in]{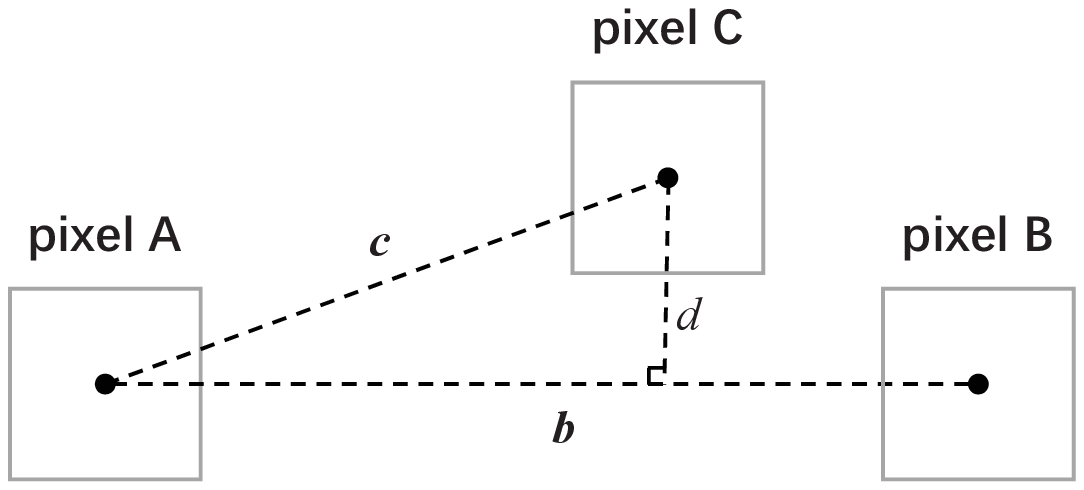}
  \caption{An illustration of the environmental scatterer occlusion detection.}
  \label{fig4}
  \end{figure}

\section{Problem Formulation}
In this section, we analyze the detection method of the occlusion effect in environment sensing, and based on this, we convert the environment sensing problem into a CS reconstruction problem with occlusion effect.

\subsection{Occlusion Detection Model}\label{zd}
We detect the occlusion effect between scatterers based on the geometric positions between pixels.
As shown in Fig. \ref{fig4}, there is a pixel C with a scattering coefficient greater than the threshold $\eta$ between pixels A and B. We believe that pixel C will cause an occlusion effect.
Let the position of pixel A be the origin of coordinates, then the position vectors of pixel B and C are expressed as $\bm{b}$ and $\bm{c}$ respectively.
According to the following conditions, it is considered that pixel C blocks the line-of-sight path between pixels A and B. 

\begin{itemize}
  \item The distance $d$ between pixel C and the line-of-sight path between pixels A and B is less than the threshold $l$,  
  \begin{equation}\small
    d = \frac{{ \left\lvert {{\bm{c}} \times {\bm{b}}}\right\rvert  }}{{ \left\lvert {\bm{b}}\right\rvert  }} < l ,\label{}  
    \end{equation}
where the detection threshold $l$ should be selected according to the side length or diagonal length of the small cube, that is, $l \in \left[\min\left\{l_{\rm{s}},w_{\rm{s}},h_{\rm{s}}\right\},\left(l_{\rm{s}}^2+w_{\rm{s}}^2+h_{\rm{s}}^2\right)^{1/2}\right]$.

  \item The angle between vector $\bm{b}$ and vector $\bm{c}$ is an acute angle, which means that pixel C is located in the same direction of the line-of-sight path from pixel A to pixel B, but not pixel A is located between the pixel B and the pixel C. It is expressed as 
  \begin{equation}\small
    {\bm{b}} \cdot {\bm{c}} > 0.
    \end{equation}

  \item Pixel C is located between pixel A and pixel B, which is expressed as
  \begin{equation}\small
    \left\lvert {{\bm{c}} \cdot {\bm{b}}}\right\rvert   <  \left\lvert {\bm{b}}\right\rvert  ^2.
    \end{equation}

\end{itemize}

According to the above occlusion detection method, we design the environment sensing algorithm. However, the proposed occlusion detection method is not the only method, and the proposed environment sensing algorithm does not depend on the uniqueness of the occlusion detection method.

\subsection{Environment Compressed Sensing Model}\label{ys}
As mentioned in Section \ref{js}, since the distribution of scatterers in the environment is sparse, according to the CS theory, the optimization problem of solving environmental information is expressed as 
\begin{equation}\small
  {\bm{\hat x}} = \arg \mathop {\min }\limits_{\bm{x}} {\left\| {\bm{x}} \right\|_1} \quad {\rm{s}}{\rm{.t}}{\rm{.}}\quad {\left\| {{\bf{Y}} - {\bf{S}}\left( {{{\bf{H}}^{\rm{S}}} + {{\bf{H}}^{{\rm{LOS}}}}} \right)} \right\|_2} \le \varepsilon, \label{q1}  
\end{equation}
where $\varepsilon$ is a slack variable. In the system model of Section II-B, We can calculate the free space channel coefficients ${{\bf{H}}^{{\rm{LOS}}}}\left( {{n_{\rm{ R}}}} \right)$ and ${{\bf{H}}}\left( {{n_{\rm{ R}}}} \right)$ through the basic free space path propagation model.
In addition, some empirical models, simplified models, or interpolation methods are also suitable, and the appropriate model needs to be selected according to the practical environment.
The BS estimates the multipath channel ${{\bf{H}}^{{\rm{S}}}}\left( {{n_{\rm{R}}}} \right)$ by detecting the UE pilot and calculating the relationship with the received signal ${\bf{Y}}\left( {{n_{\rm{R}}}} \right)$. Therefore, in general, in order to achieve accurate channel estimation (e.g. least-squares channel estimation), the pilot length should be greater than the number of users, that is, $N_{\rm T} > N_{\rm u}$.
The proposed algorithm does not rely much on the method of channel estimation. Note that the channel estimation is only the first step of the overall integrated sensing and communication process. The subsequent environment sensing problem studied here is always an underdetermined CS problem with given channel matrix.
Due to the large propagation range of wireless signals, according to different precision requirements, the number of pixels $N_{\rm s}$ will be greater than $N_{\rm u}N_{\rm R}$. At the same time, based on the sparsity of the environmental information $\bm x$, the optimization problem (\ref{q1}) is expressed as the CS reconstruction problem equation
{\small
\begin{align}
  {\left[ {\begin{array}{*{20}{c}}
      {{{\bf{H}}^{\rm{S}}}\left( 1 \right)}\\
       \vdots \\
      {{{\bf{H}}^{\rm{S}}}\left( {{n_{\rm{R}}}} \right)}\\
       \vdots \\
      {{{\bf{H}}^{\rm{S}}}\left( {{N_{\rm{R}}}} \right)}
      \end{array}} \right]_{{N_{\rm{u}}}{N_{\rm{R}}} \times 1}} &= {\left[ {\begin{array}{*{20}{c}}
      {{\bf{\tilde H}}\left( 1 \right)}\\
       \vdots \\
      {{\bf{\tilde H}}\left( {{n_{\rm{R}}}} \right)}\\
       \vdots \\
      {{\bf{\tilde H}}\left( {{N_{\rm{R}}}} \right)}
      \end{array}} \right]_{{N_{\rm{u}}}{N_{\rm{R}}} \times {N_{\rm{s}}}}}{\left[ {\bm{x}} \right]_{{N_{\rm{s}}} \times 1}},\nonumber
  \\ \Rightarrow {{\bf{H}}^{\rm{S}}} &= {\bf{\tilde H}}{\bm{x}}, \label{q2}  
  \end{align}}where ${\bf{\tilde H}}$ is formed by splicing the multipath channel matrix ${{\bf{\tilde H}}\left( {{n_{\rm{R}}}} \right)}$ between $N_{\rm{R}}$ receiving antennas and $N_{\rm{u}}$ users.
In the multi-BS joint environment sensing scheme, $N_{\rm{R}}$ receiving antennas include the number of all antennas on multiple BSs.

\section{Environment Sensing Algorithm under The Occlusion Effect}
In this section, we propose an environment sensing algorithm called GAMP-MVSVR under the occlusion effect to solve the environment information $\bm{x}$ in the (\ref{q2}). According to (\ref{m1}), (\ref{q2}) is expressed as
\begin{equation}\small
  {{\bf{H}}^{\rm{S}}} = {\bf{\tilde H}}{\bm{x}} = \left({\bf{H}}\odot {\bf{V}}\right)  {\bm{x}},\label{ww1} 
\end{equation}
where channel matrix ${\bf{H}}$ is formed by splicing channel matrix ${\bf{H}}\left(n_{\rm{R}}\right)$ of $N_{\rm{R}}$ receiving antennas, and occlusion matrix ${\bf{V}}$ is formed by splicing occlusion matrix ${\bf{V}}\left(n_{\rm{R}}\right)$ of $N_{\rm{R}}$ receiving antennas. 

\begin{figure*}[t]
  \centering
  \includegraphics[width=5.5in]{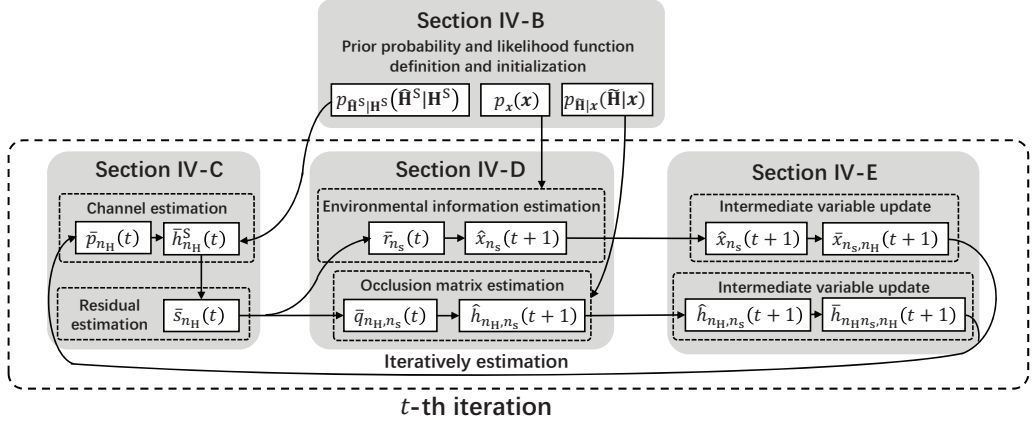}
  \caption{A sketch of the proposed GAMP-MVSVR algorithm.}
  \label{fig6}
  \end{figure*}

\begin{figure}[t]
  \centering
  \includegraphics[width=3in]{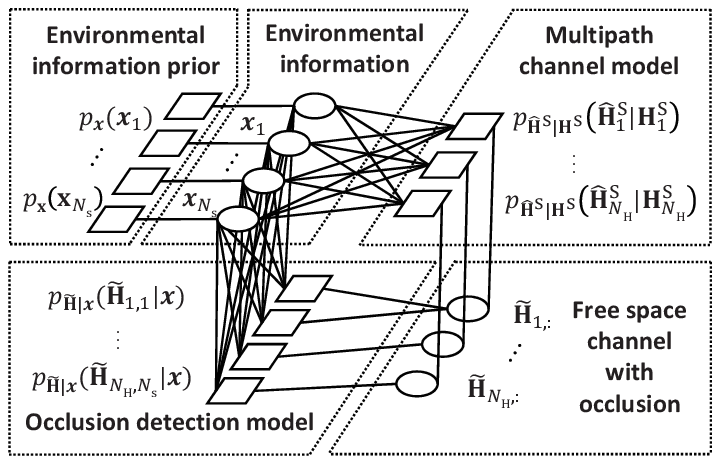}
  \caption{The factor graph of joint posterior distribution.}
  \label{fig5}
  \end{figure}

\subsection{Overview of the Proposed GAMP-MVSVR Algorithm}\label{GAMP2}
Different from the traditional CS reconstruction problem, solving the environment sensing problem with occlusion effect requires not only solving the environment information ${\bm{x}}$ but also needs to consider the unknown occlusion matrix ${\bf{V}}$.
Traditional CS reconstruction algorithms, such as GAMP algorithm \cite{Rangan}, can solve environmental information in an ideal (ignoring occlusion, ${\bf{\tilde H}} = {\bf{H}}$) situation.
In this way, since the accurate measurement matrix ${\bf{\tilde H}}$ cannot be obtained, the environmental information ${\bm{x}}$ cannot be accurately solved.
Compared with the GAMP algorithm, the Bilinear GAMP algorithm \cite{Jason} can solve the bilinear problem, that is, simultaneously solve the occlusion matrix ${\bf{V}}$ and the environmental information ${\bm{x}}$.
However, it does not consider the relationship between the occlusion effect and the environmental information, the Bilinear GAMP algorithm cannot constrain the uniqueness and accuracy of the occlusion matrix ${\bf{V}}$ and the environmental information ${\bm{x}}$.

Fig. \ref{fig6} shows a sketch of the proposed GAMP-MVSVR algorithm.
In Section \ref{SPA}, we propose a factor graph based on the decomposition result of the joint posterior probability.
Based on the sum-product algorithm (SPA) algorithm, we calculate the message passed in the factor graph, and define the prior probability and likelihood function.
In Section \ref{yz2bl}, we approximate the message from factor node to variable node. 
Through the approximate process, we use the result in the previous iteration to calculate its corresponding multipath channel. We compare it with the channel ${{\bf{\hat H}}^{{\rm{S}}}}$ estimated by the pilot and calculate the residual to guide this iteration.
In Section \ref{bl2yz}, we approximate the message from variable node to factor node. 
Through the approximate process, according to the occlusion effect, we estimate ${\bf{\tilde H}}$ and ${\bm{x}}$ based on the residual.
In Section \ref{it}, We update the intermediate variables in Section \ref{yz2bl} and \ref{bl2yz}, and achieve the iterative execution of the estimation methods in Section \ref{yz2bl} and \ref{bl2yz}.

\subsection{Factor Graph and Passed Messages}\label{SPA}
As mentioned in Section \ref{ys}, let ${{\bf{\hat H}}^{\rm{S}}}$ be the multipath channel estimated by the pilot, then ${{\bf{\hat H}}^{\rm{S}}} = {{\bf{H}}^{\rm{S}}} + {\bf{W}_{\rm e}}$, where ${\bf{W}_{\rm e}}$ represents the estimation error.
The estimation error ${\bf{W}_{\rm e}}$ is caused by two main reasons, including the multipath channel estimation error caused by pilot estimation and the free space channel estimation error caused by modeling estimation. In order to simplify the analysis, we assume that the noise ${\bf W}_{\rm e}$ obeys a Gaussian distribution.
The joint posterior distribution of the occluded multipath channel ${\bf{\tilde H}}$ and the environmental information ${\bm{x}}$ is expressed as  
\begin{equation}\small
  \begin{split}
      &{p_{{\bm{x}},{\bf{\tilde H}}|{{\bf{\hat H}}^{\rm{S}}}}}\left( {{\bm{x}},{\bf{\tilde H}}|{{\bf{\hat H}}^{\rm{S}}}} \right)\\
      = & p_{{{\bf{\hat H}}^{\rm{S}}}|{\bm{x}},{\bf{\tilde H}}}\left( {{{\bf{\hat H}}^{\rm{S}}}|{\bm{x}},{\bf{\tilde H}}} \right){p_{\bm{x}}}\left( {\bm{x}} \right){p_{{\bf{\tilde H}}|{\bm{x}}}}\left( {{\bf{\tilde H}}|{\bm{x}}} \right)/{{p_{{\bf{\hat H}}^{\rm{S}}}}\left( {{\bf{\hat H}}^{\rm{S}}} \right)}\\
      \propto& p_{{{\bf{\hat H}}^{\rm{S}}}|{\bf{\tilde H}}{\bm x}}\left( {{{\bf{\hat H}}^{\rm{S}}}|{\bf{\tilde H}}{\bm x}} \right){p_{\bm{x}}}\left( {\bm{x}} \right){p_{{\bf{\tilde H}}|{\bm{x}}}}\left( {{\bf{\tilde H}}|{\bm{x}}} \right)\\
      =& p_{{{\bf{\hat H}}^{\rm{S}}}|{{\bf{H}}^{\rm{S}}}}\left( {{{\bf{\hat H}}^{\rm{S}}}|{{\bf{H}}^{\rm{S}}}} \right){p_{\bm{x}}}\left( {\bm{x}} \right){p_{{\bf{\tilde H}}|{\bm{x}}}}\left( {{\bf{\tilde H}}|{\bm{x}}} \right), \label{a1}  
  \end{split}
\end{equation}
where ${{p_{{\bf{H}}^{\rm{S}}}}\left( {{\bf{H}}^{\rm{S}}} \right)}$ is a constant.
To simplify the expression, let $N_{\rm{H}}={N_{\rm{u}}}{N_{\rm{R}}}$ denote the length of the multipath channel vector ${{\bf{\hat H}}^{\rm{S}}}$.
As shown in Fig. \ref{fig5}, according to the (\ref{a1}), the posterior distribution is expressed in the form of a factor graph.

As shown in Fig. \ref{fig5}, the circles represent the variable nodes $\bf{\tilde H}$ and $\bm{x}$, the squares represent the factor node $p_{{{\bf{\hat H}}^{\rm{S}}}|{{\bf{ H}}^{\rm{S}}}}( {{ {\bf{\hat H}}^{\rm{S}}}|{{\bf{H}}^{\rm{S}}}} )$, ${p_{\bm{x}}}\left( {\bm{x}} \right)$ and ${p_{{\bf{\tilde H}}|{\bm{x}}}}( {{\bf{\tilde H}}|{\bm{x}}} )$.
The lines between the nodes represent that the variables are included in the function of the factor nodes. Specifically, the environmental information prior probability function ${p_{\bm{x}}}\left( {\bm{x}} \right)$ contains the environmental information variable $\bm x$. The relationship between the environmental information variable $\bm x$ and the channel with occlusion $\bf{\tilde H}$ conforms to the occlusion detection model function ${p_{{\bf{\tilde H}}|{\bm{x}}}}( {{\bf{\tilde H}}|{\bm{x}}} )$. The multipath channel observation function $p_{{{\bf{\hat H}}^{\rm{S}}}|{{\bf{ H}}^{\rm{S}}}}( {{ {\bf{\hat H}}^{\rm{S}}}|{{\bf{H}}^{\rm{S}}}} )$ contains the environmental information variable $\bm x$ and the free space channel $\bf{\tilde H}$. The specific function is expressed as follows: 
${{\bf{\hat H}}^{\rm{S}}}$ can be considered as the observation of ${{\bf{H}}^{\rm{S}}}$ under the interference of additive white Gaussian noise, i.e., 
\begin{equation}\small
  p_{{\hat h^{\rm{S}}_{n_{\rm{H}}}}|{h_{n_{\rm{H}}}^{\rm{S}}}}\left( {{\hat h_{n_{\rm{H}}}^{\rm{S}}}|{h_{n_{\rm{H}}}^{\rm{S}}}} \right) = {\mathcal{N}}\left( {\hat h_{n_{\rm{H}}}^{\rm{S}};{h_{n_{\rm{H}}}^{\rm{S}}},{\sigma ^{\rm{w}}}} \right), \label{a2} 
\end{equation}
where ${\sigma ^{\rm{w}}}$ is the variance of Gaussian white noise. Due to the sparseness of environmental information $\bm{x}$, we assume that the $\bm{x}$ obeys a Bernoulli-Gaussian distribution, 
\begin{equation}\small
  {p_{\bm{x}}}\left( {\bm{x}} \right) = \left( {1 - \lambda } \right)\delta \left( x \right) + \lambda {\cal N}\left( {x;{\theta ^{\rm{x}}},{\sigma ^{\rm{x}}}} \right), \label{a3} 
\end{equation}
where $\lambda$ is the sparsity of the environment, $\theta ^{\rm{x}}$ and $\sigma ^{\rm{x}}$ represent the mean and variance of the scattering coefficient of the environmental scatterer. 
${p_{{\bf{\tilde H}}|{\bm{x}}}}\left( {{\bf{\tilde H}}|{\bm{x}}} \right)$ represents the occlusion effect of environmental information $\bm{x}$ on the channel, i.e.,  
\begin{equation}\small
  \begin{split}
  &{p_{{{\tilde h}_{{n_{\rm{H}}},{n_{\rm{s}}}}}|{\bm{x}}}}\left( {{{\tilde h}_{{n_{\rm{H}}},{n_{\rm{s}}}}}|{\bm{x}}} \right) = \\ &\left\{ \begin{array}{l}
      \delta \left( {{{\tilde h}_{{n_{\rm{H}}},{n_{\rm{s}}}}} - {h_{{n_{\rm{H}}},{n_{\rm{s}}}}}} \right)\quad {v_{{n_{\rm{H}}},{n_{\rm{s}}}}} = f_{\rm{occ}}\left( {\bm{x}} \right) = 1\\
      \delta \left( {{{\tilde h}_{{n_{\rm{H}}},{n_{\rm{s}}}}}} \right)\quad \quad \quad \quad \;\;\;{v_{{n_{\rm{H}}},{n_{\rm{s}}}}} = f_{\rm{occ}}\left( {\bm{x}} \right) = 0
      \end{array} \right., 
  \end{split}\label{a4}  
\end{equation}
where $f_{\rm{occ}}\left(\cdot\right)$ represents the occlusion detection function, the specific method is as described in the Section \ref{zd}, and the occlusion detection threshold is set to $\eta = {\theta ^{\rm{x}}}/4$.

Based on the sparse prior information and occlusion relationship of environmental information, we aim to obtain the minimum mean square error (MMSE) estimate of environmental information  
\begin{equation}\small
  \hat x_{{n_{\rm{s}}}}^{{\rm{MMSE}}} = \arg \mathop {\min }\limits_{{{\hat x}_{{n_{\rm{s}}}}}} {\mathbb{E}_{\bm{x}}}\left[ {{{\left( {{x_{{n_{\rm{s}}}}} - {{\hat x}_{{n_{\rm{s}}}}}} \right)}^2}} \right]. \label{a5} 
\end{equation}

According to the loopy belief propagation (LBP) theory \cite{frey1998revolution}, based on the factor graph model, we calculate the MMSE estimated value of the environment information through the SPA \cite{Kschischang}.
As explained in Appendix \ref{fld}, 
the method of SPA message passing is as follows. As shown in Fig. \ref{fig5}, the output message of the variable node along the edge (solid line) is obtained by calculating the product of all its input messages. The output message of the factor node is calculated by multiplying and integrating the function of the factor node and its input message.
The marginal distribution of the variable is calculated from the product of all messages input to the node.
Since it is difficult to accurately calculate the passed message , we approximate the message in the following sections.

\subsection{Approximated Factor-to-Variable Messages}\label{yz2bl}
In this section, we approximate the message from the factor node to the variable node according to the central limit theorem (CLT) and Gaussian approximation based on the idea of approximate message passing \cite{Donoho2,Rangan,Jason}.
Based on CLT, when $N_{\rm{s}}\rightarrow \infty $, $N_{\rm{H}}/N_{\rm{s}}$ remains constant, we drop some high-order infinitesimal terms. 
In the $t$ iteration, the sparsity $\lambda$ in the prior distribution of environmental information ${p_{\bm{x}}}\left( {\bm{x}} \right)$ remains constant. As mentioned in Section \ref{zd} and Section \ref{fx}, the number of zero elements in the occlusion matrix $\bf{V}$ in the variable $\bf{\tilde H}$ remains fixed, and the position of the zero elements is determined by ${\hat x_{{n_{\rm{s}}}}}$. 
Therefore, we consider that variable ${\hat x_{{n_{\rm{s}}}}}$ is independent of variable ${\hat h_{{n_{\rm{H}}},{n_{\rm{s}}}}}$ in the $t$-th iteration.
As explained in Appendix \ref{flb}, the passed message in (\ref{a6}) is approximated as 
{\small
  \begin{align}
      &{\mu _{\hat h_{{n_{\rm{H}}}}^{\rm{S}} \to {x_{{n_{\rm{s}}}}}}}\left( {t,{x_{{n_{\rm{s}}}}}} \right) = \log \int_{h_{{n_{\rm{H}}}}^{\rm{S}}} {{p_{\hat h_{{n_{\rm{H}}}}^{\rm{S}}|h_{{n_{\rm{H}}}}^{\rm{S}}}}\left( {\hat h_{{n_{\rm{H}}}}^{\rm{S}}\left| {h_{{n_{\rm{H}}}}^{\rm{S}}} \right.} \right)} \nonumber\\
       &\times {\cal N}\left( {h_{{n_{\rm{H}}}}^{\rm{S}};{{\bar h}_{{n_{\rm{H}}}{n_{\rm{s}}},{n_{\rm{H}}}}}\left( t \right){x_{{n_{\rm{s}}}}} + {{\bar p}_{{n_{\rm{s}}},{n_{\rm{H}}}}}\left( t \right),}\right.\nonumber\\ & \quad \quad \quad \left.{ \sigma _{{n_{\rm{H}}}{n_{\rm{s}}},{n_{\rm{H}}}}^{\rm{h}}\left( t \right)x_{{n_{\rm{s}}}}^2 + \sigma _{{n_{\rm{s}}},{n_{\rm{H}}}}^{\rm{p}}\left( t \right)} \right) + c\nonumber\\
       &= {F_{{n_{\rm{H}}}}}\left( {{{\bar h}_{{n_{\rm{H}}}{n_{\rm{s}}},{n_{\rm{H}}}}}\left( t \right){x_{{n_{\rm{s}}}}} + {{\bar p}_{{n_{\rm{s}}},{n_{\rm{H}}}}}\left( t \right),}\right.\nonumber\\ & \quad \quad \quad \left.{ \sigma _{{n_{\rm{H}}}{n_{\rm{s}}},{n_{\rm{H}}}}^{\rm{h}}\left( t \right)x_{{n_{\rm{s}}}}^2 + \sigma _{{n_{\rm{s}}},{n_{\rm{H}}}}^{\rm{p}}\left( t \right);\hat h_{{n_{\rm{H}}}}^{\rm{S}}} \right) + c,\label{a17} 
      \end{align}}
where 
\begin{equation}\small
  {F_{{n_{\rm{H}}}}}\left( {{\bar p},{\sigma ^{\rm{p}}};{{\hat h}^{\rm{S}}}} \right) = \log \int_{{h^{\rm{S}}}} {{p_{{{\hat h}^{\rm{S}}}|{h^{\rm{S}}}}}\left( {{{\hat h}^{\rm{S}}}|{h^{\rm{S}}}} \right){\cal N}\left( {{{\hat h}^{\rm{S}}};{\bar p},{\sigma ^{\rm{p}}}} \right)}.\label{a18} 
  \end{equation}
  
Adding the $n_{\rm{s}}$-th term to the summation in ${\bar p_{{n_{\rm{s}}},{n_{\rm{H}}}}}\left( t \right)$ in (\ref{a15}), then ${\bar p_{{n_{\rm{H}}}}}\left( t \right)$ and $\sigma _{{n_{\rm{H}}}}^{\rm{p}}\left( t \right)$ are considered as an estimate of ${h_{{n_{\rm{H}}}}^{\rm{S}}}$, i.e., 
\begin{equation}\small
  {\bar p_{{n_{\rm{H}}}}}\left( t \right) = \sum_{k=1}^{N_{\rm{s}}} {{{\bar h}_{{n_{\rm{H}}}k,{n_{\rm{H}}}}}\left( t \right){{\bar x}_{k,{n_{\rm{H}}}}}\left( t \right)},\label{a19} 
  \end{equation} 
  \begin{equation}\small
  \begin{split}
  \sigma _{{n_{\rm{H}}}}^{\rm{p}}&\left( t \right) = \sum_{k=1}^{N_{\rm{s}}} {\left( {\bar h_{{n_{\rm{H}}}k,{n_{\rm{H}}}}^2\left( t \right)\sigma _{k,{n_{\rm{H}}}}^{\rm{x}}\left( t \right) }\right.}\\ & {\left.{+ \sigma _{{n_{\rm{H}}}k,{n_{\rm{H}}}}^{\rm{h}}\left( t \right)\bar x_{k,{n_{\rm{H}}}}^2\left( t \right) + \sigma _{{n_{\rm{H}}}k,{n_{\rm{H}}}}^{\rm{h}}\left( t \right)\sigma _{k,{n_{\rm{H}}}}^{\rm{x}}\left( t \right)} \right)}.\label{a20}  
  \end{split}
  \end{equation}

We define a residual variable whose mean value is, 
\begin{equation}\small
  {\bar s_{n_{\rm{H}}}}\left( t \right) = {F'_{{n_{\rm{H}}},1}}\left( {{{\bar p}_{{n_{\rm{H}}}}}\left( t \right),\sigma _{{n_{\rm{H}}}}^{\rm{p}}\left( t \right);\hat h_{{n_{\rm{H}}}}^{\rm{S}}} \right),\label{a23} 
  \end{equation}
due to the Gaussian property of ${F_{{n_{\rm{H}}}}}$, the variance of the residual is expressed as, 
\begin{equation}\small
  \sigma _{n_{\rm{H}}}^{\rm{s}}\left( t \right) =  - {F''_{{n_{\rm{H}}},1}}\left( {{{\bar p}_{{n_{\rm{H}}}}}\left( t \right),\sigma _{{n_{\rm{H}}}}^{\rm{p}}\left( t \right);\hat h_{{n_{\rm{H}}}}^{\rm{S}}} \right).\label{a24} 
  \end{equation}
where ${F'_{{n_{\rm{H}}},1}}$ and ${F''_{{n_{\rm{H}}},1}}$ represent the first and second derivatives of the first argument of ${F_{{n_{\rm{H}}}}}$, and ${F'_{{n_{\rm{H}}},2}}$ represents the first derivative of the second argument of ${F_{{n_{\rm{H}}}}}$.

As explained in Appendix \ref{flc}, we get the final approximation result of the passed message ${\mu _{\hat h_{{n_{\rm{H}}}}^{\rm{S}} \to {x_{{n_{\rm{s}}}}}}}\left( {t,{x_{{n_{\rm{s}}}}}} \right)$ in (\ref{a6}) 
{\small
\begin{align}
          &{\mu _{\hat h_{{n_{\rm{H}}}}^{\rm{S}} \to {x_{{n_{\rm{s}}}}}}}\left( {t,{x_{{n_{\rm{s}}}}}} \right) \nonumber \\ & \approx  \left[ {{{\bar s}_{{n_{\rm{H}}}}}\left( t \right){{\bar h}_{{n_{\rm{H}}}{n_{\rm{s}}},{n_{\rm{H}}}}}\left( t \right) + \sigma _{{n_{\rm{H}}}}^{\rm{s}}\left( t \right)\hat h_{{n_{\rm{H}}},{n_{\rm{s}}}}^2\left( t \right){{\hat x}_{{n_{\rm{s}}}}}\left( t \right)} \right]{x_{{n_{\rm{s}}}}} \nonumber \\
          & - \frac{1}{2}\left[ {\sigma _{{n_{\rm{H}}}}^{\rm{s}}\left( t \right)\hat h_{{n_{\rm{H}}},{n_{\rm{s}}}}^2\left( t \right) - \sigma _{{n_{\rm{H}}},{n_{\rm{s}}}}^{\rm{h}}\left( t \right) }\right.\nonumber \\ & \times \left.{ \left( {\bar s_{{n_{\rm{H}}}}^2\left( t \right) - \sigma _{{n_{\rm{H}}}}^{\rm{s}}\left( t \right)} \right)} \right]x_{{n_{\rm{s}}}}^2+c, \label{a26} 
  \end{align}}

According to (\ref{a19}), ${\bar p_{{n_{\rm{H}}}}}$ is considered to be an estimate of ${{\bf{ H}}^{\rm{S}}}$, we can express the posterior probability of ${{\bf{ H}}^{\rm{S}}}$ as the product of the prior probability ${\cal N}\left( {h_{{n_{\rm{H}}}}^{\rm{S}};{{\bar p}_{{n_{\rm{H}}}}}\left( t \right),\sigma _{{n_{\rm{H}}}}^{\rm{p}}\left( t \right)} \right)$ and the likelihood ${p_{\hat h_{{n_{\rm{H}}}}^{\rm{S}}|h_{{n_{\rm{H}}}}^{\rm{S}}}}( {\hat h_{{n_{\rm{H}}}}^{\rm{S}}|h_{{n_{\rm{H}}}}^{\rm{S}}} )$ as follows 
\begin{equation}\small
  \begin{split}
  &{p_{h_{{n_{\rm{H}}}}^{\rm{S}}|{p_{{n_{\rm{H}}}}}}}\left( {h_{{n_{\rm{H}}}}^{\rm{S}}|{{\bar p}_{{n_{\rm{H}}}}}\left( t \right);\sigma _{{n_{\rm{H}}}}^{\rm{p}}\left( t \right)} \right) \\ & = \frac{1}{C}{p_{\hat h_{{n_{\rm{H}}}}^{\rm{S}}|h_{{n_{\rm{H}}}}^{\rm{S}}}}\left( {\hat h_{{n_{\rm{H}}}}^{\rm{S}}|h_{{n_{\rm{H}}}}^{\rm{S}}} \right){\cal N}\left( {h_{{n_{\rm{H}}}}^{\rm{S}};{{\bar p}_{{n_{\rm{H}}}}}\left( t \right),\sigma _{{n_{\rm{H}}}}^{\rm{p}}\left( t \right)} \right),\label{a27} 
  \end{split}
  \end{equation}
where $C=\int_{h^{\rm{S}}}{p_{h_{{n_{\rm{H}}}}^{\rm{S}}|{p_{{n_{\rm{H}}}}}}}\left( {h^{\rm{S}}|{{\bar p}}\left( t \right);\sigma^{\rm{p}}\left( t \right)} \right)$ is a normalization constant. Then the mean and variance of (\ref{a27}) are the mean and variance of calculated ${{\bf{ H}}^{\rm{S}}}$ as follows 
\begin{equation}\small
  \bar{h} _{{n_{\rm{H}}}}^{\rm{S}}\left( t \right) = {\mathbb{E}}\left\{ {h_{{n_{\rm{H}}}}^{\rm{S}}|{{\bar p}_{{n_{\rm{H}}}}}\left( t \right);\sigma _{{n_{\rm{H}}}}^{\rm{p}}\left( t \right)} \right\},\label{a28} 
  \end{equation}
  \begin{equation}\small
  \sigma _{{n_{\rm{H}}}}^{{\rm{Hs}}}\left( t \right) = {\mathbb{D}} \left\{ {h_{{n_{\rm{H}}}}^{\rm{S}}|{{\bar p}_{{n_{\rm{H}}}}}\left( t \right);\sigma _{{n_{\rm{H}}}}^{\rm{p}}\left( t \right)} \right\},\label{a29} 
  \end{equation}
where $\mathbb{E}(X) = \int{xf(x){\rm d}x}$ represents the mean of $x$ and $\mathbb{D}(X) = \int{\left[x-\mathbb{E}(X)\right]^2f(x){\rm d}x}$ represents the variance of $x$.

Pluging (\ref{a27}), (\ref{a28}) and (\ref{a29}) into (\ref{a23}) and (\ref{a24}).
Based on the likelihood in ({\ref{a2}}), according to the derivation in \cite{Rangan,Jason}, the mean and variance of the residual are expressed as 
\begin{equation}\small
  {\bar s_{{n_{\rm{H}}}}}\left( t \right) = \frac{{\hat h_{{n_{\rm{H}}}}^{\rm{S}} - {{\bar p}_{{n_{\rm{H}}}}\left( t \right)}}}{{\sigma _{{n_{\rm{H}}}}^{\rm{p}}\left( t \right) + {\sigma ^{\rm{w}}}}},\label{a32}  
  \end{equation} 
  \begin{equation}\small
      \sigma _{{n_{\rm{H}}}}^{\rm s}\left( t \right) = \frac{1}{{\sigma _{{n_{\rm{H}}}}^{\rm{p}}\left( t \right) + {\sigma ^{\rm{w}}}}}.\label{a33} 
  \end{equation}

As mentioned in this section, in the $t$-th iteration, we consider that variable $\bm{x}$ is independent of variable $\bf{\tilde H}$, so the same as the derivation of (\ref{a26}), we approximate the message ${\mu _{\hat h_{{n_{\rm{H}}}}^{\rm{S}} \to {{\tilde h}_{{n_{\rm{H}}},{n_{\rm{s}}}}}}}( {t,{{\tilde h}_{{n_{\rm{H}}},{n_{\rm{s}}}}}} )$ in (\ref{a8}) as 
{\small
\begin{align}
  % \begin{split}
      &{\mu _{\hat h_{{n_{\rm{H}}}}^{\rm{S}} \to {{\tilde h}_{{n_{\rm{H}}},{n_{\rm{s}}}}}}}\left( {t,{{\tilde h}_{{n_{\rm{H}}},{n_{\rm{s}}}}}} \right) \approx\nonumber\\ &
       \left[ {{{\bar s}_{{n_{\rm{H}}}}}\left( t \right){{\bar x}_{{n_{\rm{s}}},{n_{\rm{H}}}}}\left( t \right) + \sigma _{{n_{\rm{H}}}}^{\rm{s}}\left( t \right)\hat x_{{n_{\rm{s}}}}^2\left( t \right){{\hat h}_{{n_{\rm{H}}},{n_{\rm{s}}}}}\left( t \right)} \right]{{\tilde h}_{{n_{\rm{H}}},{n_{\rm{s}}}}}\nonumber\\ &
      - \frac{1}{2}\left[ {\sigma _{{n_{\rm{H}}}}^{\rm{s}}\left( t \right)\hat x_{{n_{\rm{s}}}}^2\left( t \right) - \sigma _{{n_{\rm{s}}}}^{\rm{x}}\left( t \right) }\right.\nonumber\\ & \times \left.{
       \left( {\bar s_{{n_{\rm{H}}}}^2\left( t \right) - \sigma _{{n_{\rm{H}}}}^{\rm{s}}\left( t \right)} \right)} \right]\tilde h_{{n_{\rm{H}}},{n_{\rm{s}}}}^2 + c.\label{a34}
      % \end{split}
 \end{align}}

In summary, in this section we approximate the message from factor node to variable node as follows.
We calculate ${{{\bar p}_{{n_{\rm{H}}}}}\left( t \right)}$ (as shown in (\ref{a19})) to obtain the calculated channel ${\bar h _{{n_{\rm{H}}}}^{\rm{S}}\left( t \right)}$ (as shown in (\ref{a28})).
Then we calculate the residual ${\bar s_{{n_{\rm{H}}}}}\left( t \right)$ according to ${\bar h _{{n_{\rm{H}}}}^{\rm{S}}\left( t \right)}$ and the channel estimation value using the pilot ${\hat h_{{n_{\rm{H}}}}^{\rm{S}}}$ (as shown in (\ref{a32})).
Finally, we use the residual and the approximate mean and variance of the passed message to represent the message from the factor node to the variable node (as shown in (\ref{a26}) and (\ref{a34})).

\subsection{Approximated Variable-to-Factor Messages}\label{bl2yz}
In this section, we approximate the message from the factor node to the variable node.
We plug (\ref{a26}) in (\ref{a7}), the message ${{\mu _{{x_r} \to \hat h_{{n_{\rm{H}}}}^{\rm{S}}}}\left( {t+1,{x_r}} \right)}$ from variable node $\bm{x}$ to factor node $p_{{{\bf{\hat H}}^{\rm{S}}}|{{\bf{ H}}^{\rm{S}}}}( {{{\bf{\hat H}}^{\rm{S}}}|{{\bf{H}}^{\rm{S}}}} )$ is approximated as
\begin{equation}\small
  \begin{split}
      &{\mu _{{x_{{n_{\rm{s}}}}} \to \hat h_{{n_{\rm{H}}}}^{\rm{S}}}}\left( {t + 1,{x_{{n_{\rm{s}}}}}} \right)\\
       &= \log {p_{\bm{x}}}\left( {{x_{{n_{\rm{s}}}}}} \right) - \frac{1}{{2\sigma _{{n_{\rm{s}}},{n_{\rm{H}}}}^{\rm{r}}\left( t \right)}}{\left( {{x_{{n_{\rm{s}}}}} - {{\bar r}_{{n_{\rm{s}}},{n_{\rm{H}}}}}\left( t \right)} \right)^2} + c\\
       &= \log \left( {{p_{\bm{x}}}\left( {{x_{{n_{\rm{s}}}}}} \right){\cal N}\left( {{x_{{n_{\rm{s}}}}};{{\bar r}_{{n_{\rm{s}}},{n_{\rm{H}}}}}\left( t \right),\sigma _{{n_{\rm{s}}},{n_{\rm{H}}}}^{\rm{r}}\left( t \right)} \right)} \right) + c,\label{a35}
      \end{split}
  \end{equation}
where
{\small
  \begin{align}
      &{{\bar r}_{{n_{\rm{H}}},{n_{\rm{s}}}}}\left( t \right)\nonumber\\ & = {{\hat x}_{{n_{\rm{s}}}}}\left( t \right)\left( {1 + \sigma _{{n_{\rm{s}}},{n_{\rm{H}}}}^{\rm{r}}\left( t \right)\sum\limits_{k \ne {n_{\rm{H}}}} {\sigma _{k,{n_{\rm{s}}}}^{\rm{h}}\left( t \right)\left[ {\bar s_k^2\left( t \right) - \sigma _k^{\rm{s}}\left( t \right)} \right]} } \right)\nonumber\\
      & + \sigma _{{n_{\rm{s}}},{n_{\rm{H}}}}^{\rm{r}}\left( t \right)\sum\limits_{k \ne n_{\rm{H}}} {{{\hat h}_{k,{n_{\rm{s}}}}}\left( t \right){{\bar s}_k}\left( t \right)},\label{a36}
      \end{align}}
    
\begin{equation}\small
  \begin{split}
  &\sigma _{{n_{\rm{s}}},{n_{\rm{H}}}}^{\rm{r}}\left( t \right)\\ & = {\left( {\sum\limits_{k \ne {n_{\rm{H}}}} {{{\hat h}_{k,{n_{\rm{s}}}}}\left( t \right)\sigma _k^{\rm{s}}\left( t \right) - \sigma _{k,{n_{\rm{s}}}}^{\rm{h}}\left( t \right)\left( {\bar s_k^2\left( t \right) - \sigma _k^{\rm{s}}\left( t \right)} \right)} } \right)^{ - 1}}.\label{a37}
  \end{split}
  \end{equation}
  
Then the mean ${\bar x_{{n_{\rm{s}}},{n_{\rm{H}}}}}\left( {t + 1} \right)$ and variance $\sigma _{{n_{\rm{s}}},{n_{\rm{H}}}}^{\rm{x}}\left( {t + 1} \right)$ of the passed message in (\ref{a35}) are expressed as,
\begin{equation}\small
  \begin{split}
      &{{\bar x}_{{n_{\rm{s}}},{n_{\rm{H}}}}}\left( {t + 1} \right)\\
      & = \frac{1}{C}\int_{{x_{{n_{\rm{s}}}}}} {{x_{{n_{\rm{s}}}}}{p_{\bm{x}}}\left( {{x_{{n_{\rm{s}}}}}} \right){\cal N}\left( {{x_{{n_{\rm{s}}}}};{{\bar r}_{{n_{\rm{H}}},{n_{\rm{s}}}}}\left( t \right),\sigma _{{n_{\rm{H}}},{n_{\rm{s}}}}^{\rm{r}}\left( t \right)} \right)} \\
      & = {G_{\bm{x}}}\left( {{{\bar r}_{{n_{\rm{H}}},{n_{\rm{s}}}}}\left( t \right),\sigma _{{n_{\rm{H}}},{n_{\rm{s}}}}^{\rm{r}}\left( t \right)} \right),\label{a38}
      \end{split}
  \end{equation}
\begin{equation}\small
    \sigma _{{n_{\rm{s}}},{n_{\rm{H}}}}^{\rm{x}}\left( {t + 1} \right) = \sigma _{{n_{\rm{H}}},{n_{\rm{s}}}}^{\rm{r}}\left( t \right){G'_{\bm{x}}}\left( {{{\bar r}_{{n_{\rm{H}}},{n_{\rm{s}}}}}\left( t \right),\sigma _{{n_{\rm{H}}},{n_{\rm{s}}}}^{\rm{r}}\left( t \right)} \right),\label{a39}
  \end{equation}
where $C=\int_{{x_{{n_{\rm{s}}}}}} {{p_{\bm{x}}}\left( {{x_{{n_{\rm{s}}}}}} \right){\cal N}\left( {{x_{{n_{\rm{s}}}}};{{\bar r}_{{n_{\rm{H}}},{n_{\rm{s}}}}}\left( t \right),\sigma _{{n_{\rm{H}}},{n_{\rm{s}}}}^{\rm{r}}\left( t \right)} \right)}$ is a normalization constant, and ${G'_{\bm{x}}}$ represents the first derivative of the first argument of the ${{G}_{\bm{x}}}$.
The specific derivation of the relationship between (\ref{a38}) and (\ref{a39}) was shown in \cite{Rangan}.

Adding the $n_{\rm{H}}$-th term to the summation in ${{\bar r}_{{n_{\rm{H}}},{n_{\rm{s}}}}}\left( t \right)$, then ${{\bar r}_{{n_{\rm{s}}}}}\left( t \right)$ and $\sigma _{{n_{\rm{s}}}}^{\rm{r}}\left( t \right)$ are expressed as
% \begin{equation}
  {\small
  \begin{align}
      {{\bar r}_{{n_{\rm{s}}}}}\left( t \right) & = {{\hat x}_{{n_{\rm{s}}}}}\left( t \right)\left( {1 + \sigma _{{n_{\rm{s}}}}^{\rm{r}}\left( t \right)\sum_{k=1}^{N_{\rm{H}}} {\sigma _{k,{n_{\rm{s}}}}^{\rm{h}}\left( t \right)\left[ {\bar s_k^2\left( t \right) - \sigma _k^{\rm{s}}\left( t \right)} \right]} } \right)\nonumber\\
      & + \sigma _{{n_{\rm{s}}}}^{\rm{r}}\left( t \right)\sum\limits_{k=1}^{N_{\rm{H}}} {{{\hat h}_{k,{n_{\rm{s}}}}}\left( t \right){{\bar s}_k}\left( t \right)},\label{a40}
      \end{align}}
      
  % \end{equation}
{\small
\begin{align}
  \sigma _{{n_{\rm{s}}}}^{\rm{r}}\left( t \right) & = {\left( {\sum_{k=1}^{N_{\rm{H}}} {{{\hat h}_{k,{n_{\rm{s}}}}}\left( t \right)\sigma _k^{\rm{s}}\left( t \right) }}\right.}{\left. {{- \sigma _{k,{n_{\rm{s}}}}^{\rm{h}}\left( t \right)\left( {\bar s_k^2\left( t \right) - \sigma _k^{\rm{s}}\left( t \right)} \right)} } \right)^{ - 1}},\label{a41}
  \end{align}}
  
Based on AMP \cite{Donoho2,Rangan,Jason}, ${{\bar r}_{{n_{\rm{s}}}}}\left( t \right)$ and $\sigma _{{n_{\rm{s}}}}^{\rm{r}}\left( t \right)$ are considered as an estimate of $x_{n_{\rm{s}}}$ as follows
\begin{equation}\small
  {\hat x}_{n_{\rm{s}}}\left( t+1 \right) = {G_{\bm{x}}}\left( {{{\bar r}_{{n_{\rm{s}}}}}\left( t \right),\sigma _{{n_{\rm{s}}}}^{\rm{r}}\left( t \right)} \right),\label{a42}
  \end{equation}
\begin{equation}\small
  \sigma _{{n_{\rm{s}}}}^{\rm{x}}\left( t+1 \right)  = \sigma _{{n_{\rm{s}}}}^{\rm{r}}\left( t \right){G'_{\bm{x}}}\left( {{{\bar r}_{{n_{\rm{s}}}}}\left( t \right),\sigma _{{n_{\rm{s}}}}^{\rm{r}}\left( t \right)} \right),\label{a43}
  \end{equation}
where ${G'_{\bm{x}}}$ represent the first derivative of the first argument of ${G_{\bm{x}}}$. Finally, we calculate the relationship between the mean of the passed message ${\bar x_{{n_{\rm{s}}},{n_{\rm{H}}}}}\left( {t + 1} \right)$ and the mean of the marginal distribution ${\hat x_{{n_{\rm{s}}}}}\left( {t + 1} \right)$ as follows
\begin{equation}\small
  \begin{split}
      &{{\bar x}_{{n_{\rm{s}}},{n_{\rm{H}}}}}\left( {t + 1} \right)\\ &
       \approx {G_{\bm{x}}}\left( {{{\bar r}_{{n_{\rm{s}}}}}\left( t \right) - \sigma _{{n_{\rm{s}}}}^{\rm{r}}\left( t \right){{\hat h}_{{n_{\rm{H}}},{n_{\rm{s}}}}}\left( t \right){{\bar s}_{{n_{\rm{H}}}}}\left( t \right),\sigma _{{n_{\rm{s}}}}^{\rm{r}}\left( t \right)} \right)\\ &
       \approx {G_{\bm{x}}}\left( {{{\bar r}_{{n_{\rm{s}}}}}\left( t \right),\sigma _{{n_{\rm{s}}}}^{\rm{r}}\left( t \right)} \right)\\ &
      - \sigma _{{n_{\rm{s}}}}^{\rm{r}}\left( t \right){{\hat h}_{{n_{\rm{H}}},{n_{\rm{s}}}}}\left( t \right){{\bar s}_{{n_{\rm{H}}}}}\left( t \right){{G'}_{\bm{x}}}\left( {{{\bar r}_{{n_{\rm{s}}}}}\left( t \right),\sigma _{{n_{\rm{s}}}}^{\rm{r}}\left( t \right)} \right)\\ &
       = {{\hat x}_{{n_{\rm{s}}}}}\left( {t + 1} \right) - {{\hat h}_{{n_{\rm{H}}},{n_{\rm{s}}}}}\left( t \right){{\bar s}_{{n_{\rm{H}}}}}\left( t \right)\sigma _n^{\rm{x}}\left( {t + 1} \right),\label{a44}
      \end{split}
  \end{equation}
in which we drop infinitesimal terms such as $\sigma _{{n_{\rm{s}}},{n_{\rm{H}}}}^{\rm{r}}\left( t \right) - {\sigma _{{n_{\rm{s}}}}^{\rm{r}}\left( t \right)}$ in the second step of approximation, and we perform Taylor series expansion on the first argument of the function at point ${{\bar r}_{{n_{\rm{s}}}}}\left( t \right)$ and drop the infinitesimal term in the second step of approximation.

In the $t$-th iteration, we consider that variable $\bm{x}$ is independent of variable $\bf{\tilde H}$ and recalculate the relationship between variables $\bf{\tilde H}$ and $\bm{x}$ after each iteration. Therefore, we approximate the passed message from variable node $\bf{\tilde H}$ to factor node $p_{{{\bf{\hat H}}^{\rm{S}}}|{{\bf{ H}}^{\rm{S}}}}( {{{\bf{\hat H}}^{\rm{S}}}|{{\bf{H}}^{\rm{S}}}} )$ the same way as (\ref{a35}), and calculate the mean ${\hat h_{{n_{\rm{H}}},{n_{\rm{s}}}}}\left( {t + 1} \right)$ and the variance $\sigma _{{n_{\rm{H}}},{n_{\rm{s}}}}^{\rm{h}}\left( {t + 1} \right)$ of the marginal distribution of variable $\bf{\tilde H}$ the same way as (\ref{a42}) and (\ref{a43}), i.e.,
\begin{equation}\small
  {\hat h_{{n_{\rm{H}}},{n_{\rm{s}}}}}\left( {t + 1} \right) = {G_{\bf{H}}}\left( {{{\bar q}_{{n_{\rm{H}}},{n_{\rm{s}}}}}\left( t \right),\sigma _{{n_{\rm{H}}},{n_{\rm{s}}}}^{\rm{q}}\left( t \right)} \right),\label{a45}
  \end{equation}
  \begin{equation}\small
  \sigma _{{n_{\rm{H}}},{n_{\rm{s}}}}^{\rm{h}}\left( {t + 1} \right) = \sigma _{{n_{\rm{H}}},{n_{\rm{s}}}}^{\rm{q}}\left( t \right){G'_{\bf{H}}}\left( {{{\bar q}_{{n_{\rm{H}}},{n_{\rm{s}}}}}\left( t \right),\sigma _{{n_{\rm{H}}},{n_{\rm{s}}}}^{\rm{q}}\left( t \right)} \right),\label{a46}
  \end{equation}
\begin{equation}\small
    \begin{split}
  &{G_{\bf{H}}}\left( {{{\bar q}_{{n_{\rm{H}}},{n_{\rm{s}}}}}\left( t \right),\sigma _{{n_{\rm{H}}},{n_{\rm{s}}}}^{\rm{q}}\left( t \right)} \right) \\ &\quad \quad \quad  \quad \quad = \frac{1}{C}\int_{\tilde h} {{\tilde h}_{{n_{\rm{H}}},{n_{\rm{s}}}}}{p_{{{\tilde h}_{{n_{\rm{H}}},{n_{\rm{s}}}}}|{\bm{x}}}}\left( {{{\tilde h}_{{n_{\rm{H}}},{n_{\rm{s}}}}}|{\bm{x}}} \right)\\ & \quad \quad \quad \quad \quad \times {\cal N} \left( {{{\tilde h}_{{n_{\rm{H}}},{n_{\rm{s}}}}};{{\bar q}_{{n_{\rm{H}}},{n_{\rm{s}}}}}\left( t \right),\sigma _{{n_{\rm{H}}},{n_{\rm{s}}}}^{\rm{q}}\left( t \right)} \right),\label{a47}
    \end{split}
  \end{equation}
where $C=\int_{\tilde h} {{p_{{{\tilde h}}|{\bm{x}}}}( {{{\tilde h}}|{\bm{x}}}) {\cal N}} ( {{{\tilde h}};{{\bar q}}\left( t \right),\sigma^{\rm{q}}\left( t \right)}) $ is a normalization constant, ${G'_{\bf{H}}}$ represent the first derivative of the first argument of ${G_{\bf{H}}}$, ${\bar q_{{n_{\rm{H}}},{n_{\rm{s}}}}}\left( t \right)$ and $\sigma _{{n_{\rm{H}}},{n_{\rm{s}}}}^{\rm{q}}\left( t \right)$ are considered as an estimate of ${{\tilde h}_{{n_{\rm{H}}},{n_{\rm{s}}}}}$ as follows
{\small \begin{align}
  & {\bar q_{{n_{\rm{H}}},{n_{\rm{s}}}}}\left( t \right) = {\hat h_{{n_{\rm{H}}},{n_{\rm{s}}}}}\left( t \right)\left( {1 + \sigma _{{n_{\rm{H}}},{n_{\rm{s}}}}^{\rm{q}}\left( t \right)\sigma _{{n_{\rm{s}}}}^{\rm{x}}\left( t \right)}\right.\nonumber\\ & \left.{
  \left( {\bar s_{{n_{\rm{H}}}}^2\left( t \right) - \sigma _{{n_{\rm{H}}}}^{\rm{s}}\left( t \right)} \right) + \sigma _{{n_{\rm{H}}},{n_{\rm{s}}}}^{\rm{q}}\left( t \right){{\bar x}_{{n_{\rm{s}}},{n_H}}}\left( t \right){{\bar s}_{{n_H}}}\left( t \right)} \right),\label{a48}
  \end{align}}
  \begin{equation}\small
  \sigma _{{n_{\rm{H}}},{n_{\rm{s}}}}^{\rm{q}}\left( t \right) = {\left( {\hat x_{{n_{\rm{s}}}}^2\left( t \right)\sigma _{{n_{\rm{H}}}}^{\rm{s}}\left( t \right) - \sigma _{{n_{\rm{s}}}}^{\rm{x}}\left( t \right)\left( {\bar s_{{n_{\rm{H}}}}^2\left( t \right) - \sigma _{{n_{\rm{H}}}}^{\rm{s}}\left( t \right)} \right)} \right)^{ - 1}}.\label{a49}
  \end{equation}

Similar to the approximation process of (\ref{a44}), we calculate the relationship between the mean of the passed message ${\bar h_{{n_{\rm{H}}}{n_{\rm{s}}},{n_{\rm{H}}}}}\left( {t + 1} \right)$ and the mean of the marginal distribution ${\hat h_{{n_{\rm{H}}},{n_{\rm{s}}}}}\left( {t + 1} \right)$ as follows
\begin{equation}\small
  {\bar h_{{n_{\rm{H}}}{n_{\rm{s}}},{n_{\rm{H}}}}}\left( {t + 1} \right) = {\hat h_{{n_{\rm{H}}},{n_{\rm{s}}}}}\left( {t + 1} \right) - {\hat x_{{n_{\rm{s}}}}}\left( t \right){\bar s_{{n_{\rm{H}}}}}\left( t \right)\sigma _{{n_{\rm{H}}},{n_{\rm{s}}}}^{\rm{h}}\left( t \right).\label{a50}
  \end{equation}

Since the environmental information $\bm{x}$ is an unknown variable, in the $t+1$-th iteration, we use ${\bm{\hat x}}\left( t \right)$ instead of $\bm{x}$ for occlusion detection, (\ref{a47}) is expressed as
{\small \begin{align}
  & {G_{\bf{H}}}\left( {{{\bar q}_{{n_{\rm{H}}},{n_{\rm{s}}}}}\left( t \right),\sigma _{{n_{\rm{H}}},{n_{\rm{s}}}}^{\rm{q}}\left( t \right)} \right) = \frac{1}{C}\int_{\tilde h} {{{\tilde h}_{{n_{\rm{H}}},{n_{\rm{s}}}}}} \nonumber\\ & { {p_{{{\tilde h}_{{n_{\rm{H}}},{n_{\rm{s}}}}}|{{\bm{\hat x}}\left( t \right)}}}\left( {{{\tilde h}_{{n_{\rm{H}}},{n_{\rm{s}}}}}|{{\bm{\hat x}}\left( t \right)}} \right){\cal N}} \left( {{{\tilde h}_{{n_{\rm{H}}},{n_{\rm{s}}}}};{{\bar q}_{{n_{\rm{H}}},{n_{\rm{s}}}}}\left( t \right),\sigma _{{n_{\rm{H}}},{n_{\rm{s}}}}^{\rm{q}}\left( t \right)} \right).\label{a51}
  \end{align}}

In summary, in this section we approximate the message from variable node to factor node as follows.
Firstly, we calculate ${{\bar r}_{{n_{\rm{s}}}}}\left( t \right)$ and ${\bar q_{{n_{\rm{H}}},{n_{\rm{s}}}}}\left( t \right)$ based on the residual ${{{\bar s}_{{n_H}}}\left( t \right)}$ (as shown in (\ref{a40}) and (\ref{a48})), and calculate the new estimated values ${\hat x_{{n_{\rm{s}}}}}\left( {t + 1} \right)$ and ${\hat h_{{n_{\rm{H}}},{n_{\rm{s}}}}}\left( {t + 1} \right)$ (as shown in (\ref{a42}) and (\ref{a45})).
Secondly, since the message from the variable node to the factor node is similar to the posterior probability, we approximate the relationship between the estimated value ${\hat x_{{n_{\rm{s}}}}}\left( {t + 1} \right)$, ${\hat h_{{n_{\rm{H}}},{n_{\rm{s}}}}}\left( {t + 1} \right)$ and the passed message ${\bar x_{{n_{\rm{s}}},{n_{\rm{H}}}}}\left( {t + 1} \right)$, ${\bar h_{{n_{\rm{H}}}{n_{\rm{s}}},{n_{\rm{H}}}}}\left( {t + 1} \right)$ (as shown in (\ref{a44}) and (\ref{a50})).

\begin{algorithm}[t]
  \small
  \caption{Proposed GAMP-MVSVR Algorithm}\label{sf1}
  \begin{algorithmic}[1]
  \REQUIRE
    Given the free space channel coefficient $\bf{H}$ and the estimated value of the uplink multipath channel ${{\bf{\hat H}}^{\rm{S}}}$.
  \STATE
    \textbf{Initialization}: For each ${n_{\rm{H}}}$ and ${n_{\rm{s}}}$, set the prior probability ${p_{\bm{x}}}\left( {\bm{x}} \right)$ of environmental information $\bm{x}$ in (\ref{a3}), and choose ${\hat x_{{n_{\rm{s}}}}\left( 1 \right)}$ and ${\sigma _{{n_{\rm{s}}}}^{\rm{x}}\left( 1 \right)}$.
    Set the likelihood $p_{{\hat h^{\rm{S}}_{n_{\rm{H}}}}|{h_{n_{\rm{H}}}^{\rm{S}}}}({{\hat h_{n_{\rm{H}}}^{\rm{S}}}|{h_{n_{\rm{H}}}^{\rm{S}}}})$ in (\ref{a2}). 
    Set ${\bar s_{{n_{\rm{H}}}}}\left( 0 \right)=0$.
  \STATE
    For each ${n_{\rm{H}}}$ and ${n_{\rm{s}}}$, plug ${\hat x_{{n_{\rm{s}}}}\left( 1 \right)}$ into (\ref{a4}), calculate the free space channel coefficient distribution ${p_{{{\tilde h}_{{n_{\rm{H}}},{n_{\rm{s}}}}}|{\bm{x}}\left(1\right)}}( {{{\tilde h}_{{n_{\rm{H}}},{n_{\rm{s}}}}}|{\bm{x}}\left(1\right)} )$ with occlusion, and choose ${{{\hat h}_{{n_{\rm{H}}},{n_{\rm{s}}}}}\left( t \right)}$ and $\sigma _{{n_{\rm{H}}},{n_{\rm{s}}}}^{\rm{h}}\left( t \right)$.
  \FOR {$t=1,2,\ldots,T_{\rm{max}}$}
  \STATE 
    For each ${n_{\rm{H}}}$, plug ${\hat x_{{n_{\rm{s}}}}\left( t \right)}$, ${\sigma _{{n_{\rm{s}}}}^{\rm{x}}\left( t \right)}$, ${{{\hat h}_{{n_{\rm{H}}},{n_{\rm{s}}}}}\left( t \right)}$, and $\sigma _{{n_{\rm{H}}},{n_{\rm{s}}}}^{\rm{h}}\left( t \right)$ into (\ref{a52}) and (\ref{a53}) to obtain ${\bar p_{{n_{\rm{H}}}}}\left( t \right)$ and $\sigma _{n_{\rm{H}}}^{\rm{p}}\left( t \right)$.
  \STATE 
    For each ${n_{\rm{H}}}$, plug ${\bar p_{{n_{\rm{H}}}}}\left( t \right)$, $\sigma _{n_{\rm{H}}}^{\rm{p}}\left( t \right)$, and $p_{{\hat h^{\rm{S}}_{n_{\rm{H}}}}|{h_{n_{\rm{H}}}^{\rm{S}}}}( {{\hat h_{n_{\rm{H}}}^{\rm{S}}}|{h_{n_{\rm{H}}}^{\rm{S}}}} )$ into (\ref{a28}) and (\ref{a29}) to obtain $\bar{h} _{{n_{\rm{H}}}}^{\rm{S}}\left( t \right)$ and $\sigma _{{n_{\rm{H}}}}^{{\rm{Hs}}}\left( t \right)$.
  \STATE 
    For each ${n_{\rm{H}}}$, plug $\hat h_{{n_{\rm{H}}}}^{\rm{S}}$, ${{\bar p}_{{n_{\rm{H}}}}\left( t \right)}$, and $\sigma _{{n_{\rm{H}}}}^{\rm{p}}\left( t \right)$ into (\ref{a32}) and (\ref{a33}) to obtain ${\bar s_{{n_{\rm{H}}}}}\left( t \right)$ and $\sigma _{{n_{\rm{H}}}}^{\rm{s}}\left( t \right)$.
  \STATE 
    For each ${n_{\rm{s}}}$, plug ${\hat x_{{n_{\rm{s}}}}}\left( t \right)$, ${\sigma _{{n_{\rm{s}}}}^{\rm{x}}\left( t \right)}$, ${{{\hat h}_{{n_{\rm{H}} },{n_{\rm{s}}}}}\left( t \right)}$, ${\sigma _{{n_{\rm{H}} },{n_{\rm{s}}}}^{\rm{h}}\left( t \right)}$, ${{{\bar s}_{n_{\rm{H}} }}\left( t \right)}$, and ${\sigma _{n_{\rm{H}} }^{\rm{s}}\left( t \right)}$ into (\ref{a54}) and (\ref{a55}) to obtain ${\bar r_{{n_{\rm{s}}}}}\left( t \right)$ and $\sigma _{{n_{\rm{s}}}}^{\rm{r}}\left( t \right)$.
  \STATE 
    For each ${n_{\rm{H}}}$ and ${n_{\rm{s}}}$, plug ${\hat x_{{n_{\rm{s}}}}}\left( t \right)$, ${\sigma _{{n_{\rm{s}}}}^{\rm{x}}\left( t \right)}$, ${{{\hat h}_{{n_{\rm{H}} },{n_{\rm{s}}}}}\left( t \right)}$, ${\sigma _{{n_{\rm{H}} },{n_{\rm{s}}}}^{\rm{h}}\left( t \right)}$, ${{{\bar s}_{n_{\rm{H}} }}\left( t \right)}$, and ${\sigma _{n_{\rm{H}} }^{\rm{s}}\left( t \right)}$ into (\ref{a56}) and (\ref{a57}) to obtain ${\bar q_{{n_{\rm{H}}},{n_{\rm{s}}}}}\left( t \right)$ and $\sigma _{{n_{\rm{H}}},{n_{\rm{s}}}}^{\rm{q}}\left( t \right)$.
  \STATE 
    For each ${n_{\rm{s}}}$, plug ${\bar r_{{n_{\rm{s}}}}}\left( t \right)$ and $\sigma _{{n_{\rm{s}}}}^{\rm{r}}\left( t \right)$ into (\ref{a42}) and (\ref{a43}) to obtain ${\hat x}_{n_{\rm{s}}}\left( t+1 \right)$ and $\sigma _{{n_{\rm{s}}}}^{\rm{x}}\left( t+1 \right)$.
  \STATE 
    For each ${n_{\rm{H}}}$ and ${n_{\rm{s}}}$, plug ${\bar q_{{n_{\rm{H}}},{n_{\rm{s}}}}}\left( t \right)$, $\sigma _{{n_{\rm{H}}},{n_{\rm{s}}}}^{\rm{q}}\left( t \right)$, and ${p_{{{\tilde h}_{{n_{\rm{H}}},{n_{\rm{s}}}}}|{\bm{x}}\left(t\right)}}( {{{\tilde h}_{{n_{\rm{H}}},{n_{\rm{s}}}}}|{\bm{x}}\left(t\right)} )$ into (\ref{a45}) and (\ref{a46}) to obtain ${\hat h_{{n_{\rm{H}}},{n_{\rm{s}}}}}\left( {t + 1} \right)$ and $\sigma _{{n_{\rm{H}}},{n_{\rm{s}}}}^{\rm{h}}\left( {t + 1} \right) $.
  \STATE 
    For each ${n_{\rm{H}}}$ and ${n_{\rm{s}}}$, plug ${\hat x_{{n_{\rm{s}}}}\left( t+1 \right)}$ into (\ref{a4}) to obtain the free space channel coefficient distribution with occlusion ${p_{{{\tilde h}_{{n_{\rm{H}}},{n_{\rm{s}}}}}|{\bm{\hat x}}\left(t+1\right)}}( {{{\tilde h}_{{n_{\rm{H}}},{n_{\rm{s}}}}}|{\bm{\hat x}}\left(t+1\right)} )$.
  \STATE
    If $\sum_{{n_{\rm{H}} }} {\left| {\hat h_{{n_{\rm{H}}}}^{\rm{S}}} - {\bar h_{{n_{\rm{H}}}}^{\rm{S}}}\left( {t} \right) \right|}  > \varepsilon_{\rm{t}} $, where $\varepsilon_{\rm{t}} $ is a given error tolerance value, stop the iteration.
  \ENDFOR
  \ENSURE
    Estimated sparse vector ${\hat x}_{n_{\rm{s}}}\left( t \right)$ and $\sigma _{{n_{\rm{s}}}}^{\rm{x}}\left( t \right)$.
  \end{algorithmic}
  \end{algorithm}

\subsection{Algorithm Iterative Execution}\label{it}
In Section \ref{yz2bl} and Section \ref{bl2yz}, we have obtained the estimation methods for variables ${\hat x_{{n_{\rm{s}}}}}\left( {t + 1} \right)$ and ${\hat h_{{n_{\rm{H}}},{n_{\rm{s}}}}}\left( {t + 1} \right)$, but the intermediate variables ${\bar p_{{n_{\rm{H}}}}}\left( t \right)$, ${{\bar r}_{{n_{\rm{s}}}}}\left( t \right)$, and ${\bar q_{{n_{\rm{H}}},{n_{\rm{s}}}}}\left( t \right)$ contain the passed messages ${\bar x_{{n_{\rm{s}}},{n_{\rm{H}}}}}\left( {t + 1} \right)$ and ${\bar h_{{n_{\rm{H}}}{n_{\rm{s}}},{n_{\rm{H}}}}}\left( {t + 1} \right)$. We need to replace the passed message with the current estimated variable ${\hat x_{{n_{\rm{s}}}}}\left( {t + 1} \right)$ and ${\hat h_{{n_{\rm{H}}},{n_{\rm{s}}}}}\left( {t + 1} \right)$ to calculate these intermediate variables again to achieve the iteration of the algorithm.

Pluging (\ref{a44}) and (\ref{a50}) into (\ref{a19}), we can get
{\small \begin{align}
  &{\bar p_{{n_{\rm{H}}}}}\left( t \right) \approx \sum\limits_{k = 1}^{{N_{\rm{s}}}} {{{\hat h}_{{n_{\rm{H}}},k}}\left( t \right){{\hat x}_k}\left( t \right)}  - {\bar s_{{n_{\rm{H}}}}}\left( {t - 1} \right) \nonumber \\ & \quad \times
  \sum\limits_{k = 1}^{{N_{\rm{s}}}} {\left( {\sigma _{{n_{\rm{H}}},k}^{\rm{h}}\left( t \right)\hat x_k^2\left( t \right) + \hat h_{{n_{\rm{H}}},k}^2\left( t \right)\sigma _k^{\rm{x}}\left( t \right)} \right)},\label{a52}
  \end{align}}
where we replace ${\hat x_k\left( t \right)}{\hat x_k\left( t-1 \right)}$ with ${\hat x_k^2\left( t \right)}$, replace ${\hat h_{{n_{\rm{H}}},k}\left( t \right)}{\hat h_{{n_{\rm{H}}},k}\left( t-1 \right)}$ with ${\hat h_{{n_{\rm{H}}},k}^2\left( t \right)}$, and drop the infinitesimal term under the CLT condition.
In most AMP algorithms \cite{Rangan,Jason}, (\ref{a52}) can be interpreted as the Onsager correction to ${{{\hat h}_{{n_{\rm{H}}},k}}\left( t \right){{\hat x}_k}\left( t \right)}$, and (\ref{a20}) is expressed as
{\small \begin{align}
  \sigma _{n_{\rm{H}}}^{\rm{p}}\left( t \right) \approx & \sum\limits_{k = 1}^{{N_{\rm{s}}}} {\left( {\sigma _{{n_{\rm{H}}},k}^{\rm{h}}\left( t \right)\hat x_k^2\left( t \right) }\right.}\nonumber\\ & {\left.{ + \hat h_{{n_{\rm{H}}},k}^2\left( t \right)\sigma _k^{\rm{x}}\left( t \right) + \sigma _{{n_{\rm{H}}},k}^{\rm{h}}\left( t \right)\sigma _k^{\rm{x}}\left( t \right)} \right)}.\label{a53}
\end{align}}

Similar to the approximate method in the (\ref{a52}), pluging (\ref{a44}) and (\ref{a50}) into (\ref{a40}) and (\ref{a41}), it yields
{\small \begin{align}
  {\bar r_{{n_{\rm{s}}}}}\left( t \right)  & \approx  {\hat x_{{n_{\rm{s}}}}}\left( t \right)\left( {1 - \sigma _{{n_{\rm{s}}}}^{\rm{r}}\left( t \right)\sum\limits_{k = 1}^{{N_{\rm{H}}}} {\sigma _{k,{n_{\rm{s}}}}^{\rm{h}}\left( t \right)\sigma _k^{\rm{s}}\left( t \right)} } \right)\nonumber \\ & + \sigma _{{n_{\rm{s}}}}^{\rm{r}}\left( t \right)\sum\limits_{k = 1}^{{N_{\rm{H}}}} {\left( {{{\hat h}_{k,{n_{\rm{s}}}}}\left( t \right){{\bar s}_k}\left( t \right)} \right)},\label{a54}
  \end{align}}
\begin{equation}\small
  \sigma _{{n_{\rm{s}}}}^{\rm{r}}\left( t \right) \approx {\left( {\sum\limits_{k = 1}^{{N_{\rm{H}}}} {\hat h_{k,{n_{\rm{s}}}}^2\left( t \right)\sigma _{{n_{\rm{s}}}}^{\rm{s}}\left( t \right)} } \right)^{ - 1}}.\label{a55}
  \end{equation}

And similar to the approximate method in the (\ref{a52}), pluging (\ref{a44}) and (\ref{a50}) into (\ref{a48}) and (\ref{a49}), it yields
{\small \begin{align}
  {\bar q_{{n_{\rm{H}}},{n_{\rm{s}}}}}\left( t \right) & \approx {\hat h_{{n_{\rm{H}}},{n_{\rm{s}}}}}\left( t \right)\left( {1 - \sigma _{{n_{\rm{H}}},{n_{\rm{s}}}}^{\rm{q}}\left( t \right)\sigma _{{n_{\rm{s}}}}^{\rm{x}}\left( t \right)\sigma _{{n_{\rm{H}}}}^{\rm{s}}\left( t \right)} \right) \nonumber\\ & + \sigma _{{n_{\rm{H}}},{n_{\rm{s}}}}^{\rm{q}}\left( t \right){\hat x_{{n_{\rm{s}}}}}\left( t \right){\bar s_{{n_{\rm{H}}}}}\left( t \right),\label{a56}
\end{align}}
\begin{equation}\small
  \sigma _{{n_{\rm{H}}},{n_{\rm{s}}}}^{\rm{q}}\left( t \right) \approx {\left( {\hat x_{{n_{\rm{s}}}}^2\left( t \right)\sigma _{{n_{\rm{H}}}}^{\rm{s}}\left( t \right)} \right)^{ - 1}}.\label{a57}
  \end{equation}

We summarize the proposed GAMP-MVSVR algorithm in Algorithm \ref{sf1}.
During the execution process of the GAMP-MVSVR algorithm, the approximate message passing and the occlusion detection method execute repeatedly and converge to the estimated value of environmental information.
The computational complexity of the GAMP-MVSVR algorithm consists of two parts: approximate message passing and occlusion detection method.
The computational complexity is expressed as $\mathcal{O}\left(N_{\rm{H}}N_{\rm{s}}+N_{\rm{H}}N_{\rm{s}}\left\lVert \bm{x}\right\rVert_0 \right)$, where $\left\lVert \bm{x}\right\rVert_0$ is the number of the environmental scatterers, the number of pixels $N_{\rm{s}}$ has a major influence on the computational complexity. 
The setting of $T_{\rm max}$ is to stop the iteration in time when the system performance is poor. Due to the small order of magnitude of iterations, the number of algorithm iterations will hardly reach $T_{\rm max}$, and we omit the term $T_{\rm max}$ when calculating the computational complexity.

\section{System Performance Analysis}\label{fx}
In this section, we compare the performance of the single-BS environment sensing scheme and the multi-BS joint environment sensing scheme from two aspects: environment sensing range and environment sensing accuracy.

\subsection{Environment Sensing Range}
In this section, we analyze the impact of the CS measurement matrix ${{\bf{\tilde H}}}$ on the system performance.
According to the CS reconstruction model in Section \ref{ys}, the environment sensing range is determined by the CS measurement matrix ${{\bf{\tilde H}}}$. Specifically, the number of 0 elements in the occlusion matrix ${{\bf{V}}}$, ${{\bf{V}}^{{\rm{U\rightarrow s}}}}$, and ${{\bf{V}}^{{\rm{s\rightarrow B}}}}$ is calculated to obtain the environment sensing range.

As mentioned in Section \ref{mx}, the number of pixels is $N_{\rm{s}}$, and the position of ${n_{\rm{s}}}$-th pixel is ${\bm{a}}_{n_{\rm{s}}}^{\rm{s}} = \left[x_{n_{\rm{s}}}^{\rm{s}},y_{n_{\rm{s}}}^{\rm{s}},z_{n_{\rm{s}}}^{\rm{s}}\right]$.
The number of users in the environment is $N_{\rm{u}}$, and the position of $n_{\rm{u}}$-th user is ${\bm{a}}_{n_{\rm{u}}}^{\rm{u}} = \left[x_{n_{\rm{u}}}^{\rm{u}},y_{n_{\rm{u}}}^{\rm{u}},z_{n_{\rm{u}}}^{\rm{u}}\right]$.
In the multi-BS joint environment sensing scheme, the number of BSs in the environment is $N_{\rm{B}}$, and the position of $n_{\rm{B}}$-th BS is ${\bm{a}}_{n_{\rm{B}}}^{\rm{B}} = \left[x_{n_{\rm{B}}}^{\rm{B}},y_{n_{\rm{B}}}^{\rm{B}},z_{n_{\rm{B}}}^{\rm{B}}\right]$. 
Under these definitions, the analysis of the occlusion matrix ${{\bf{V}}^{{\rm{U\rightarrow s}}}}$ and ${{\bf{V}}^{{\rm{s\rightarrow B}}}}$ is summarized in the following theorem.
\begin{theorem}
  The probability that the element in the occlusion matrix ${{\bf{V}}^{{\rm{U\rightarrow s}}}}$ is 0 can be expressed as
\begin{equation}\small
  \begin{split}
  &p\left( {{\bf{V}}_{{n_{\rm{u}}},{n_{\rm{s}}}}^{{\rm{U\rightarrow s}}} = 0} \right) = 1-\frac{{ {{\left\| {{{\bf{V}}^{{\rm{U\rightarrow s}}}}} \right\|}_1}}}{{{N_{\rm{u}}} \times {N_{\rm{s}}}}}\\ &\quad \quad \quad \quad \quad\quad = \int_{{\bm{a}}_{{n_{\rm{u}}}}^{\rm{u}},{\bm{a}}_{{n_{\rm{s}}}}^{\rm{s}}} {\frac{{{p_{\rm{u}}}\left( {{\bm{a}}_{{n_{\rm{u}}}}^{\rm{u}}} \right)\lambda l^2{{\left\| {{\bm{a}}_{{n_{\rm{u}}}}^{\rm{u}} - {\bm{a}}_{{n_{\rm{s}}}}^{\rm{s}}} \right\|}_2}}}{{{{\left( {{L_{\rm{s}}}{W_{\rm{s}}}{H_{\rm{s}}}} \right)}^2}}}}, \label{s1}
  \end{split}
\end{equation}
where ${{p_{\rm{u}}}\left( {{\bm{a}}_{{n_{\rm{u}}}}^{\rm{u}}} \right)}$ is the distribution function of $n_{\rm{u}}$-th user in the environment. For the receiving antenna deployed on the $n_{\rm{B}}$-th BS, the probability that the element in the occlusion matrix ${{\bf{V}}^{{\rm{s\rightarrow B}}}}\left( {{n_{\rm{R}}}} \right)$ is 0 can be expressed as 
\begin{equation}\small
  \begin{split}
  &p\left( {{\bf{V}}_{{n_{\rm{s}}}}^{{\rm{s\rightarrow B}}}\left( {{n_{\rm{R}}}} \right) = 0} \right) = 1-\frac{{ {{\left\| {{{\bf{V}}^{{\rm{s\rightarrow B}}}}} \right\|}_1}}}{{{N_{\rm{s}}}}}\\ &\quad \quad \quad \quad \quad = \int_{{\bm{a}}_{{n_{\rm{B}}}}^{\rm{B}},{\bm{a}}_{{n_{\rm{s}}}}^{\rm{s}}} {\frac{{{p_{\rm{B}}}\left( {{\bm{a}}_{{n_{\rm{B}}}}^{\rm{B}}} \right)\lambda l^2{{\left\| {{\bm{a}}_{{n_{\rm{B}}}}^{\rm{B}} - {\bm{a}}_{{n_{\rm{s}}}}^{\rm{s}}} \right\|}_2}}}{{{{\left( {{L_{\rm{s}}}{W_{\rm{s}}}{H_{\rm{s}}}} \right)}^2}}}}, \label{s3}
  \end{split}
\end{equation}
where ${{p_{\rm{B}}}\left( {{\bm{a}}_{{n_{\rm{B}}}}^{\rm{B}}} \right)}$ is the distribution function of $n_{\rm{B}}$-th BS in the environment.
  \label{th1}
  \end{theorem}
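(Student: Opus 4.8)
The plan is to translate the three geometric occlusion conditions of Section~\ref{zd} into the volume of an ``occlusion tube'' and then average that volume, weighted by the scatterer probability, over the random positions of the users/BSs and pixels. First note that the left-hand equality in (\ref{s1}) is essentially bookkeeping: since the entries of ${\bf V}^{\mathrm{U\rightarrow s}}$ lie in $\{0,1\}$, $\|{\bf V}^{\mathrm{U\rightarrow s}}\|_1$ counts the unblocked entries, so $1-\|{\bf V}^{\mathrm{U\rightarrow s}}\|_1/(N_{\rm u}N_{\rm s})$ is the empirical fraction of blocked entries, which over the ensemble of random user/pixel positions and random scattering coefficients equals $p({\bf V}^{\mathrm{U\rightarrow s}}_{n_{\rm u},n_{\rm s}}=0)$. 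It therefore suffices to compute the probability that the segment between a user at $\bm u={\bm a}^{\rm u}_{n_{\rm u}}$ and a pixel at $\bm s={\bm a}^{\rm s}_{n_{\rm s}}$ is occluded.

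Conditioning on $\bm u$ and $\bm s$, I would first isolate as a small geometric lemma the set of blocker positions satisfying all three tests of Section~\ref{zd} with $\bm b=\bm s-\bm u$: the test $|\bm c\times\bm b|/|\bm b|<l$ defines a cylinder of radius $l$ about the line $\bm u\bm s$, while $\bm b\cdot\bm c>0$ and $|\bm c\cdot\bm b|<|\bm b|^2$ restrict the axial coordinate of $\bm c$ to the interval $(0,|\bm b|)$; hence the admissible region is the right circular cylinder of radius $l$ and height $\|\bm u-\bm s\|_2$, of volume $\pi l^2\|\bm u-\bm s\|_2$, which I approximate by $l^2\|\bm u-\bm s\|_2$ consistently with the range chosen for $l$ in Section~\ref{zd} (treating the effective cross-section as order $l^2$ and neglecting end effects of relative size $l/\|\bm u-\bm s\|_2$). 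Next, modelling the pixel occupancies as the independent Bernoulli$(\lambda)$ events of the prior (\ref{a3}), the path is occluded iff at least one pixel inside this tube carries a scatterer above the threshold $\eta=\theta^{\rm x}/4$; in the dilute/large-system regime this ``at least one'' probability equals, to leading order in the scatterer density, the expected number of such pixels, i.e. $\lambda$ times the fraction of the total volume occupied by the tube, $\lambda\,l^2\|\bm u-\bm s\|_2/(L_{\rm s}W_{\rm s}H_{\rm s})$, after discarding the higher-order-in-$\lambda$ corrections and the pixel-counting prefactor fixed by the same large-system normalisation already used in Section~\ref{yz2bl}.

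The final step removes the conditioning: integrating against the user density $p_{\rm u}({\bm a}^{\rm u}_{n_{\rm u}})$ and against the uniform density $1/(L_{\rm s}W_{\rm s}H_{\rm s})$ of the target-pixel position over the sensed volume produces exactly the double integral of (\ref{s1}), with $(L_{\rm s}W_{\rm s}H_{\rm s})^2$ in the denominator. Equation (\ref{s3}) then requires no new argument: the scatterer-to-antenna path has an occlusion tube of identical form with the user endpoint replaced by the receiving antenna of BS $n_{\rm B}$, so repeating the same three steps with $p_{\rm u}$ replaced by the BS position density $p_{\rm B}$ and ${\bm a}^{\rm u}_{n_{\rm u}}$ replaced by ${\bm a}^{\rm B}_{n_{\rm B}}$ yields (\ref{s3}) verbatim.

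The main obstacle is the passage from the discrete ``some pixel in the tube is a scatterer'' event to the continuous tube-volume expression: it needs both the clean geometric identification of the admissible cylinder (with a bound on its discretisation/boundary error) and the justification of the independence and dilute-limit (Poissonisation) approximations that linearise the blocking probability in $\lambda$ --- the same flavour of asymptotic reasoning already invoked for the CLT/Gaussian steps of the GAMP-MVSVR derivation. Once those approximations are granted, the position averaging in the last step is routine.
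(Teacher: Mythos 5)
Your proposal follows essentially the same route as the paper's own (very terse) proof: identify the occlusion region between the two endpoints as a tube of effective volume $l^2\left\lVert {\bm a}^{\rm u}_{n_{\rm u}}-{\bm a}^{\rm s}_{n_{\rm s}}\right\rVert_2$ from the three detection conditions of Section III-A, take the blocking probability to be $\lambda$ times that volume's fraction of $L_{\rm s}W_{\rm s}H_{\rm s}$, average over the independent user and (uniform) pixel positions to obtain the double integral with $(L_{\rm s}W_{\rm s}H_{\rm s})^2$ in the denominator, and obtain (\ref{s3}) by the symmetric substitution of the BS density for the user density. You actually supply more detail than the paper does (the explicit cylinder identification, the dilute-limit linearisation in $\lambda$, and the source of the second volume factor), so the proposal is correct and, if anything, more complete.
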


\begin{proof}
  According to the occlusion detection model in Section \ref{zd}, when there is a scatterer in the space of size $l^2{{\left\| {{\bm{a}}_{{n_{\rm{u}}}}^{\rm{u}} - {\bm{a}}_{{n_{\rm{s}}}}^{\rm{s}}} \right\|}_2}$ between the $n_{\rm{u}}$-th user and the ${n_{\rm{s}}}$-th pixel, the EM propagation path is blocked. 
  The location distribution function ${{p_{\rm{u}}}\left( {{\bm{a}}_{{n_{\rm{u}}}}^{\rm{u}}} \right)}$ of each user is independent, and the division of pixels is uniform.
  Therefore, in the space of size  ${{L_{\rm{s}}}{W_{\rm{s}}}{H_{\rm{s}}}}$, the probability of a scatterer between the $n_{\rm{u}}$-th user and the ${n_{\rm{s}}}$-th pixel is denoted as (\ref{s1}).
  In the occlusion detection model, the analysis methods of the BS and the user are the same. Therefore, according to (\ref{s1}), replacing ${{p_{\rm{u}}}\left( {{\bm{a}}_{{n_{\rm{u}}}}^{\rm{u}}} \right)}$ with ${{p_{\rm{B}}}\left( {{\bm{a}}_{{n_{\rm{B}}}}^{\rm{B}}} \right)}$ and replacing ${\bm{a}}_{n_{\rm{s}}}^{\rm{s}}$ with ${\bm{a}}_{n_{\rm{B}}}^{\rm{B}}$, we obtain (\ref{s3}).
\end{proof}

The occlusion matrix ${{\bf{V}}^{{\rm{U\rightarrow s}}}}$ reflects the different views from different users. 
In the single-BS environment sensing scheme and the multi-BS joint environment sensing scheme, the multi-views from users are the same.
As mentioned in Section \ref{mx}, when ${{\bf{V}}^{{\rm{U\rightarrow s}}}}$ contains an all-zero column, it means that all users cannot sense the corresponding pixel, and the pixel is out of the user sensing range.
According to Theorem \ref{th1}, when the number of users is ${N_{\rm{u}}}$, the probability that the pixels in the environment are outside the user sensing range is
\begin{equation}\small
  p\left( {{\bf{V}}_{:,{n_{\rm{s}}}}^{{\rm{U\rightarrow s}}} = \bf{0}} \right) = p^{N_{\rm{u}}}\left( {{\bf{V}}_{{n_{\rm{u}}},{n_{\rm{s}}}}^{{\rm{U\rightarrow s}}} = 0} \right). \label{s2}
\end{equation}

Let ${{\bf{V}}^{{\rm{s\rightarrow B}}}} = \left[{{\bf{V}}^{{\rm{s\rightarrow B}}}}\left( 1 \right),{{\bf{V}}^{{\rm{s\rightarrow B}}}}\left( 2 \right),\ldots,{{\bf{V}}^{{\rm{s\rightarrow B}}}}\left( {{N_{\rm{R}}}} \right)\right] \in \left\{0,1\right\} ^{N_{\rm{s}} \times N_{\rm{R}}}$ and the occlusion matrix ${{\bf{V}}^{{\rm{s\rightarrow B}}}}$ reflects the different views from different BSs.
In the single-BS environment sensing scheme, since the receiving antennas are deployed close to each other, they have almost the same views to the scatterers in the environment. Therefore, the probability that a pixel is outside the sensing range of a single BS is $p\left( {{\bf{V}}_{{n_{\rm{s}}}}^{{\rm{s\rightarrow B}}}\left( {{n_{\rm{R}}}} \right) = 0} \right)$.
In the multi-BS joint environment sensing scheme, when ${{\bf{V}}^{{\rm{s\rightarrow B}}}}$ contains an all-zero row, it means that all BSs cannot sense the corresponding pixel, and the pixel is out of the BS sensing range.
According to Theorem \ref{th1}, since the location distribution probability of each BS is independent, the probability that a pixel in the environment is outside the sensing range of the BS is
\begin{equation}\small
  p\left( {{\bf{V}}_{{n_{\rm{s}}},:}^{{\rm{s\rightarrow B}}} = \bf{0}} \right) = p^{N_{\rm{B}}}\left( {{\bf{V}}_{{n_{\rm{s}}}}^{{\rm{s\rightarrow B}}}\left( {{n_{\rm{R}}}} \right) = 0} \right). \label{s4}
\end{equation}

According to (\ref{s1}), (\ref{s3}), (\ref{s2}), and (\ref{s4}), we calculate the mean value $\bar{N}_{\rm{con}}$ of the number of pixels that cannot be sensed in the single-BS environment sensing scheme and the mean value $\bar{N}_{\rm{dis}}$ of the number of pixels that cannot be sensed in the multi-BS joint environment sensing scheme as
{\small \begin{align}
  \bar{N}_{\rm{con}}& =N_{\rm{s}}\left({ p^{N_{\rm{u}}}\left( {{\bf{V}}_{{n_{\rm{u}}},{n_{\rm{s}}}}^{{\rm{U\rightarrow s}}} = 0} \right)+ p\left( {{\bf{V}}_{{n_{\rm{s}}}}^{{\rm{s\rightarrow B}}}\left( {{n_{\rm{R}}}} \right) = 0} \right)} \right. \nonumber\\ & \left. {- p^{N_{\rm{u}}}\left( {{\bf{V}}_{{n_{\rm{u}}},{n_{\rm{s}}}}^{{\rm{U\rightarrow s}}} = 0} \right)p\left( {{\bf{V}}_{{n_{\rm{s}}}}^{{\rm{s\rightarrow B}}}\left( {{n_{\rm{R}}}} \right) = 0} \right)} \right),\label{s6}
\end{align}}
{\small \begin{align}
  \bar{N}_{\rm{dis}} & = N_{\rm{s}}\left({ p^{N_{\rm{u}}}\left( {{\bf{V}}_{{n_{\rm{u}}},{n_{\rm{s}}}}^{{\rm{U\rightarrow s}}} = 0} \right)+ p^{N_{\rm{B}}}\left( {{\bf{V}}_{{n_{\rm{s}}}}^{{\rm{s\rightarrow B}}}\left( {{n_{\rm{R}}}} \right) = 0} \right)} \right. \nonumber \\ - & p^{N_{\rm{u}}}\left( {{\bf{V}}_{{n_{\rm{u}}},{n_{\rm{s}}}}^{{\rm{U\rightarrow s}}} = 0} \right)p^{N_{\rm{B}}}\left( {{\bf{V}}_{{n_{\rm{s}}}}^{{\rm{s\rightarrow B}}}\left( {{n_{\rm{R}}}} \right) = 0} \right) \Big).\label{s7}
\end{align}}

The multi-BS joint environment sensing has a larger sensing range than single-BS environment sensing. In practical applications, a multi-BS scheme should be implemented according to the BS location. 
When the multi-BS scheme is difficult to implement, e.g., when the BSs are far away and users cannot communicate with each BS, a single-BS scheme should be used. In addition, if there is a failure of environment sensing for some pixels, a larger number of base stations and users or users after moving locations should be used for joint sensing.

\subsection{Environment Sensing Accuracy} \label{gzzq}
In this section, we analyze the impact of environmental information $\bm{x}$ on system performance. The mean square error (MSE) between the estimated environmental information $\bm{\hat x}$ and the real environmental information $\bm{x}$ is used to describe the accuracy of environment sensing, i.e., 
\begin{equation}\small
  {\rm{MSE}} = \frac{1}{N_{\rm{s}}}\left\| {{\bm{x}} - {\bm{\hat x}}} \right\|_2^2.\label{s9}
\end{equation}

According to the CS reconstruction model in Section \ref{ys} (\ref{q1}), the reconstruction problem in (\ref{q2}) is expressed as
\begin{equation}\small
  {\bm{\hat x}} = \arg \mathop {\min }\limits_{\bm{x}} {\left\| {\bm{x}} \right\|_1}\quad \quad {\rm{s}}{\rm{.t}}{\rm{.}}\quad {\left\| {{\bf{H}}^{\rm{S}}} - {\bf{\tilde H}}{\bm x} \right\|_2} \le \varepsilon ,\label{s10}
\end{equation}
where $\varepsilon$ is a slack variable.
We consider that the performance of environment sensing methods for solving the occlusion problem should be as close as possible to the upper limit in Theorem \ref{th2}.

\begin{figure*}[t]
  \centering
  \begin{minipage}[t]{0.48\textwidth}
      \centering
      \subfigure[\footnotesize The original environment scatterer distribution.]{
      \includegraphics[width=4cm]{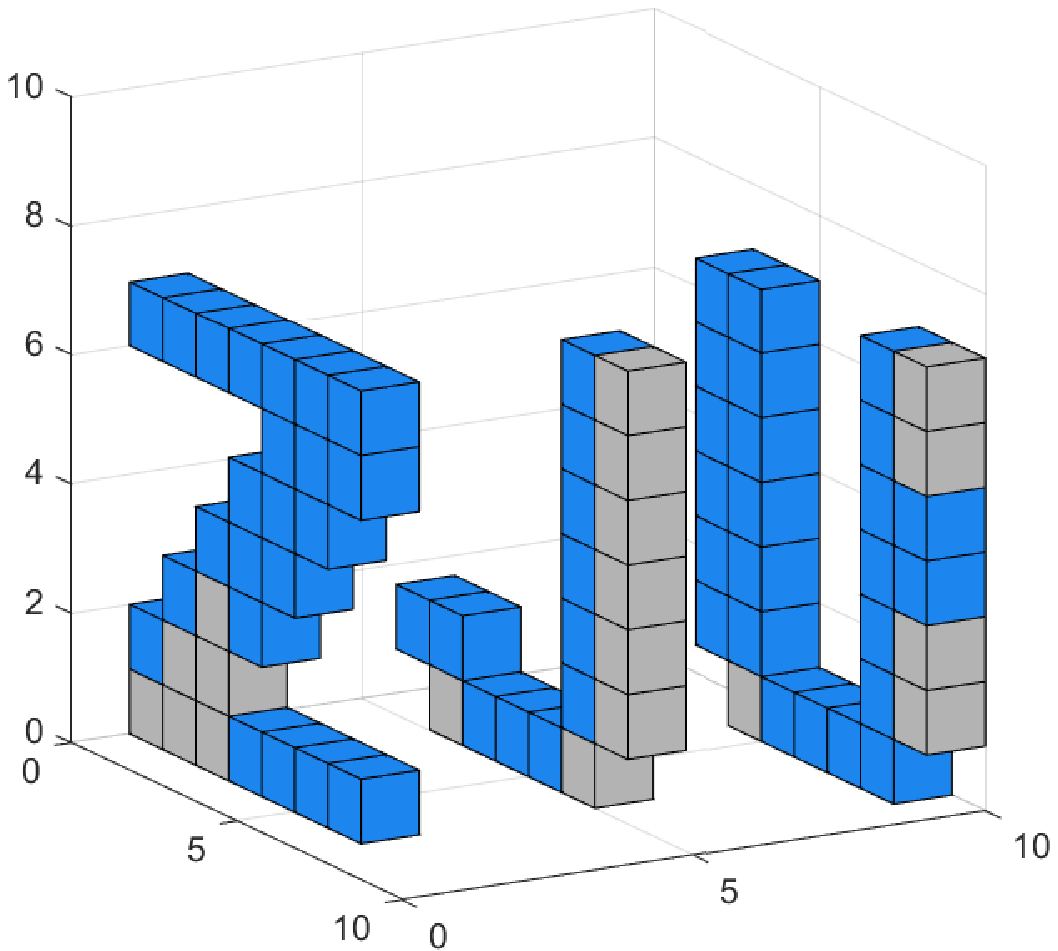}}
      \subfigure[\footnotesize The result of the GAMP algorithm.]{
      \includegraphics[width=4cm]{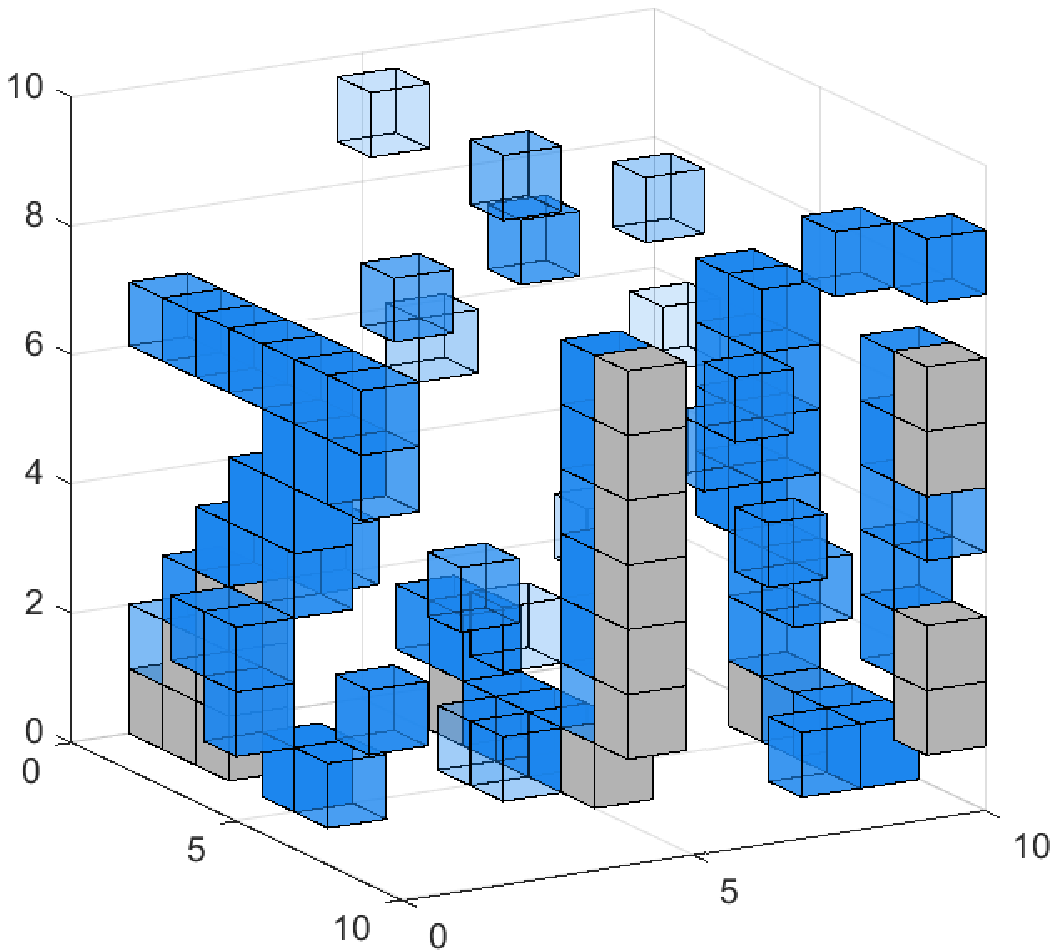}}
    
      \subfigure[\footnotesize The result of the Bilinear GAMP algorithm.]{
      \includegraphics[width=4cm]{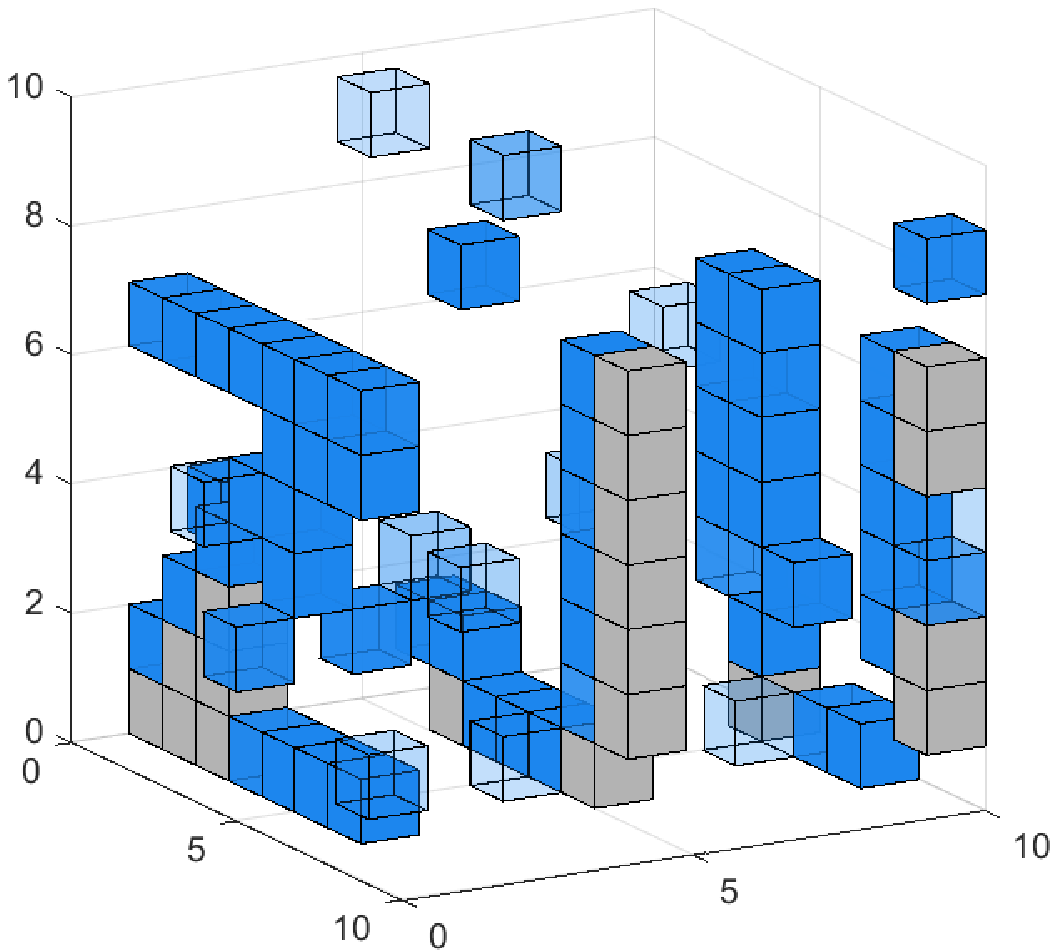}}
      \subfigure[\footnotesize The result of the proposed GAMP-MVSVR algorithm.]{
      \includegraphics[width=4cm]{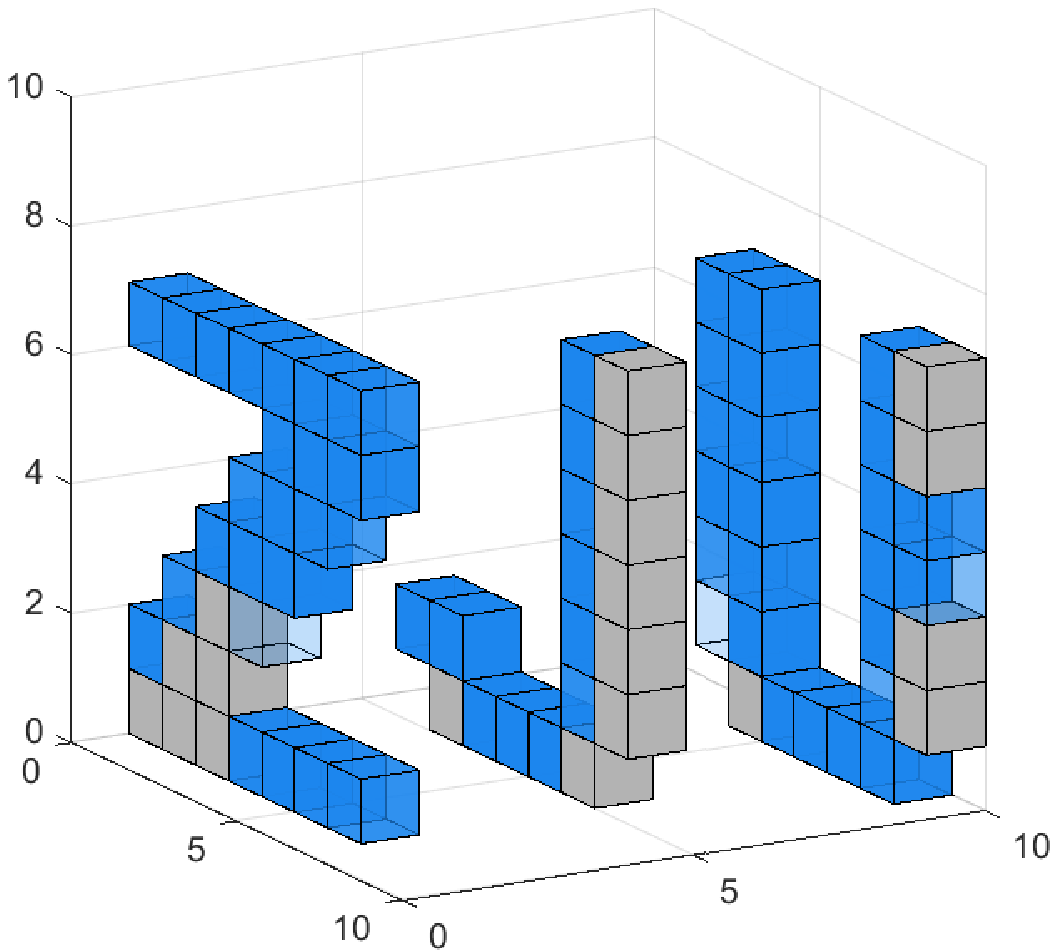}}
      \caption{The original environment scatterer distribution and single-BS environment sensing results.}
      \label{figa1}
  \end{minipage}
  \begin{minipage}[t]{0.48\textwidth}
      \centering
      \subfigure[\footnotesize The original environment scatterer distribution.]{
      \includegraphics[width=4cm]{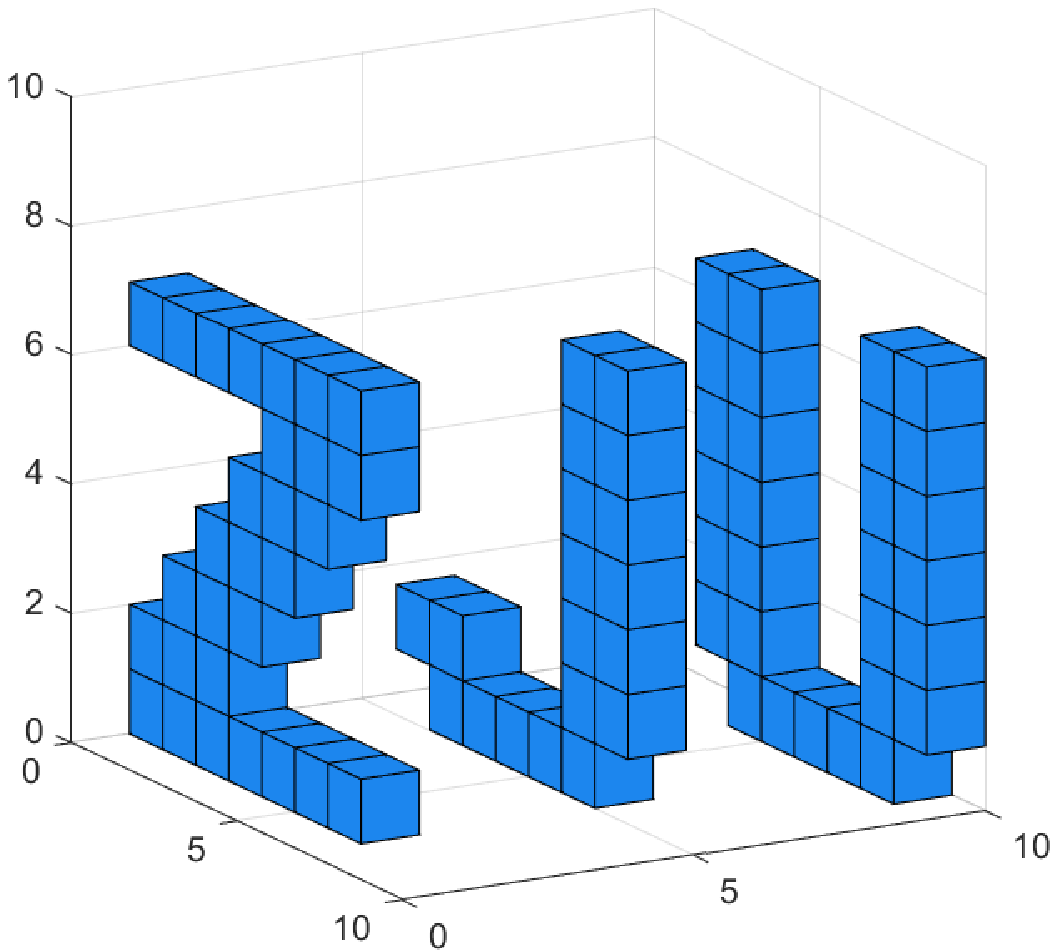}}
      \subfigure[\footnotesize The result of the GAMP algorithm.]{
      \includegraphics[width=4cm]{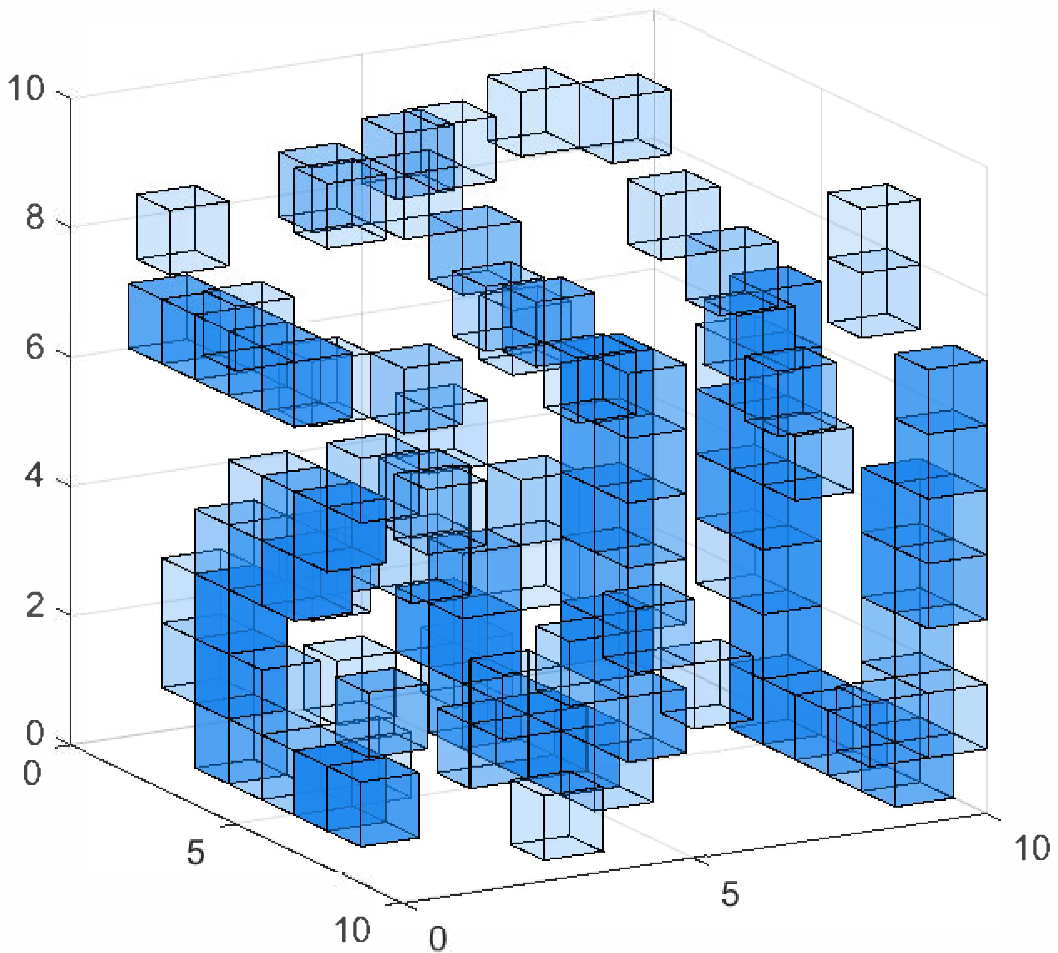}}
    
      \subfigure[\footnotesize The result of the Bilinear GAMP algorithm.]{
      \includegraphics[width=4cm]{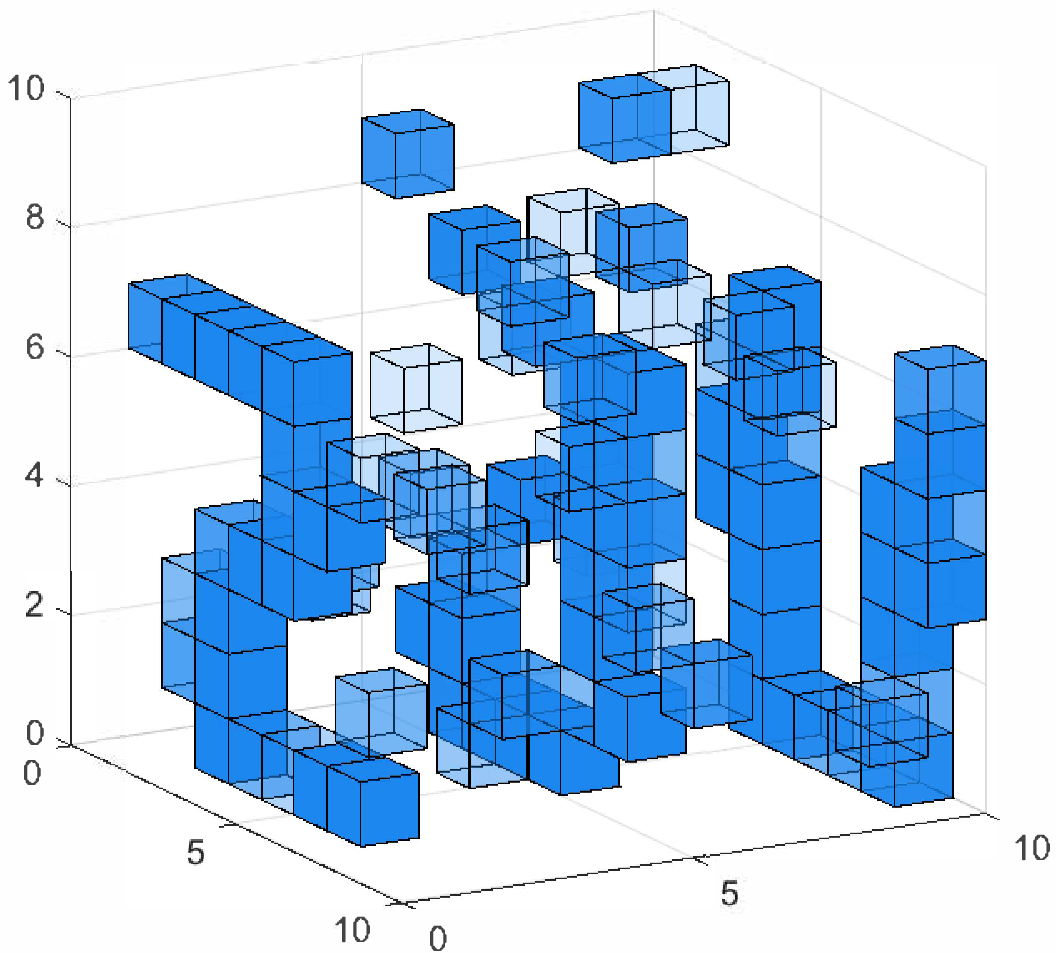}}
      \subfigure[\footnotesize The result of the proposed GAMP-MVSVR algorithm.]{
      \includegraphics[width=4cm]{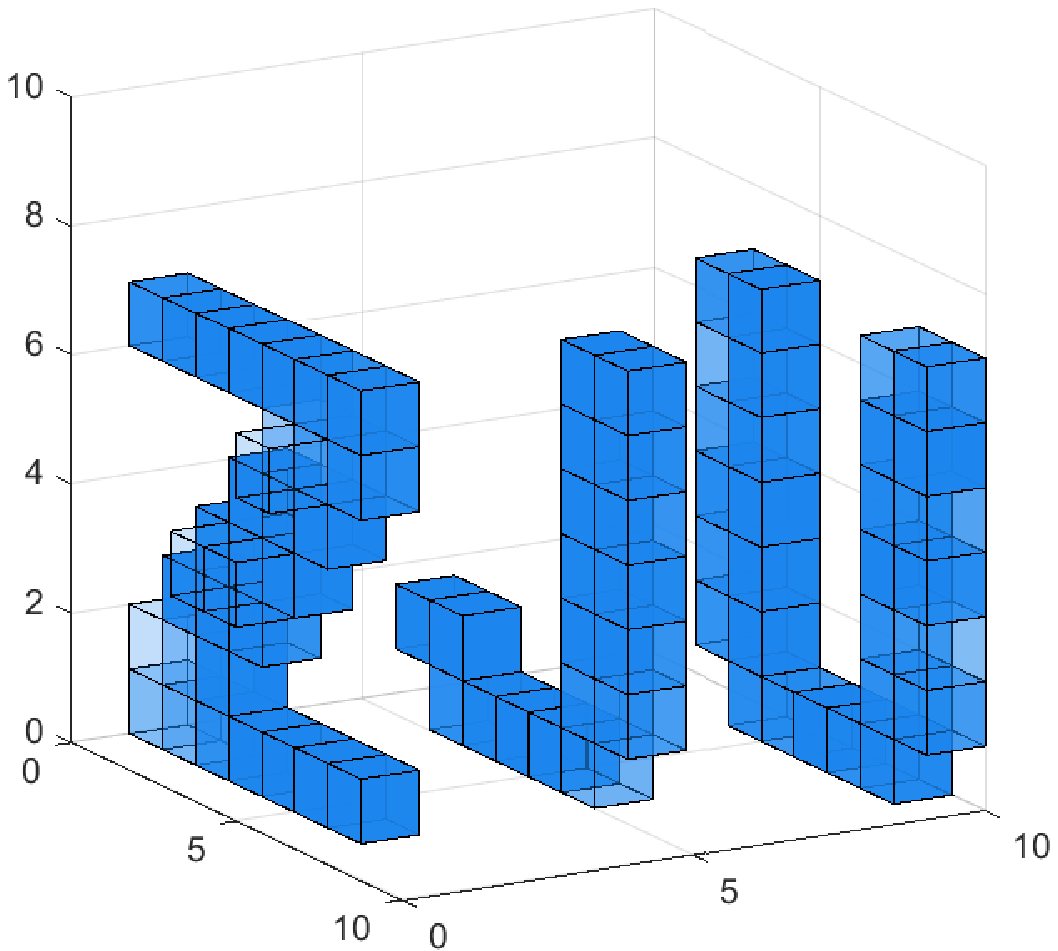}}
      \caption{The original environment scatterer distribution and multi-BS joint environment sensing results.}
      \label{figa2}
  \end{minipage}
\end{figure*}

\begin{theorem}
The upper bound of the environment sensing error when the CS measurement matrix ${{\bf{\tilde H}}}={\bf{H}}$ under the ideal (no occlusion, the occlusion detection is completely correct) situation is
\begin{equation}\small
  {\rm{MSE}} \le \frac{ m \cdot R^{2} \cdot \log {N_{\rm{s}}}}{N_{\rm{s}} \cdot N_{\rm{u}} \cdot N_{\rm{R}}} ,\label{s5}
\end{equation}
where $m$ is a normalized constant, and
{\small \begin{align}
    R = \left\{ \begin{array}{l}
   \! R_{\rm{con}} = \left\lVert \bm{x}_{\rm{con}}\right\rVert_1 = \lambda \theta^{\rm{x}}(N_{\rm{s}} - {\bar N}_{\rm{con}}), \\
   \quad\quad\quad\quad\quad\quad\quad\quad\quad\text{for single-BS scheme},\\
   \! R_{\rm{dis}} = \left\lVert \bm{x}_{\rm{dis}}\right\rVert_1 = \lambda \theta^{\rm{x}} (N_{\rm{s}} - {\bar N}_{\rm{dis}}), \\
   \quad\quad\quad\quad\quad\quad\quad\quad\quad\text{for multi-BS scheme}.
    \end{array} \right.
\end{align}}

% In the single-BS environment sensing scheme, let $R = R_{\rm{con}} = \left\lVert \bm{x}_{\rm{con}}\right\rVert_1 = \lambda \theta^{\rm{x}}(N_{\rm{s}} - {\bar N}_{\rm{con}}) $.
% In the multi-BS joint environment sensing scheme, let $R = R_{\rm{dis}} = \left\lVert \bm{x}_{\rm{dis}}\right\rVert_1 = \lambda \theta^{\rm{x}} (N_{\rm{s}} - {\bar N}_{\rm{dis}}) $. 
\label{th2}
\end{theorem}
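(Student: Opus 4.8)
The plan is to reduce the ideal case $\tilde{\mathbf{H}}=\mathbf{H}$ to a classical compressed-sensing recovery problem and then invoke the standard $\ell_1$-minimization error guarantees. Under perfect occlusion detection the measurement matrix in (\ref{s10}) is exactly $\mathbf{H}\in\mathbb{C}^{N_{\rm u}N_{\rm R}\times N_{\rm s}}$, so (\ref{s10}) is precisely the $\ell_1$-constrained least-squares problem with $M=N_{\rm u}N_{\rm R}$ measurements, ambient dimension $N_{\rm s}$, and the (approximately $\lambda N_{\rm s}$-sparse) nonnegative signal $\bm{x}$. Since $\bm{x}$ itself is feasible for (\ref{s10}), the minimizer obeys $\|\hat{\bm{x}}\|_1\le\|\bm{x}\|_1$, so the error $\bm{h}=\hat{\bm{x}}-\bm{x}$ lies in the usual descent cone $\|\bm{h}_{\bar T}\|_1\le\|\bm{h}_{T}\|_1$ attached to the support $T$ of the dominant scatterers of $\bm{x}$.

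First I would argue that $\mathbf{H}$ behaves like a random sensing matrix: because the free-space coefficients carry effectively uniform phases $\varphi^{\rm U\to s}$ and $\varphi^{\rm s\to B}$, after column normalization $\mathbf{H}$ satisfies, with high probability, a restricted isometry (equivalently, restricted-eigenvalue / null-space) property of order $\propto M/\log N_{\rm s}$; the $\log N_{\rm s}$ is the price of the union bound over the $\binom{N_{\rm s}}{s}$ candidate supports. Second, combining this property with the cone condition on $\bm{h}$ and the radius of the feasibility tube gives the canonical ``slow-rate'' bound $\|\hat{\bm{x}}-\bm{x}\|_2^2\le C\,\frac{\log N_{\rm s}}{M}\,\|\bm{x}\|_1^2$ for an absolute constant $C$, which does not presuppose the exact sparsity level. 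Dividing by $N_{\rm s}$ and writing $M=N_{\rm u}N_{\rm R}$ yields ${\rm MSE}\le \frac{m\,\|\bm{x}\|_1^2\,\log N_{\rm s}}{N_{\rm s}N_{\rm u}N_{\rm R}}$, which is (\ref{s5}) with $R=\|\bm{x}\|_1$.

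It remains to identify $R$ in terms of the system parameters, which is where Theorem~\ref{th1} and (\ref{s6})--(\ref{s7}) enter. Pixels outside the sensing range contribute nothing recoverable, so the relevant $\ell_1$ budget is that of the in-range scatterers: under the Bernoulli--Gaussian prior (\ref{a3}) each in-range pixel is occupied with probability $\lambda$ and then carries expected magnitude $\theta^{\rm x}$, while the number of in-range pixels is $N_{\rm s}-\bar{N}_{\rm con}$ for the single-BS scheme and $N_{\rm s}-\bar{N}_{\rm dis}$ for the multi-BS scheme. Taking expectations gives $R_{\rm con}=\lambda\theta^{\rm x}(N_{\rm s}-\bar{N}_{\rm con})$ and $R_{\rm dis}=\lambda\theta^{\rm x}(N_{\rm s}-\bar{N}_{\rm dis})$, as claimed.

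The main obstacle is the first step: $\mathbf{H}$ is not an i.i.d.\ matrix but a structured product of steering-type factors with deterministic path-loss weights, so the restricted isometry property must be argued from the randomness of the phases and the scatterer geometry (or imported from the structured-CS literature), and pinning down both the absolute constant absorbed into $m$ and the regime of $M$ versus $s\log N_{\rm s}$ in which the bound is non-trivial is the delicate part. A secondary point is calibrating the slack $\varepsilon$ in (\ref{s10}) to the Gaussian estimation error $\mathbf{W}_{\rm e}$, so that $\bm{x}$ remains feasible while the noise contribution is absorbed into the constant.
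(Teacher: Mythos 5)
Your proposal follows essentially the same route as the paper: the paper's proof simply cites the standard Donoho $\ell_1$-recovery bound $\|\bm{x}-\bm{\hat x}\|_2 \le m\cdot R\cdot(N_{\rm H}/\log N_{\rm s})^{1/2-1/p_{\rm s}}$ with $p_{\rm s}=1$ (which is exactly your ``slow-rate'' inequality), squares it, divides by $N_{\rm s}$, and identifies $R=\|\bm{x}\|_1=\lambda\theta^{\rm x}(N_{\rm s}-\bar N)$ just as you do. Your additional care about why the structured matrix $\mathbf{H}$ should satisfy a restricted-isometry-type condition and about calibrating $\varepsilon$ is a genuine refinement of a step the paper leaves entirely implicit, but it does not change the argument's structure.
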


\begin{proof}
According to the CS theory \cite{Donoho},
\begin{equation}\small
  \left\lVert {\bm{x}} - {\bm{\hat x}}\right\rVert_2 \leq m \cdot R \cdot \left(N_{\rm{H}}/ \log{N_{\rm{s}}}\right)^{1/2-1/{p_{\rm{s}}}},
  \label{s8}
\end{equation}
where, $\left\lVert \bm{x}\right\rVert_{p_{\rm{s}}} \leq R$. Pluging (\ref{s9}) and (\ref{s10}) into (\ref{s8}), we obtain (\ref{s5}). According to (\ref{s10}), let ${p_{\rm{s}}} = 1$, we get the values of $R_{\rm{con}}$ and $R_{\rm{dis}}$.
\end{proof}

\begin{remark}
 When the number of BS antennas or the number of users increases, the upper bound of environment sensing error MSE decreases.
When the environment sparsity $\lambda$ and other system parameters remain constant, the number of pixels $N_{\rm{s}}$ increases (this means a wider sensing range or higher sensing resolution), the upper bound of environment sensing error MSE increases. Therefore, wider and more accurate sensing results require more resource costs.
\end{remark}

According to (\ref{s6}) and (\ref{s7}), under the condition that the number of users is the same and the total number of antennas of all BSs is the same, the number of environmental scatterers sensed by multiple BSs is more than that of a single BS, $R_{\rm{dis}} > R_{\rm{con}}$.
Therefore, according to (\ref{s5}), for the scatterers in the sensing range, the upper bound of MSE of multi-BS joint environment sensing is higher than that of the single-BS scheme.
However, in practical applications, the total number of antennas deployed by multiple BSs is usually more than that of a single BS. The increase in the number of antennas $N_{\rm{R}}$ will significantly improve system performance.

We analyze the probability distribution of the occlusion matrix caused by the environmental scatterers as shown in Corollary \ref{th3}.

\begin{corollary}
  The elements in the occlusion matrix $\bf V$ conform to the Bernoulli distribution. The probability that the element in the occlusion matrix ${{\bf{V}}}_{{n_{\rm{u}}},{n_{\rm{s}}}}\left(N_{\rm R}\right)$ is 0 can be expressed as
  {\small
  \begin{align} 
      &p\left( {{\bf{V}}_{{n_{\rm{u}}},{n_{\rm{s}}}}\left(N_{\rm R}\right) = 0} \right) \nonumber \\ =& p\left( {{\bf{V}}_{{n_{\rm{u}}},{n_{\rm{s}}}}^{{\rm{U\rightarrow s}}} = 0} \right) + p\left( {{\bf{V}}_{{n_{\rm{s}}}}^{{\rm{s\rightarrow B}}}\left( {{n_{\rm{R}}}} \right) = 0} \right) \nonumber \\ -& \frac{p\left( {{\bf{V}}_{{n_{\rm{u}}},{n_{\rm{s}}}}^{{\rm{U\rightarrow s}}} = 0} \right)p\left( {{\bf{V}}_{{n_{\rm{s}}}}^{{\rm{s\rightarrow B}}}\left( {{n_{\rm{R}}}} \right) = 0} \right)}{N_{\rm s}}
  \end{align}} 
  \label{th3} 
\end{corollary}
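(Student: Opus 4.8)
The plan is to reduce the statement to the entrywise form of the occlusion-matrix factorization established in Section~\ref{mx}, namely ${\bf V}\left(n_{\rm R}\right)={\bf V}^{\rm U\rightarrow s}\,{\rm diag}\left({\bf V}^{\rm s\rightarrow B}\left(n_{\rm R}\right)\right)$, which, read component by component, is ${\bf V}_{n_{\rm u},n_{\rm s}}\left(n_{\rm R}\right)={\bf V}^{\rm U\rightarrow s}_{n_{\rm u},n_{\rm s}}\cdot{\bf V}^{\rm s\rightarrow B}_{n_{\rm s}}\left(n_{\rm R}\right)$. First I would note that each factor is $\left\{0,1\right\}$-valued, hence so is the product; this gives the Bernoulli claim and also identifies the event $\left\{{\bf V}_{n_{\rm u},n_{\rm s}}\left(n_{\rm R}\right)=0\right\}$ with the union $\left\{{\bf V}^{\rm U\rightarrow s}_{n_{\rm u},n_{\rm s}}=0\right\}\cup\left\{{\bf V}^{\rm s\rightarrow B}_{n_{\rm s}}\left(n_{\rm R}\right)=0\right\}$.

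Applying inclusion--exclusion to this union reduces the corollary to the single identity
\[
p\bigl({\bf V}^{\rm U\rightarrow s}_{n_{\rm u},n_{\rm s}}=0,\,{\bf V}^{\rm s\rightarrow B}_{n_{\rm s}}\left(n_{\rm R}\right)=0\bigr)=\frac{1}{N_{\rm s}}\,p\bigl({\bf V}^{\rm U\rightarrow s}_{n_{\rm u},n_{\rm s}}=0\bigr)\,p\bigl({\bf V}^{\rm s\rightarrow B}_{n_{\rm s}}\left(n_{\rm R}\right)=0\bigr),
\]
so the content of the statement is that the two elementary occlusion events, rather than being independent, are coupled by a factor $1/N_{\rm s}$. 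To obtain this I would return to the occlusion detection rule of Section~\ref{zd}: ${\bf V}^{\rm U\rightarrow s}_{n_{\rm u},n_{\rm s}}=0$ is the event that the tube of radius $l$ around the segment from user $n_{\rm u}$ to pixel $n_{\rm s}$ contains a scatterer, and ${\bf V}^{\rm s\rightarrow B}_{n_{\rm s}}\left(n_{\rm R}\right)=0$ the analogous event for the segment from pixel $n_{\rm s}$ to the BS antenna. I would then evaluate the joint probability by the same recipe used for Theorem~\ref{th1} --- integrating over the user and BS positions under the uniform pixelization and sparsity-$\lambda$ scatterer model of Section~\ref{mx} --- exploiting that the two tubes terminate at the common pixel $n_{\rm s}$ and otherwise overlap in only an $O(1)$-pixel region, so that the two events decouple up to this shared part; keeping leading-order terms in $\lambda$ and in the regime $N_{\rm s}\to\infty$ and matching the normalization conventions of (\ref{s1})--(\ref{s3}) should leave precisely the factor $1/N_{\rm s}$.

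The step I expect to be the main obstacle is this evaluation of the joint occlusion probability. Arguing heuristically that the two tubes are ``almost disjoint'', hence ``almost independent'', is easy, but extracting the exact $1/N_{\rm s}$ coupling --- as opposed to some other geometry-dependent constant --- requires being careful about the scatterer-placement model, about the bookkeeping of the shared neighborhood of pixel $n_{\rm s}$ relative to the integrals appearing in Theorem~\ref{th1}, and about showing that near-collinear configurations of user, pixel and BS, where the two tubes overlap in more than $O(1)$ pixels, contribute only a vanishing correction as $N_{\rm s}$ grows. The Bernoulli reduction and the inclusion--exclusion step, by contrast, are routine.
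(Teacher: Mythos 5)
Your reduction is exactly the route the paper takes: its entire proof of Corollary~\ref{th3} is the one--line observation that ${\bf V}\left(n_{\rm R}\right)={\bf V}^{\rm U\rightarrow s}\,{\rm diag}\left({\bf V}^{\rm s\rightarrow B}\left(n_{\rm R}\right)\right)$, so the entry is a product of two $\left\{0,1\right\}$--valued variables and the zero event is the union of the two elementary zero events, whose probabilities come from Theorem~\ref{th1}. Up to and including the inclusion--exclusion step you are therefore doing precisely what the paper does, and doing it more explicitly.

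The step you flag as the main obstacle --- deriving the joint probability $p\bigl({\bf V}^{\rm U\rightarrow s}_{n_{\rm u},n_{\rm s}}=0,\,{\bf V}^{\rm s\rightarrow B}_{n_{\rm s}}\left(n_{\rm R}\right)=0\bigr)=\tfrac{1}{N_{\rm s}}\,p_1 p_2$ --- is a genuine gap, and you should know that the paper does not close it either: its proof never mentions the $1/N_{\rm s}$ factor, and indeed the paper's own expressions (\ref{s6}) and (\ref{s7}) apply inclusion--exclusion to the analogous pair of events with a plain product cross term (i.e., assuming independence), with no $1/N_{\rm s}$. So the corollary's cross term cannot be recovered from the factorization plus Theorem~\ref{th1} alone; under the independence the paper uses elsewhere one would get $p_1+p_2-p_1p_2$. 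Your tube--overlap heuristic is a reasonable guess at where a correlation correction could come from, but as you note it yields ``some geometry-dependent coupling,'' not manifestly the exact constant $1/N_{\rm s}$, and near-collinear user--pixel--BS configurations would have to be controlled. As written, your proposal (like the paper's proof) establishes the Bernoulli claim and the union decomposition but does not establish the stated formula; completing it would require an explicit joint scatterer-placement model for the two occlusion tubes, which neither you nor the paper supplies.
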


\begin{proof}
  According to ${\bf{V}}\left( {{n_{\rm{R}}}} \right) = {{\bf{V}}^{{\rm{U\rightarrow s}}}}{\rm{diag}}\left( {{{\bf{V}}^{{\rm{s\rightarrow B}}}}\left( {{n_{\rm{R}}}} \right)} \right)$, as shown in Theorem \ref{th1}, the probability that the element in the occlusion matrix ${{\bf{V}}}_{{n_{\rm{u}}},{n_{\rm{s} }}}\left(N_{\rm R}\right)$ is 0 can be calculated by ${{\bf{V}}^{{\rm{U\rightarrow s}}}}$ and ${{{\bf{V}}^{{\rm{s\rightarrow B}}}}\left( {{n_{\rm{R}}}} \right)}$.
\end{proof}

Therefore, the unknown occlusion matrix $\bf V$ is equivalent to causing a multiplicative Bernoulli perturbation to the known free space channel matrix $\bf H$ as shown in (\ref{ww1}).
At present, some works \cite{Lavrenko,Yang} have analyzed the compressed sensing reconstruction performance under the condition that there is a specific perturbation in the compressed sensing measurement matrix.
In this paper, the proposed algorithm iteratively eliminates the perturbation to the measurement matrix caused by the occlusion effect according to the results of each iteration and obtains accurate imaging results.
The system performance analysis of this iterative detection process will be future work, and extensive simulation results have verified the effectiveness and convergence of the proposed algorithm.

\section{Numerical Results}
In this section, we simulate the performance of the algorithm. In order to reduce the computational cost and reflect the environment sensing accuracy, we set up a small-scale simulation sensing region, the environmental scatterers are distributed in the sensing region of $5\rm{m} \times 5\rm{m} \times 5\rm{m}$, and the sensing region is evenly divided into $10 \times 10 \times 10$ pixels.
Multiple BSs and users are randomly distributed in or around the sensing region.
The carrier frequency of the uplink signal is set to $30\rm{GHz}$. The positions of scatterers distributed in sensing region are random, and the scattering coefficient is set to $x_{n_{\rm{s}}} \in \left[0,1\right]$.
As shown in the Fig. \ref{figa1} and Fig. \ref{figa2}, small cubes are used to represent the position and scattering coefficient of pixels. The lower the transparency of the small cubes, the higher the scattering coefficient of the pixel.

In this section, as mentioned in Section \ref{GAMP2}, we choose GAMP and Bilinear GAMP algorithm as the baseline algorithms to solve the environment sensing problem under the occlusion effect.
Compared with the two baseline algorithms, in this paper, the proposed GAMP-MVSVR algorithm based on the occlusion effect significantly improves the accuracy of environment sensing.

  \begin{figure}[t]
    \centering
  \includegraphics[width=3.5in]{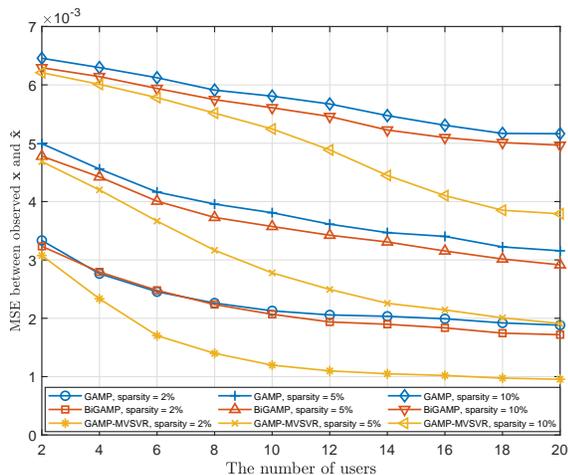}
  \caption{The relationship between the number of users and the MSE within the sensing range of the single-BS scheme.}
  \label{figa3}
  \end{figure}
  
Fig. \ref{figa1} shows the results of the single-BS environment sensing, where the gray cubes represent scatterers that are out of the sensing range due to the occlusion effect. 
The system parameters are set to the number of users $N_{\rm{u}}=20$, a single BS is deployed with a $5 \times 5$ array antenna, and the signal-to-noise ratio (SNR) $E_{\rm{b}}/N_{\rm{ 0}}=20\rm{dB}$.
Compared with the baseline algorithms, it can be seen that the proposed algorithm clearly and accurately obtains the shape of the original target.

Fig. \ref{figa2} shows the results of the multi-BS joint environment sensing.
In order to make the total number of antennas deployed in the environment the same, we reduce the number of antennas on each BS.
The system parameters are set to the number of users $N_{\rm{u}}=20$, 5 BSs are deployed with $5 \times 1$ array antennas, and the SNR $E_{\rm{b}}/N_{\rm{ 0}}=20\rm{dB}$.
Compared with the single-BS scheme, as analyzed in (\ref{s6}) and (\ref{s7}), multi-BS joint environment sensing scheme effectively increases the environment sensing range. In Fig. \ref{figa2}, there is no scatterer that cannot be sensed even though the occlusion effect is exist.
It can be seen that the sensing results of baseline algorithms are more blurred than the results of the single-BS scheme.
This is because, as analyzed in Section \ref{gzzq}, when the total number of receiving antennas is the same, multi-BS joint environment sensing has a larger environment sensing range, which causes a small decrease in the accuracy of environment sensing within the sensing range.
  \begin{figure}[t]
    \centering
    \includegraphics[width=3.5in]{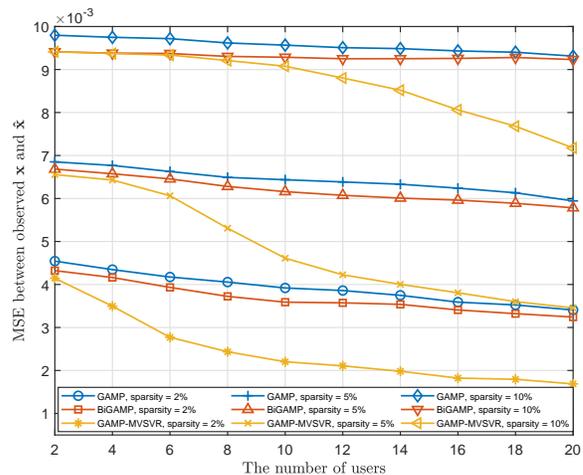}
    \caption{The relationship between the number of users and the MSE within the sensing range of the multi-BS scheme.}
    \label{figa4}
  \end{figure}
  
Fig. \ref{figa3} and Fig. \ref{figa4} show the relationship between the single-BS environment sensing performance, the multi-BS joint environment sensing performance and the number of users $N_{\rm{u}}$, respectively.
The MSE between the environmental information within the sensing range and the environmental information estimated by the algorithm reflects the accuracy of environment sensing within the sensing range.
The system parameters are set as follows: the single BS in the scenario in Fig. \ref{figa3} is deployed with a $5 \times 5$ array antenna, and 5 BSs in the scenario in Fig. \ref{figa4} are deployed with $5 \times 1$ array antennas, and the SNR $E_{\rm{b}}/N_{\rm{ 0}}=20\rm{dB}$.

\begin{remark}
 As analyzed in (\ref{s5}), as the number of users increases or the number of environmental scatterers decreases, the accuracy of environment sensing increases. Both the proposed algorithm and the baseline algorithm perform poorly when the number of users is small. But as the number of users increases, the performance of the proposed algorithm improves significantly.
\end{remark}

\begin{figure}[t]
  \centering
  \includegraphics[width=3.5in]{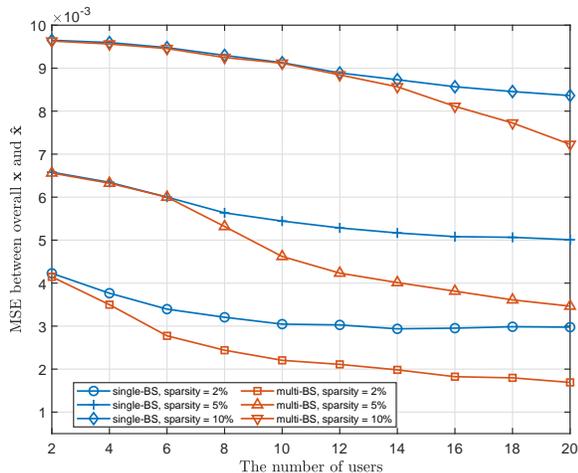}
  \caption{The relationship between the number of users and the MSE within the full range of two sensing schemes.}
  \label{figa5}
\end{figure}

  \begin{figure}[t]
    \centering
  \includegraphics[width=3.5in]{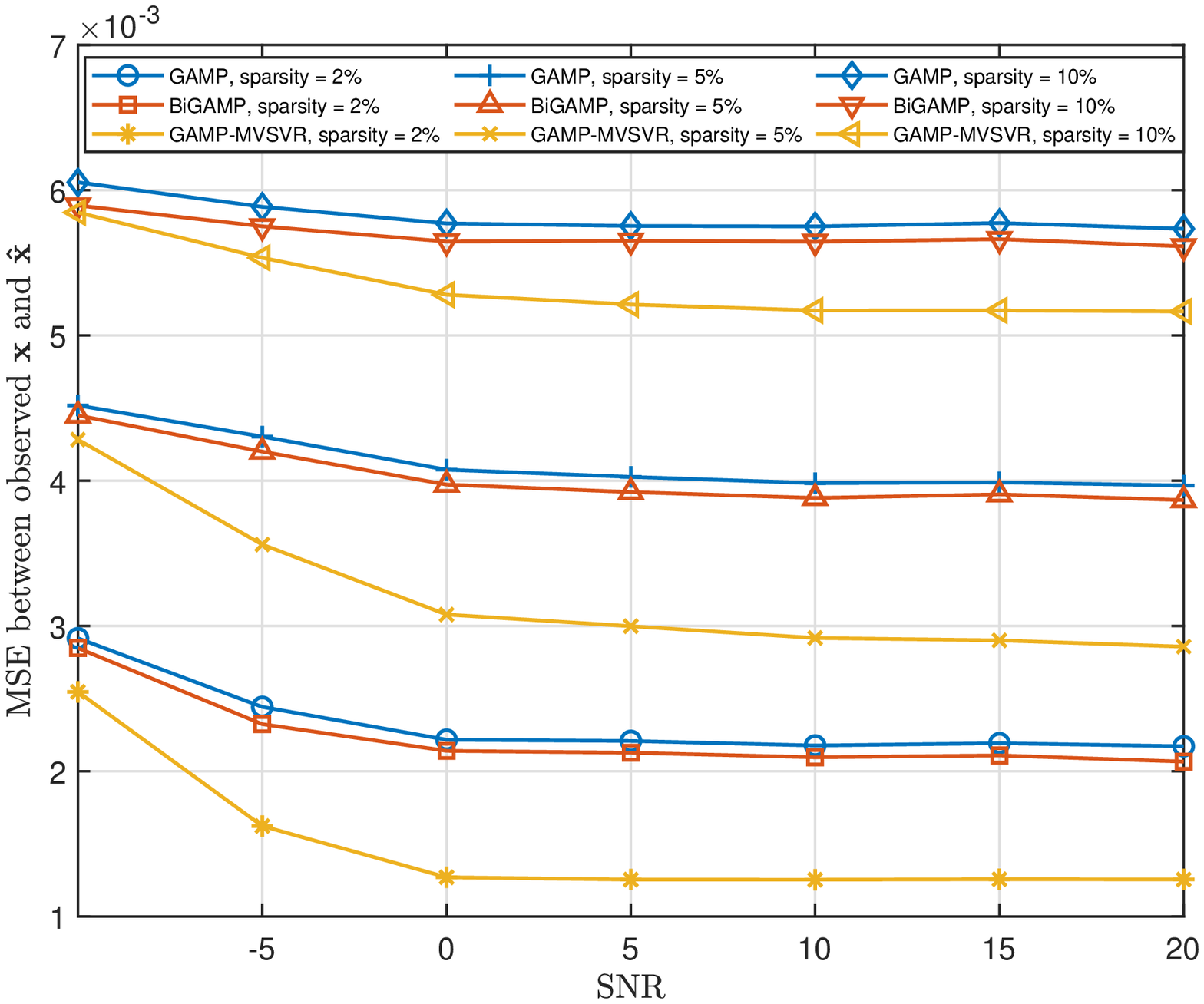}
  \caption{The relationship between the SNR and the MSE within the sensing range of the single-BS scheme.}
  \label{figa6}
  \end{figure}

  \begin{figure}[t]
    \centering
    \includegraphics[width=3.5in]{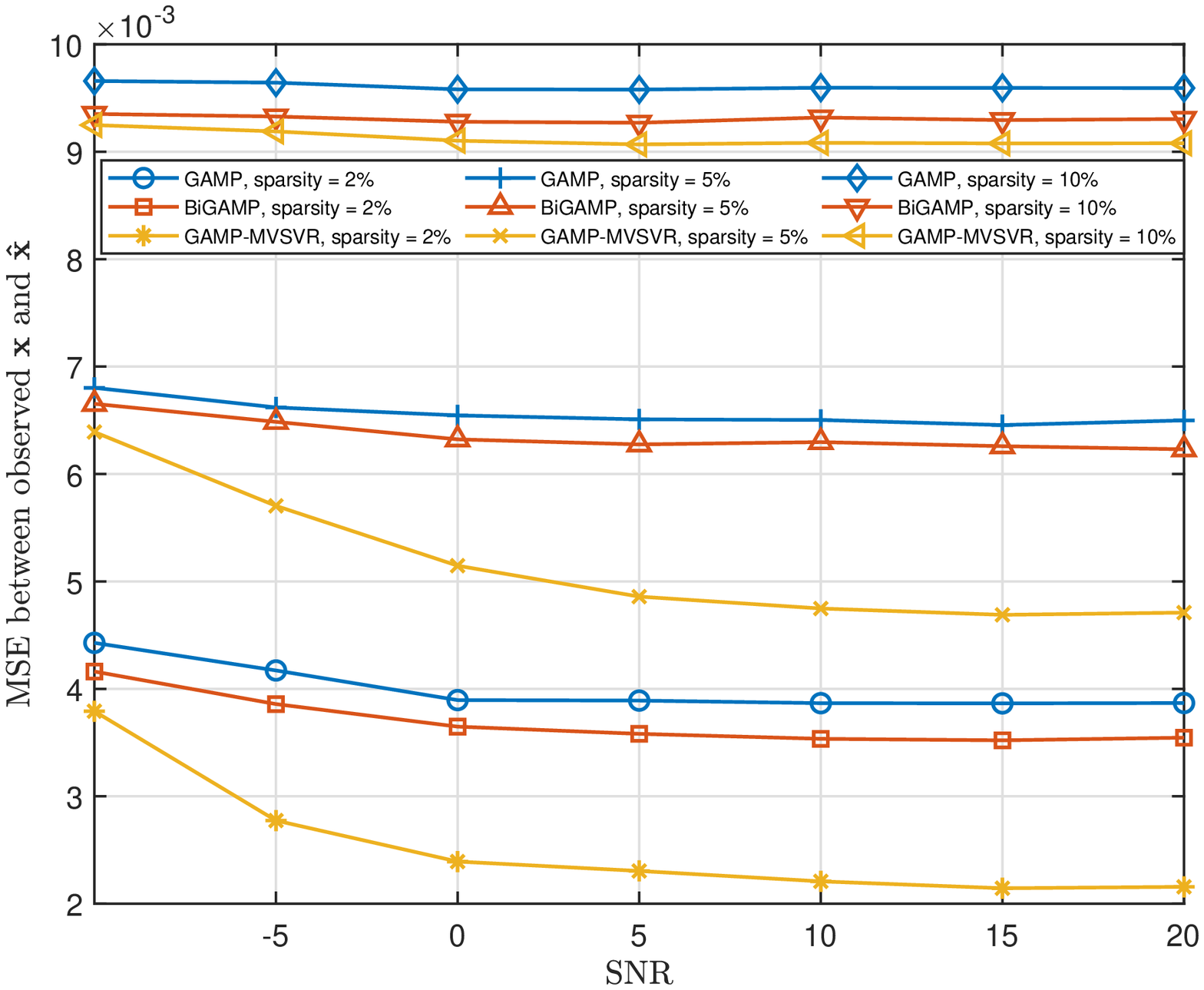}
    \caption{The relationship between the SNR and the MSE within the sensing range of the multi-BS scheme.}
    \label{figa7}
  \end{figure}

  \begin{figure}[t]
    \centering
  \includegraphics[width=3.5in]{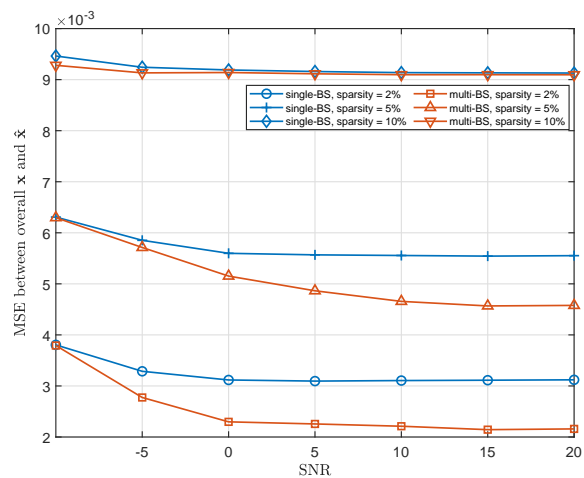}
  \caption{The relationship between the SNR and the MSE within the full range of two sensing schemes.}
  \label{figa8}
  \end{figure}
\begin{figure}[!t]
  \centering
  \includegraphics[width=3.5in]{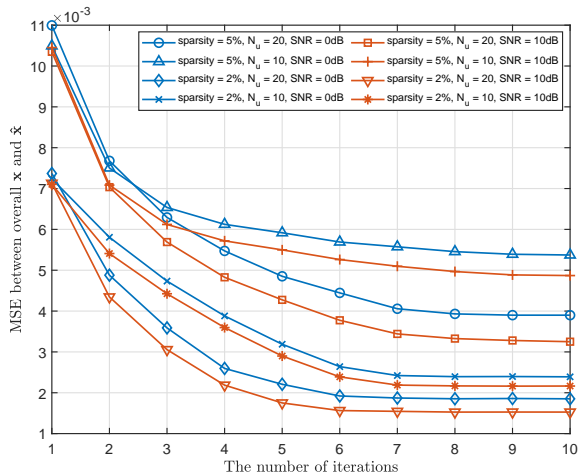}
  \caption{The convergence performance of proposed algorithm.}
  \label{figa9}
  \end{figure}

Fig. \ref{figa5} shows the relationship between the number of users and the MSE within the full range of two sensing schemes.
The MSE between the environmental information within the full range and the environmental information estimated by the algorithm reflects the size of the environment sensing range.
The system parameter settings of the two environment sensing schemes are the same as those in Fig. \ref{figa3} and Fig. \ref{figa4}, respectively.
\begin{remark}
 As analyzed in (\ref{s5}), as the number of users increases or the number of environmental scatterers decreases, the accuracy of environment sensing increases.
At the same time, as analyzed in (\ref{s6}) and (\ref{s7}), for the full range of the environment, the multi-BS scheme has a lower sensing error, so it effectively increases the environment sensing range.
\end{remark}

Fig. \ref{figa6} and Fig. \ref{figa7} show the relationship between the single-BS environment sensing performance, the multi-BS joint environment sensing performance and the SNR $E_{\rm{b}}/N_{\rm{0}}$, respectively.
The system parameters are set as follows: the single BS in the scenario in Fig. \ref{figa6} is deployed with a $5 \times 5$ array antenna, and 5 BSs in the scenario in Fig. \ref{figa7} are deployed with $5 \times 1$ array antennas, and the number of users $N_{\rm{u}}=10$.
\begin{remark}
 As the SNR increases, the estimated value ${{\bf{\hat H}}^{\rm{S}}}$ of the multipath channel is more accurate, and the accuracy of environment sensing increases. At the same time, the less sparse the environmental scatterer distribution, the closer the performance of the proposed algorithm and the baseline algorithm are. This is because the more environmental scatterers cause more occlusion, which reduces the performance of all algorithms at the same time.
\end{remark}

Fig. \ref{figa8} shows the relationship between the SNR and the MSE within the full range of two sensing schemes.
The system parameter settings of the two environment sensing schemes are the same as those in Fig. \ref{figa6} and Fig. \ref{figa7}, respectively.
\begin{remark}
 As the SNR increases, the accuracy of environment sensing increases. 
At the same time, as analyzed in (\ref{s6}) and (\ref{s7}), for the full range of the environment, the multi-BS scheme has a lower sensing error, so it effectively increases the environment sensing range.
\end{remark}

As shown in Fig. \ref{figa9}, we provide performance curves under various parameter settings to verify the convergence of the proposed algorithm. The antenna settings of the multi-base station scheme are the same as above.

\begin{remark}
 As the number of iterations increases, the sensing error MSE gradually decreases and converges. 
The sparser the environmental information, the faster the algorithm converges and the better the system performs.
Meanwhile, the increase in the number of users and the signal-to-noise ratio accelerates algorithm convergence and improves system performance.
\end{remark}

\section{Conclusion}
In this paper, we have designed a multi-view environment sensing scheme taking into full consideration the occlusion effect in an outdoor scenario.
Based on the occlusion effect, we have proposed a probabilistic reasoning model based on the factor graph. According to the AMP theory, the GAMP-MVSVR algorithm is proposed to achieve multi-view environment sensing. 
The proposed GAMP-MVSVR algorithm achieves environment sensing by iteratively estimating the scattering coefficients of the cloud points and their occlusion relationship.
In each iteration, the occlusion relationship between the cloud points of the sparse environment is recalculated according to the proposed occlusion detection rule, and in turn, used to estimate the scattering coefficients of the cloud points.
Our proposed algorithm achieves improved sensing performance with multi-BS collaboration in addition to the multi-views from the users, which will provide ideas for future ISAC system design.

\begin{appendices}
  \section{}\label{fld}
According to the factor graph shown in Fig. \ref{fig5}, the logarithmic form of the passed message is calculated as follows. In the $t$-th iteration, the message ${\mu _{\hat h_{{n_{\rm{H}}}}^{\rm{S}} \to {x_{{n_{\rm{s}}}}}}}\left( {t,{x_{{n_{\rm{s}}}}}} \right)$ passed from the factor node $p_{{{\bf{\hat H}}^{\rm{S}}}|{{\bf{ H}}^{\rm{S}}}}( {{{\bf{\hat H}}^{\rm{S}}}|{{\bf{H}}^{\rm{S}}}} )$ to the variable node $\bm{x}$ is 
{\small \begin{align}
  &{\mu _{\hat h_{{n_{\rm{H}}}}^{\rm{S}} \to {x_{{n_{\rm{s}}}}}}}\left( {t,{x_{{n_{\rm{s}}}}}} \right)\nonumber\\ &= \log \int_{\left\{ {{{\tilde h}_{{n_{\rm{H}}},k}}} \right\}_{k = 1}^{{N_{\rm{s}}}},{{\left\{ {{x_r}} \right\}}_{r \ne {n_{\rm{s}}}}}} {{p_{\hat h_{{n_{\rm{H}}}}^{\rm{S}}|h_{{n_{\rm{H}}}}^{\rm{S}}}}\left( {\hat h_{{n_{\rm{H}}}}^{\rm{S}}\left| {\sum\limits_{k = 1}^{N_{\rm{s}}} {{{\tilde h}_{{n_{\rm{H}}},k}}{x_k}} } \right.} \right)} \nonumber\\ & \times \prod\limits_{r \ne {n_{\rm{s}}}} {\exp \left( {{\mu _{{x_r} \to \hat h_{{n_{\rm{H}}}}^{\rm{S}}}}\left( {t,{x_r}} \right)} \right)}\nonumber\\ & \times \prod\limits_{k = 1}^{{N_{\rm{s}}}} {\exp \left( {{\mu _{{{\tilde h}_{{n_{\rm{H}}},k}} \to \hat h_{{n_{\rm{H}}}}^{\rm{S}}}}\left( {t,{{\tilde h}_{{n_{\rm{H}}},k}}} \right)} \right)}  + c, \label{a6} 
  \end{align}}
where the passed message $\mu$ is expressed in logarithmic form. $c$ is a normalization constant, so that the integral of the exponential passed message $\exp ({\mu _{\hat h_{{n_{\rm{H}}}}^{\rm{S}} \to {x_{{n_{\rm{s}}}}}}}) $ is 1. In the subsequent passed message and marginal distribution in this section, the constant $c$ has the same meaning.
In the $(t+1)$-th iteration, the message ${{\mu _{{x_r} \to \hat h_{{n_{\rm{H}}}}^{\rm{S}}}}\left( {t+1,{x_r}} \right)}$ passed from the variable node $\bm{x}$ to the factor node $p_{{{\bf{\hat H}}^{\rm{S}}}|{{\bf{ H}}^{\rm{S}}}}( {{{\bf{\hat H}}^{\rm{S}}}|{{\bf{H}}^{\rm{S}}}} )$ is 
\begin{equation}\small
  \begin{split}
  {\mu _{{x_{{n_{\rm{s}}}}} \to \hat h_{{n_{\rm{H}}}}^{\rm{S}}}} \left( {t + 1,{x_{{n_{\rm{s}}}}}} \right)&\\ = \log {p_x}\left( {{x_{{n_{\rm{s}}}}}} \right) &+ \sum\limits_{k \ne {n_{\rm{H}}}} {{\mu _{\hat h_k^{\rm{S}} \to {x_{{n_{\rm{s}}}}}}}\left( {t,{x_{{n_{\rm{s}}}}}} \right)}  + c. \label{a7} 
  \end{split}
\end{equation}

Let the mean and variance of the exponential form of the message passed in (\ref{a7}) be ${\bar x_{{n_{\rm{s}}},{n_{\rm{H}}}}}\left( {t + 1} \right)$ and $\sigma _{{n_{\rm{s}}},{n_{\rm{H}}}}^{\rm{x}}\left( {t + 1} \right)$, respectively.
In the $t$-th iteration, the message ${\mu _{\hat h_{{n_{\rm{H}}}}^{\rm{S}} \to {{\tilde h}_{{n_{\rm{H}}},{n_{\rm{s}}}}}}}( {t,{{\tilde h}_{{n_{\rm{H}}},{n_{\rm{s}}}}}} )$ passed from the factor node $p_{{{\bf{\hat H}}^{\rm{S}}}|{{\bf{ H}}^{\rm{S}}}}( {{{\bf{\hat H}}^{\rm{S}}}|{{\bf{H}}^{\rm{S}}}} )$ to the variable node $\bf{\tilde H}$ is expressed as 
  {\small \begin{align}
      &{\mu _{\hat h_{{n_{\rm{H}}}}^{\rm{S}} \to {{\tilde h}_{{n_{\rm{H}}},{n_{\rm{s}}}}}}}\left( {t,{{\tilde h}_{{n_{\rm{H}}},{n_{\rm{s}}}}}} \right)\nonumber\\ & = \log \int_{{{\left\{ {{{\tilde h}_{{n_{\rm{H}}},r}}} \right\}}_{r \ne {n_{\rm{s}}}}},\left\{ {{x_k}} \right\}_{k = 1}^{{N_{\rm{s}}}}} {{p_{\hat h_{{n_{\rm{H}}}}^{\rm{S}}|h_{{n_{\rm{H}}}}^{\rm{S}}}}\left( {\hat h_{{n_{\rm{H}}}}^{\rm{S}}\left| {\sum\limits_{k = 1}^{{N_{\rm{s}}}} {{{\tilde h}_{{n_{\rm{H}}},k}}{x_k}} } \right.} \right)} \nonumber\\
      &\times \prod\limits_{k = 1}^{{N_{\rm{s}}}} {\exp \left( {{\mu _{{x_k} \to \hat h_{{n_{\rm{H}}}}^{\rm{S}}}}\left( {t,{x_k}} \right)} \right)} \nonumber\\ &\times \prod\limits_{r \ne {n_{\rm{s}}}} {\exp \left( {{\mu _{{{\tilde h}_{{n_{\rm{H}}},r}} \to \hat h_{{n_{\rm{H}}}}^{\rm{S}}}}\left( {t,{{\tilde h}_{{n_{\rm{H}}},r}}} \right)} \right)}  + c. \label{a8} 
      \end{align}}

In the $(t+1)$-th iteration, similar to \eqref{a7}, the message ${\mu _{{{\tilde h}_{{n_{\rm{H}}},{n_{\rm{s}}}}} \to \hat h_{{n_{\rm{H}}}}^{\rm{S}}}}( {t + 1,{{\tilde h}_{{n_{\rm{H}}},{n_{\rm{s}}}}}} )$ passed from the variable node $\bf{\tilde H}$ to the factor node $p_{{{\bf{\hat H}}^{\rm{S}}}|{{\bf{ H}}^{\rm{S}}}}( {{{\bf{\hat H}}^{\rm{S}}}|{{\bf{H}}^{\rm{S}}}} )$ is expressed as 
  {\small \begin{align}
  &{\mu _{{{\tilde h}_{{n_{\rm{H}}},{n_{\rm{s}}}}} \to \hat h_{{n_{\rm{H}}}}^{\rm{S}}}}\left( {t + 1,{{\tilde h}_{{n_{\rm{H}}},{n_{\rm{s}}}}}} \right) = \log {p_{{{\tilde h}_{{n_{\rm{H}}},{n_{\rm{s}}}}}|{\bm{x}}}}\left( {{{\tilde h}_{{n_{\rm{H}}},{n_{\rm{s}}}}}|{\bm{x}}} \right) \nonumber\\ &+ \sum\limits_{k \ne {n_{\rm{H}}}} {{\mu _{\hat h_k^{\rm{S}} \to {{\tilde h}_{{n_{\rm{H}}},{n_{\rm{s}}}}}}}\left( {t,{{\tilde h}_{{n_{\rm{H}}},{n_{\rm{s}}}}}} \right)}  + c. \label{a9} 
  \end{align}}

Let the mean and variance of the exponential form of the message passed in (\ref{a9}) be ${\bar h_{{n_{\rm{H}}}{n_{\rm{s}}},{n_{\rm{H}}}}}\left( {t + 1} \right)$ and $\sigma _{{n_{\rm{H}}}{n_{\rm{s}}},{n_{\rm{H}}}}^{\rm{h}}\left( {t + 1} \right)$, respectively.

According to the factor graph in Fig. \ref{fig5}, the logarithmic marginal distribution of the variable $\bf{\tilde H}$ and $\bm{x}$ are calculated as follows. With the iteration of the algorithm, the marginal distributions of variables gradually converge to their MMSE estimates. The marginal distribution of the variable $\bm{x}$ is  
\begin{equation}\small
  \begin{split}
  {\mu _{{x_{{n_{\rm{s}}}}}}}\left( {t + 1,{x_{{n_{\rm{s}}}}}} \right)& \\= \log {p_x}&\left( {{x_{{n_{\rm{s}}}}}} \right) + \sum\limits_{k = 1}^{{N_{\rm{H}}}} {{\mu _{\hat h_k^{\rm{S}} \to {x_{{n_{\rm{s}}}}}}}\left( {t,{x_{{n_{\rm{s}}}}}} \right)}  + c. \label{a10} 
  \end{split}
  \end{equation}
  
Let the mean and variance of the exponential form of the marginal distribution in (\ref{a10}) be ${\hat x_{{n_{\rm{s}}}}}\left( {t + 1} \right)$ and $\sigma _{{n_{\rm{s}}}}^{\rm{x}}\left( {t + 1} \right)$, respectively. The marginal distribution of the variable $\bf{\tilde H}$ is  
\begin{equation}\small
  \begin{split}
  &{\mu _{{{\tilde h}_{{n_{\rm{H}}},{n_{\rm{s}}}}}}}\left( {t + 1,{{\tilde h}_{{n_{\rm{H}}},{n_{\rm{s}}}}}} \right) = \log {p_{{{\tilde h}_{{n_{\rm{H}}},{n_{\rm{s}}}}}|{\bm{x}}}}\left( {{{\tilde h}_{{n_{\rm{H}}},{n_{\rm{s}}}}}|{\bm{x}}} \right) \\ &+ \sum\limits_{k = 1}^{{N_{\rm{H}}}} {{\mu _{\hat h_k^{\rm{S}} \to {{\tilde h}_{{n_{\rm{H}}},{n_{\rm{s}}}}}}}\left( {t,{{\tilde h}_{{n_{\rm{H}}},{n_{\rm{s}}}}}} \right)}  + c. \label{a11} 
  \end{split}
  \end{equation}
  
Let the mean and variance of the exponential form of the marginal distribution in (\ref{a11}) be ${\hat h_{{n_{\rm{H}}},{n_{\rm{s}}}}}\left( {t + 1} \right)$ and $\sigma _{{n_{\rm{H}}},{n_{\rm{s}}}}^{\rm{h}}\left( {t + 1} \right)$, respectively.

  \section{}\label{flb}
Here we derive the approximate result (\ref{a17}).
The Roman font is used to represent the elements ${{\rm{h}}_{n_{\rm{H}}}^{\rm{S}}}$, $\rm{\tilde h}_{{n_{\rm{H}}},{n_{\rm{s}}}}$, and $\rm{x}_{n_{\rm{s}}}$ in the matrix ${{\bf{H}}^{\rm{S}}}$, $\bf{\tilde H}$, and $\bm{x}$ in the case of CLT.
Let $\Delta {h_{{n_{\rm{H}}}{n_{\rm{s}}},{n_{\rm{H}}}}} = { \rm{\tilde h}_{{n_{\rm{H}}},{n_{\rm{s}}}}} - {\bar h_{{n_{\rm{H}}}{n_{\rm{s}}},{n_{\rm{H}}}}}\left( t \right)$ and $\Delta {x_{{n_{\rm{s}}},{n_{\rm{H}}}}} = {{\rm{x}}_{{n_{\rm{s}}}}} - {\bar x_{{n_{\rm{s}}},{n_{\rm{H}}}}}\left( t \right)$, then the mean is $0$, and the variance is $\sigma _{{n_{\rm{H}}}{n_{\rm{s}}},{n_{\rm{H}}}}^{\rm{h}}\left( {t} \right)$ and $\sigma _{{n_{\rm{s}}}}^{\rm{x}}\left( {t} \right)$.

In the $t$ iteration, the sparsity $\lambda$ in the prior distribution of environmental information ${p_{\bm{x}}}\left( {\bm{x}} \right)$ remains constant. As mentioned in Section \ref{yz2bl}, we consider that variable $\bm{x}$ is independent of variable $\bf{\tilde H}$ in the $t$-th iteration.
Therefore, the mean ${\hat h_{{n_{\rm{H}}},{n_{\rm{s}}}}}$ and the variance $\sigma _{{n_{\rm{H}}},{n_{\rm{s}}}}^{\rm{h}}$ of variable $\bf{\tilde H}$ are independent of variable $\bm{x}$.
${{\bf{H}}^{\rm{S}}}$ is expressed as
  {\small \begin{align}
      \rm{h}_{{n_{\rm{H}}}}^{\rm{S}}& = \sum\limits_{k = 1}^N {{{{\rm{\tilde h}}}_{{n_{\rm{H}}},{n_{\rm{s}}}}}{{\rm{x}}_{{n_{\rm{s}}}}}} \nonumber \\ &= \left( {\Delta {{\hat h}_{{n_{\rm{H}}}{n_{\rm{s}}},{n_{\rm{H}}}}} + {{\bar h}_{{n_{\rm{H}}}{n_{\rm{s}}},{n_{\rm{H}}}}}\left( t \right)} \right){{\rm{x}}_{{n_{\rm{s}}}}} \nonumber\\
      &  + \sum\limits_{k \ne {n_{\rm{s}}}} \left( {{{\bar h}_{{n_{\rm{H}}}k,{n_{\rm{H}}}}}\left( t \right){{\bar x}_{k,{n_{\rm{H}}}}}\left( t \right) + {{\bar h}_{{n_{\rm{H}}}k,{n_{\rm{H}}}}}\left( t \right)\Delta {x_{k,{n_{\rm{H}}}}}}\right. \nonumber\\ &+ \left.{\Delta {{ h}_{{n_{\rm{H}}}k,{n_{\rm{H}}}}}{{\bar x}_{k,{n_{\rm{H}}}}}\left( t \right) + \Delta {{ h}_{{n_{\rm{H}}}k,{n_{\rm{H}}}}}\Delta {x_{k,{n_{\rm{H}}}}}} \right).\label{a12}
      \end{align}}
      
Based on the CLT condition, ${{\rm{x}}_{{n_{\rm{s}}}}} = {x_{{n_{\rm{s}}}}}$, the mean and variance of $\rm{h}_{{n_{\rm{H}}}}^{\rm{S}}$ in (\ref{a12}) is written as
\begin{equation}\small
  {\mathbb{E}}\left\{ {{\rm{h}}_{{n_{\rm{H}}}}^{\rm{S}}} \right\} = {\bar h_{{n_{\rm{H}}}{n_{\rm{s}}},{n_{\rm{H}}}}}\left( t \right){x_{{n_{\rm{s}}}}} + {\bar p_{{n_{\rm{s}}},{n_{\rm{H}}}}}\left( t \right),\label{a13}
  \end{equation}
  \begin{equation}\small
  {\mathbb{D}}\left\{ {{\rm{h}}_{{n_{\rm{H}}}}^{\rm{S}}} \right\} = \sigma _{{n_{\rm{H}}}{n_{\rm{s}}},{n_{\rm{H}}}}^{\rm{h}}\left( t \right)x_{{n_{\rm{s}}}}^2 + \sigma _{{n_{\rm{s}}},{n_{\rm{H}}}}^{\rm{p}}\left( t \right),\label{a14}
  \end{equation}
where
\begin{equation}\small
  {\bar p_{{n_{\rm{s}}},{n_{\rm{H}}}}}\left( t \right) = \sum\limits_{k \ne {n_{\rm{s}}}} {{{\bar h}_{{n_{\rm{H}}}k,{n_{\rm{H}}}}}\left( t \right){{\bar x}_{k,{n_{\rm{H}}}}}\left( t \right)},\label{a15}
  \end{equation}
 \begin{equation}\small
  \begin{split}
  &\sigma _{{n_{\rm{s}}},{n_{\rm{H}}}}^{\rm{p}}\left( t \right)  = \sum\limits_{k \ne {n_{\rm{s}}}} {\left( {\bar h_{{n_{\rm{H}}}k,{n_{\rm{H}}}}^2\left( t \right)\sigma _{k,{n_{\rm{H}}}}^{\rm{x}}\left( t \right) }\right.}\\ & {\left.{+ \sigma _{{n_{\rm{H}}}k,{n_{\rm{H}}}}^{\rm{h}}\left( t \right)\bar x_{k,{n_{\rm{H}}}}^2\left( t \right) + \sigma _{{n_{\rm{H}}}k,{n_{\rm{H}}}}^{\rm{h}}\left( t \right)\sigma _{k,{n_{\rm{H}}}}^{\rm{x}}\left( t \right)} \right)}.\label{a16}
  \end{split}
  \end{equation}

Under CLT conditions, we approximate the last two product terms of message ${\mu _{\hat h_{{n_{\rm{H}}}}^{\rm{S}} \to {x_{{n_{\rm{s}}}}}}}\left( {t,{x_{{n_{\rm{s}}}}}} \right)$ in (\ref{a6}) to the marginal distribution of variable $\rm{h}_{{n_{\rm{H}}}}^{\rm{S}}$ in (\ref{a12}).
Therefore, according to the conditional Gaussian approximation, we obtain (\ref{a17}).

\section{}\label{flc}
Here we derive the approximate result (\ref{a26}).
Pluging (\ref{a19}) and (\ref{a20}) into (\ref{a17}) we obtain
  {\small \begin{align}
      & {F_{{n_{\rm{H}}}}}\left( {{{\bar h}_{{n_{\rm{H}}}{n_{\rm{s}}},{n_{\rm{H}}}}}\left( t \right){x_{{n_{\rm{s}}}}} + {{\bar p}_{{n_{\rm{s}}},{n_{\rm{H}}}}}\left( t \right),}\right. \nonumber\\ &\quad \left.{ \sigma _{{n_{\rm{H}}}{n_{\rm{s}}},{n_{\rm{H}}}}^{\rm{h}}\left( t \right)x_{{n_{\rm{s}}}}^2 + \sigma _{{n_{\rm{s}}},{n_{\rm{H}}}}^{\rm{p}}\left( t \right);\hat h_{{n_{\rm{H}}}}^{\rm{S}}} \right) + c\nonumber\\ &
       \approx {F_{{n_{\rm{H}}}}}\left( {{{\bar h}_{{n_{\rm{H}}}{n_{\rm{s}}},{n_{\rm{H}}}}}\left( t \right)\left( {{x_{{n_{\rm{s}}}}} - {{\hat x}_{{n_{\rm{s}}}}}\left( t \right)} \right) + {{\bar p}_{{n_{\rm{H}}}}}\left( t \right),} \right.\nonumber\\ & \quad
      \left. {\sigma _{{n_{\rm{H}}}{n_{\rm{s}}},{n_{\rm{H}}}}^{\rm{h}}\left( t \right)\left( {x_{{n_{\rm{s}}}}^2 - \hat x_{{n_{\rm{s}}}}^2\left( t \right)} \right) + \sigma _{{n_{\rm{H}}}}^{\rm{p}}\left( t \right);\hat h_{{n_{\rm{H}}}}^{\rm{S}}} \right) + c,\label{a21}
      \end{align}}
where, since $N_{\rm{s}}\rightarrow \infty $, we drop infinitesimal terms such as ${{\bar h}_{{n_{\rm{H}}}{n_{\rm{s}}},{n_{\rm{H}}}}}\left( t \right)\left( {{{\hat x}_{{n_{\rm{s}}}}}\left( t \right) - {{\bar x}_{{n_{\rm{s}}},{n_{\rm{H}}}}}\left( t \right)} \right)$ and so on.

As with most AMP algorithms \cite{Donoho2,Rangan,Jason}, we make the same derivation. We perform Taylor series expansion of variable ${x_{{n_{\rm{s}}}}}$ in (\ref{a21}) at point ${{{\hat x}_{{n_{\rm{s}}}}}\left( t \right)}$, the passed message in (\ref{a17}) is approximated as
  {\small \begin{align}
      &{\mu _{\hat h_{{n_{\rm{H}}}}^{\rm{S}} \to {x_{{n_{\rm{s}}}}}}}\left( {t,{x_{{n_{\rm{s}}}}}} \right) \approx c + {F_{{n_{\rm{H}}}}}\left( {{{\bar p}_{{n_{\rm{H}}}}}\left( t \right),\sigma _{{n_{\rm{H}}}}^{\rm{p}}\left( t \right);\hat h_{{n_{\rm{H}}}}^{\rm{S}}} \right) \nonumber\\
      &+ {{\bar h}_{{n_{\rm{H}}}{n_{\rm{s}}},{n_{\rm{H}}}}}\left( t \right)\left( {{x_{{n_{\rm{s}}}}} - {{\hat x}_{{n_{\rm{s}}}}}\left( t \right)} \right) \nonumber\\
      &\quad \quad \quad \quad \times{{F'}_{{n_{\rm{H}}},1}}\left( {{{\bar p}_{{n_{\rm{H}}}}}\left( t \right),\sigma _{{n_{\rm{H}}}}^{\rm{p}}\left( t \right);\hat h_{{n_{\rm{H}}}}^{\rm{S}}} \right)\nonumber\\
      &+ 2\sigma _{{n_{\rm{H}}}{n_{\rm{s}}},{n_{\rm{H}}}}^{\rm{h}}\left( t \right){{\hat x}_{{n_{\rm{s}}}}}\left( t \right)\left( {{x_{{n_{\rm{s}}}}} - {{\hat x}_{{n_{\rm{s}}}}}\left( t \right)} \right)\nonumber\\
      &\quad \quad \quad \quad \times{{F'}_{{n_{\rm{H}}},2}}\left( {{{\bar p}_{{n_{\rm{H}}}}}\left( t \right),\sigma _{{n_{\rm{H}}}}^{\rm{p}}\left( t \right);\hat h_{{n_{\rm{H}}}}^{\rm{S}}} \right)\nonumber\\
      & + \frac{1}{2}\bar h_{{n_{\rm{H}}}{n_{\rm{s}}},{n_{\rm{H}}}}^2\left( t \right){\left( {{x_{{n_{\rm{s}}}}} - {{\hat x}_{{n_{\rm{s}}}}}\left( t \right)} \right)^2}\nonumber\\
      &\quad \quad \quad \quad  \times{{F''}_{{n_{\rm{H}}},1}}\left( {{{\bar p}_{{n_{\rm{H}}}}}\left( t \right),\sigma _{{n_{\rm{H}}}}^{\rm{p}}\left( t \right);\hat h_{{n_{\rm{H}}}}^{\rm{S}}} \right)\nonumber\\
      &  + \sigma _{{n_{\rm{H}}}{n_{\rm{s}}},{n_{\rm{H}}}}^{\rm{h}}\left( t \right){\left( {{x_{{n_{\rm{s}}}}} - {{\hat x}_{{n_{\rm{s}}}}}\left( t \right)} \right)^2}\nonumber\\
      &\quad \quad \quad \quad \times{{F'}_{{n_{\rm{H}}},2}}\left( {{{\bar p}_{{n_{\rm{H}}}}}\left( t \right),\sigma _{{n_{\rm{H}}}}^{\rm{p}}\left( t \right);\hat h_{{n_{\rm{H}}}}^{\rm{S}}} \right).\label{a22}
      \end{align}}
where ${F'_{{n_{\rm{H}}},1}}$ and ${F''_{{n_{\rm{H}}},1}}$ represent the first and second derivatives of the first argument of ${F_{{n_{\rm{H}}}}}$, and ${F'_{{n_{\rm{H}}},2}}$ represents the first derivative of the second argument of ${F_{{n_{\rm{H}}}}}$.
We drop infinitesimal terms such as the second derivative term of the second argument of ${F_{{n_{\rm{H}}}}}$.

According to the derivation in Appendix \ref{fla} that
\begin{equation}\small
  {F'_{n_{\rm{H}},2}} = \frac{1}{2}\left[ {{{\left( {{F'_{n_{\rm{H}},1}}} \right)}^2} + {F''_{n_{\rm{H}},1}}} \right].\label{a25}
  \end{equation}

We plug (\ref{a23}), (\ref{a24}) and (\ref{a25}) into (\ref{a22}), and absorb the constant term into the constant $c$, we obtain (\ref{a26}), 
where under the CLT condition, we use ${\hat h_{{n_{\rm{H}}},{n_{\rm{s}}}}^2\left( t \right)}$ instead of $\bar h_{{n_{\rm{H}}}{n_{\rm{s}}},{n_{\rm{H}}}}^2\left( t \right)$, and ${\sigma _{{n_{\rm{s}}},{n_{\rm{H}}}}^{\rm{h}}\left( t \right)}$ instead of $\sigma _{{n_{\rm{H}}}{n_{\rm{s}}},{n_{\rm{H}}}}^{\rm{h}}\left( t \right)$.

\section{}\label{fla}
Here we derive the equation (\ref{a25}).
According to (\ref{a18}), we know the following
\begin{equation}\small
  \frac{{\partial \exp \left( {{F_{{n_{\rm{H}}}}}} \right)}}{{\partial {\sigma ^p}}} = {F'_{{n_{\rm{H}}},2}}\exp \left( {{F_{{n_{\rm{H}}}}}} \right),\label{fl3}
  \end{equation}
\begin{equation}\small
  \frac{{{\partial ^2}\exp \left( {{F_{{n_{\rm{H}}}}}} \right)}}{{\partial {p^2}}} = \left[ {{{\left( {{F'_{{n_{\rm{H}}},1}}} \right)}^2} + {F''_{{n_{\rm{H}}},1}}} \right]\exp \left( {{F_{{n_{\rm{H}}}}}} \right).\label{fl4}
  \end{equation}

According to (\ref{a18}), we obtain
{\small \begin{align}
    & \frac{{\partial \exp \left( {{F_{{n_{\rm{H}}}}}} \right)}}{{\partial {\sigma ^{\rm{p}}}}} \nonumber\\
      & = \int_{{h^{\rm{S}}}} {{p_{{{\hat h}^{\rm{S}}}|{h^{\rm{S}}}}}\left( {{{\hat h}^{\rm{S}}}|{h^{\rm{S}}}} \right)\frac{1}{{\sqrt {2\pi } }}\left[ { - \frac{1}{2}{{\left( {{\sigma ^{\rm{p}}}} \right)}^{{\rm{ - 3/2}}}} }\right.}\nonumber\\ & {\left.{- {{\left( {{\sigma ^{\rm{p}}}} \right)}^{{\rm{ - 5/2}}}}\frac{{{{\left( {{h^{\rm{S}}} - p} \right)}^2}}}{2}} \right]\exp \left( {\frac{{{{\left( {{h^{\rm{S}}} - p} \right)}^2}}}{{2{\sigma ^{\rm{p}}}}}} \right)} \nonumber\\
      & = \frac{1}{2}\frac{{{\partial ^2}\exp \left( {{F_{{n_{\rm{H}}}}}} \right)}}{{\partial {p^2}}}
    .\label{fl2}
  \end{align}}

Plugging (\ref{fl3}) and (\ref{fl4}) into (\ref{fl2}), we obtain (\ref{a25}).

\end{appendices}

\ifCLASSOPTIONcaptionsoff
  \newpage
\fi

\bibliographystyle{IEEEbib}
\bibliography{ref_short}

\end{document}